\def\BState{\State\hskip-\ALG@thistlm}
\definecolor{Gray}{gray}{0.9}
\def\0{\mathbf{0}}
\def\eps{\varepsilon}
\def\lam{\lambda}
\def\rr{\rightarrow}
\def \< {\langle}
\def \> {\rangle}
\def\erf{\text{erf}}
\def\beqa{\begin{eqnarray}}
\def\eeqa{\end{eqnarray}}
\def\beqas{\begin{eqnarray*}}
\def\eeqas{\end{eqnarray*}}
\newtheorem{theorem}{Theorem}[section]
\newtheorem{lemma}[theorem]{Lemma}
\newtheorem{proposition}[theorem]{Proposition}
\newtheorem{conjecture}[theorem]{Conjecture}
\newtheorem{corollary}[theorem]{Corollary}
\newtheorem{remark}[theorem]{Remark}
\newtheorem{definition}[theorem]{Definition}
\numberwithin{equation}{section}
\newcommand{\hatd}[1]{{}}
\newcommand{\bd}{\begin{displaymath}}
\newcommand{\ed}{\end{displaymath}}
\newcommand{\be}{\begin{equation}}
\newcommand{\ee}{\end{equation}}
\newcommand{\bq}{\begin{eqnarray}}
\newcommand{\eq}{\end{eqnarray}}
\newcommand{\bn}{\begin{eqnarray*}}
\newcommand{\en}{\end{eqnarray*}}
\newcommand{\re}{\mathds{R}}
\def \E {{\mathbb{E}}}
\DeclareMathOperator*{\argmin}{arg\,min} 
\newcommand{\sign}{\textrm{sign}}
\title{Phase Transitions in Kyle's Model with Market Maker Profit Incentives}
\author{Charles-Albert Lehalle\thanks{Capital Fund Management and CFM-Imperial College Institute}, Eyal Neuman\thanks{Department of Mathematics, Imperial College London and CFM-Imperial College Institute}, Segev Shlomov\thanks{Faculty of Industrial Engineering and Management, Technion - Israel Institute of Technology}}
\begin{document}

 \vspace{-0.5cm}
\maketitle

\begin{abstract}
We consider a stochastic game between three types of players: an inside trader, noise traders and a market maker. In a similar fashion to Kyle's model, we assume that the insider first chooses the size of her market-order and then the market maker determines the price by observing the total order-flow resulting from the insider and the noise traders transactions. In addition to the classical framework, a revenue term is added to the market maker's performance function, which is proportional to the order flow and to the size of the bid-ask spread. We derive the maximizer for the insider's revenue function and prove sufficient conditions for an equilibrium in the game. Then, we use neural networks methods to verify that this equilibrium holds. We show that the equilibrium state in this model experience interesting phase transitions, as the weight of the revenue term in the market maker's performance function changes. Specifically, the asset price in equilibrium experience three different phases: a linear pricing rule without a spread, a pricing rule that includes a linear mid-price and a bid-ask spread, and a \emph{metastable} state with a zero mid-price and a large spread.

\end{abstract}

\begin{description}
%\item[Mathematics Subject Classification (2010):]  
%\item[JEL Classification:]  
\item[Keywords:] market making, price impact, market equilibrium, metastability,
  Kyle's model, market liquidity, market microstructure, neural networks  
\end{description}

\bigskip
%%%%%%%%%%%% TO DO

\section{Introduction}

%\textcolor{red}{CAL: Do we want to use the term ``revenue'' or ``revenue{\bf s}''? I am in favor of ``revenue'', but with no strong conviction...\\[1em]}

The concepts of market liquidity, price impact, information asymmetry and adverse selection have always been at the center of market microstructure research. The seminal paper by Kyle \cite{kyle85} made connections between all these concepts in a simple and tractable framework and became a corner stone in the literature of this field. In his model, Kyle described a game between three types of players: an inside trader (insider), noise traders and a market maker. A risky asset is traded over one period, where the insider has an exclusive information on the price position at the end of the term. This is often referred to as the fundamental price. Based on this information, the insider decides on the size of her market-order. At the same time the noise traders also submit their market-orders without any information about the price dynamics, so the total size of their orders is modelled as a centred random variable. The sum of all these orders, which is the order-flow, is then revealed to the market maker without the possibility to disentangle its components. Based on this observation, the market maker decides on the mid-price of the asset, and clears the orders. The presence of noise traders helps the insider to obscure her position from the marker maker. Under Gaussian assumptions on the distribution of the fundamental price and the noise traders orders, Kyle proved that this game has an equilibrium, in which the insider's strategy is linear with respect to the fundamental price, and the market maker's pricing rule is linear with respect to the orderflow. Kyle also extended the proof to a multi-period version of this model.  

The simple setting of Kyle's model reveals some fundamental connections between key concepts in market microstructure. In equilibrium the insider adjusts the order size according to the fundamental price, while taking into account the price impact of her order. Moreover, dependence in the parameters of the distribution of the fundamental price, is reflected in the market maker's pricing. This provides insights on the effect of asymmetric information, and more specifically adverse selection, in pricing strategies. %The solution to this model also gives quantitative connections between various  factors that affect market liquidity.    

Numerous extensions to Kyle's model were studied, we briefly survey just a few of them. 
Subrahmanyam \cite{Subrahmanyam:1991aa} studied an extension to Kyle's model where both the informed trader and market maker are risk averse. Nishide \cite{Nishide2006InsiderTW} investigated a version of the model with competing market makers. Boulatov and Bernhardt \cite{Boulatov:2015aa} considered the robustness of the linear Kyle equilibrium with respect to small perturbations in the payoffs of the agents. Molino et al. \cite{Garcia-del-Molino:2020aa} studied the case where the market maker is setting the price of $n$ correlated securities. A neural networks approach to Kyle's single period model was developed by Friedrich and Teichmann \cite{Teichmann20}. They showed that the agents strategies converges to the linear equilibrium, which was proved for the Gaussian case by Kyle, also for various types of fundamental price distributions. A continuous time version of Kyle's model was first proposed by Back \cite{Back:1992aa}. Collin-Dufresne and Fos \cite{Collin16} extended Back's work to the case where the liquidity provided by noise traders follows a general stochastic process. Significant amount of work on the mathematical foundations of the continuous Kyle model in the context of filtering, enlargement of filtrations and Markov bridges, is described in the lecture notes by \c{C}etin \cite{cetin-notes} and references therein.

The main purpose of market makers is to add liquidity to markets by being ready to buy and sell assets at any time during the trading day. As a result, market makers also determine the spread between the bid and ask (i.e. the difference between the price quotes for market buy and sale orders) and even if it is only a few cents, they can profit by executing thousands of trades in a single day. None of the extensions of Kyle's model that were mentioned earlier take into account the fact that market makers also decide on the bid-ask spread and their profits depend on this decision. The market-maker spread is also considered a measure for assets liquidity, spreads tend to be tighter in more actively traded assets, and in those that have more available market makers. The size of the spread is also one of the main components of traders transaction costs. A related work by El Euch et al. \cite{rosen2020}, proposed a model for an exchange (or a regulator) who is aiming to attract liquidity to the market. The exchange was looking for the best make–take fees policy to offer to market-makers in order to maximise its utility. 

As mentioned earlier, Market makers earn money by having investors and traders buy assets in the ask price and sale them assets on a lower bid price.  The wider the spread, the more potential profit the market maker can make. On the other hand, the competition among market makers can keep spreads tight. Therefore, a key addition to Kyle's model is to introduce the revenue of the market maker due to the spread and to capture the trade-off between providing a comparative price and earning money from the spread. We include in our model both the market maker's revenue along with the decision on the size of the spread, as described in Section \ref{sec-model}.  
 We then derive the maximizer of the informed trader's revenue and give sufficient conditions for equilibrium in the game. We use neural networks methods to verify that indeed this equilibrium holds, and show that it experience phase transitions as we increase the relative weight of the revenue term with respect to the price efficiency in the market maker performance function.  As presented in Figure \ref{fig-phase}, the equilibrium price in this game has three phases, the "Kyle phase" where the spread is zero and the mid-price is linear, the "linear mid-price with spread phase", and the "spread phase" where the market maker does not use any price rule other than the bid-ask spread (see also Figure \ref{phase2}).  
\paragraph{Organization of the paper:} 
 In Section \ref{sec-model}  we define a new extension to Kyle's model that takes into account the market maker's revenue from creating a spread along with price efficiency. In Section \ref{sec-res} we present our main results that include the existence of a unique solution to the insider's optimization problem and gives sufficient conditions for the equilibrium in the system. We also provide a neural network algorithm that solves the market market maker's optimization problem and hence derives the equilibrium. In the end of this section we prove the existence of a metastable equilibrium, that derived in a closed form. In Section \ref{section-neural} we provide a detailed description on the neural networks methodology.  Sections \ref{sec-prf}--\ref{sec-pfs-3} are dedicated to the proofs of the main theoretical results. Finally in Section \ref{sec-form} we give some explicit formulas for the equilibrium points of the game.

 \section{The Model} \label{sec-model} 
We consider a one-period model that consists of three types of agents:  an informed trader (insider), noise traders and a market maker.
We assume that the future price (or fundamental price) at the end of the period, is predicted by the informed trader, and it is a random variable $\tilde v$ with a mean $p_{0}$ and variance $\sigma_{\tilde v}^2$. The noise traders have no predictions on the price move, and we denote by $\tilde u$ the total amount that they trade, which is a symmetric random variable with variance $\sigma_{\tilde u}^2$ and with a continuous probability density function $f_{\tilde u}$. It is assumed that $\tilde u$ and $\tilde v$ are independent random variables. Finally, we denote by $\tilde x$ the amount traded by the insider and by $\tilde p$ the execution price which is determined by the market maker . 

As in \cite{kyle85} we describe the trading as a two steps procedure. First, the values of $\tilde v$ and $\tilde u$ are realized and the insider chooses the size of her market order $\tilde x$. Note that when choosing $\tilde x$, the insider %observes
knows $\tilde v$ but not $\tilde u$. We define $\tilde x = X(\tilde v)$, where $X$ is a measurable function. In the second step, the market maker determines the traded (or execution) price $\tilde p$, while observing only the total order-flow $\tilde x+\tilde u$. Our main objective is to reflect the revenue of the market maker in her performance function. Since these revenue are directly linked to the bid-ask spread, we enlarge the class of linear prices which was proposed by Kyle. Motivated by Madhavan et al. \cite{madhav96} we assume that $\tilde p = P(\tilde x+\tilde u)$, where the price function $P$ is in the class of functions:
\be \label{admis-p} 
\mathscr{P} = \{P(x)=\lambda x + \theta\, \textrm{sign}(x)+p_{0}, \, \lambda, \theta > 0\}. 
\ee 

\begin{remark} 
The choice of $\mathscr{P}$ in \eqref{admis-p} is the simplest way to define a price with a symmetric bid-ask spread, where the size of the spread is $\theta$. Our choice is consistent with Madhavan et al. \cite[Section 2]{madhav96}, where we note that the conditional expectation of the indicators in their one period model, can be replaced by the sign of $\tilde x+ \tilde u$.  In Section \ref{sec-en-admis} we provide numerical evidence that extending the class $\mathscr{P}$ by adding higher order terms will still lead to equilibrium  price in $\mathscr{P}$. 
\end{remark} 

The profit of the insider, $\tilde \pi$ is given by $\tilde \pi = (\tilde v -\tilde p)\tilde x$. Note that $\tilde \pi= \tilde \pi(X,P)$.
%and $\tilde p = \tilde p(X,P)$. 

\begin{definition}[Equilibrium]
An equilibrium between the market maker and the insider is a pair of $X$ and $P$ such that the following two conditions hold. 
\begin{itemize} 
\item[\textbf{(i)}] \textbf{Profit maximization}: for any other strategy $X'$ and for any $v \in \re$, 
\be \label{trade-opt} 
\mathbb E\big[ \tilde \pi(X,P) | \tilde v =v  \big] \geq \mathbb E\big[ \tilde \pi(X',P) | \tilde v =v  \big]. 
\ee
\item[\textbf{(ii)}]  \textbf{Market Efficiency and Revenue}: the random price $\tilde p =P(\tilde x+\tilde u)$ satisfies 
\be \label{equi-mm}
\argmin_{P\in \mathscr{P}}    \big\{   \mathbb E\big[ (\tilde v - \tilde p)^{2} \big] - \gamma\theta E[|x(\tilde v) + \tilde u|] \big\}, 
\ee
where $\gamma>0$ is a fixed risk-aversion constant.  
\end{itemize} 
\end{definition}

\begin{remark} 
In Kyle's paper \cite{kyle85} the market maker's efficiency criterion was given by 
\be \label{eff-kyle} 
 \tilde p=P(\tilde x+\tilde u) = \E[\tilde v| \tilde x+\tilde u ].
 \ee
Note that in the setting of Theorem 1 in \cite{kyle85}, minimising $\mathbb E\big[ (\tilde v - \tilde p)^{2}\big]$, which is our \emph{market efficiency and revenue} criterion (\ref{equi-mm}) with $\gamma=0$, is equivalent to (\ref{eff-kyle}). In our model we incorporate the revenue of the market maker, so we add the term $\theta E[|\tilde v + \tilde u|]$, which reflects the revenue, as it is proportional to the size of the spread and to the total overflow. In Proposition \ref{prop-lin-eq} we prove that this term is essential in order to get a difference between the buy and sell prices. We call $\gamma$ the risk-aversion parameter since it describes the tradeoff between keeping an efficient price and making profits. The second clearly may create additional risk by suppressing insider from trading large orders and therefore trading in other venues.  
\end{remark}

\section{Main Results} \label{sec-res}
We first present our theoretical results, where we solve the trader's profit maximisation problem. We also give a necessary condition for finding the equilibrium. Then we will construct a neural network that will allow us to numerically find the equilibrium. We also prove the existence of a metastable equilibrium and derive it in a closed form.

\subsection{Solution of the insider's problem} 
In the next proposition we show the existence of an optimal strategy for the insider. We also provide some insights on the properties of this strategy. We recall that the fundamental price at the end of the period $\tilde v$ is a random variable with a mean $p_{0}$ and variance $\sigma_{\tilde v}^2$. The noise traders order flow $\tilde u$ is a symmetric random variable with variance $\sigma^2_{\tilde u}$ and a continuous probability density function $f_{\tilde u}$. We denote by $F_{\tilde u}$ the cumulative distribution function  of $\tilde u$.
Moreover, it is assumed that $\tilde u$ and $\tilde v$ are independent random variables. Note that at this point we do not specify the distributions of $\tilde v$ and $\tilde u$.  We postpone the proofs of all the theoretical results of this section to Section \ref{sec-prf}.

% In the following proposition we derive the informed trader's unique optimal strategy for the one period model.

\begin{proposition} \label{lemma-min-trader} 
For any $v\in \re$ there exists a unique $x^{*}= x^{*}(v)$ that maximizes that expected profit of the insider  
\be \label{inside-prob} 
R_{v}(x) := \mathbb E\big[ \tilde \pi(X,P) | \tilde v =v  \big]. 
\ee
The maximizer $x^{*}$ satisfies the following properties:
\begin{itemize} 
\item When $v=p_{0}$ we have $x^{*}=0$ and $R_{p_{0}}(x^{*})=0$. 
\item When $v \not =p_{0}$, then $x^{*}(v)$ is a solution to the equation, 
\be \label{deriv-cond} 
\theta F_{\tilde u}(-x) -x\big(\theta f_{\tilde u}\big(-x\big)-\lambda\big) = \kappa(v), 
\ee 
where $\kappa(v) =(v-p_{0}-\theta)/2$. We moreover have $R_{v}(x^{*}(v)) >0$ and $\sign(x^{*}(v)) = \sign(v-p_{0})$.
\end{itemize} 
 \end{proposition}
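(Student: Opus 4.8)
The plan is to reduce the proposition to a one–variable calculus problem: compute $R_v$ in closed form, then read off every stated property from the monotonicity of a single auxiliary function.

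First I would compute $R_v$ explicitly. Since the insider picks a number $x$ after observing $\tilde v=v$ but not $\tilde u$, $R_v(x)=\mathbb{E}\big[\big(v-\lambda(x+\tilde u)-\theta\,\sign(x+\tilde u)-p_0\big)x\big]$. Using that $\tilde u$ is symmetric (so $\mathbb{E}[\tilde u]=0$ and $F_{\tilde u}(0)=\tfrac12$) and that $f_{\tilde u}$ is continuous (so $\tilde u$ has no atom at $-x$, whence $\mathbb{E}[\sign(x+\tilde u)]=1-2F_{\tilde u}(-x)$), this gives
\be
R_v(x)=x\,\phi_v(x),\qquad \phi_v(x):=(v-p_0-\theta)-\lambda x+2\theta F_{\tilde u}(-x).
\ee
The structural fact I would use throughout is that $\phi_v\in C^1(\re)$ is \emph{strictly decreasing}, since $\phi_v'(x)=-\lambda-2\theta f_{\tilde u}(-x)\le-\lambda<0$, with $\phi_v(0)=v-p_0$, $\phi_v(x)\to-\infty$ as $x\to+\infty$ and $\phi_v(x)\to+\infty$ as $x\to-\infty$.

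Next I would pin down the maximizer from the sign of $\phi_v$. If $v=p_0$, then $\phi_{p_0}(0)=0$, so $\phi_{p_0}>0$ on $(-\infty,0)$ and $\phi_{p_0}<0$ on $(0,\infty)$; hence $R_{p_0}(x)=x\,\phi_{p_0}(x)<0$ for every $x\neq0$ while $R_{p_0}(0)=0$, which gives $x^*=0$ and $R_{p_0}(x^*)=0$. If $v\neq p_0$, it suffices — by the reflection $v\mapsto 2p_0-v$, $x\mapsto-x$, which maps the insider's problem at $v$ onto the one at $2p_0-v$ using the symmetry of $f_{\tilde u}$ — to treat $v>p_0$. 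Then $\phi_v(0)=v-p_0>0$, so there is a unique $x_0=x_0(v)>0$ with $\phi_v(x_0)=0$, and $\phi_v>0$ on $(-\infty,x_0)$, $\phi_v<0$ on $(x_0,\infty)$. Hence $R_v(x)\le0$ for $x\le0$ and for $x\ge x_0$, while $R_v>0$ on $(0,x_0)$ with $R_v(0)=R_v(x_0)=0$; since $R_v$ is continuous it attains a global maximum at an interior point $x^*\in(0,x_0)$, and then automatically $R_v(x^*)>0$ and $\sign(x^*)=+1=\sign(v-p_0)$. Finally, at the interior maximizer $R_v'(x^*)=0$; since $R_v'(x)=(v-p_0-\theta)-2\lambda x+2\theta F_{\tilde u}(-x)-2\theta x f_{\tilde u}(-x)$, dividing this relation by $2$ and rearranging yields the first–order equation \eqref{deriv-cond} with $\kappa(v)=(v-p_0-\theta)/2$.

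The remaining and main point is uniqueness of $x^*$: one must show $R_v$ has a single critical point in $(0,x_0)$. I would prove this by showing that every $x^*\in(0,x_0)$ with $R_v'(x^*)=0$ is a strict local maximum, i.e. $R_v''(x^*)<0$; then two distinct critical points would force a local minimum between them, contradicting that all critical points are local maxima, so $x^*$ is unique. Concretely, $R_v''(x)=-2\lambda-4\theta f_{\tilde u}(-x)+2\theta x\,f_{\tilde u}'(-x)$, and at a critical point one can substitute the first–order relation $\phi_v(x^*)=x^*\big(\lambda+2\theta f_{\tilde u}(-x^*)\big)$ together with $\phi_v(x^*)<\phi_v(0)=v-p_0$ to control the only indefinite–sign term $2\theta x^* f_{\tilde u}'(-x^*)$. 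This is the step where continuity of $f_{\tilde u}$ alone is insufficient and one must invoke structural information on the noise law — e.g. log-concavity/unimodality of $f_{\tilde u}$, which covers the Gaussian case of primary interest — so I expect turning that substitution into a clean inequality $R_v''(x^*)<0$, uniformly over $x^*\in(0,x_0(v))$, to be the crux. The existence and sign statements, by contrast, follow purely from monotonicity of $\phi_v$ and need no assumption beyond continuity and symmetry of $f_{\tilde u}$.
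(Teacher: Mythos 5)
Your factorization $R_v(x)=x\,\phi_v(x)$ with $\phi_v$ strictly decreasing is, up to presentation, the same computation as the paper's \eqref{r2}, and your treatment of existence, of the case $v=p_0$, of $R_v(x^*)>0$, of $\sign(x^*)=\sign(v-p_0)$ and of the first--order condition is correct and in fact slightly tighter than the paper's (which argues via $R_v'(0)>0$, the limits at $\pm\infty$ and $R_v(x)>R_v(-x)$ for $x>0$). One small flag: your correctly computed $R_v'$ rearranges to $x\big(\lam+\theta f_{\tilde u}(-x)\big)-\theta F_{\tilde u}(-x)=\kappa(v)$, which is not literally \eqref{deriv-cond} as printed (the $\theta$--terms there carry the opposite signs); your form is the one consistent with Proposition \ref{prop-unif}, so you should note the discrepancy rather than assert that the rearrangement ``yields \eqref{deriv-cond}''.

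The genuine gap is the uniqueness step, and the route you sketch cannot be repaired in the form you describe. You propose to show $R_v''(x^*)<0$ at \emph{every} critical point in $(0,x_0)$, granting yourself unimodality or log--concavity of $f_{\tilde u}$, ``which covers the Gaussian case of primary interest''. But for centred Gaussian noise one has $R_v''(x)=-2\lam-2\theta f_{\tilde u}(-x)\big(2-x^2/\sigma_{\tilde u}^2\big)$, which is positive for suitable $x>\sqrt{2}\,\sigma_{\tilde u}$ once $\theta$ is large relative to $\lam$; this is precisely why the paper's Proposition \ref{gaus} admits the case of three solutions of the first--order condition, the middle one being a local minimum where $R_v''\ge 0$. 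So the lemma ``every critical point is a strict local maximum'' is false in exactly the case you cite as covered, and no substitution of the first--order relation will rescue it (your argument also tacitly needs $f_{\tilde u}$ differentiable, which the proposition does not assume). The paper does not obtain uniqueness by a second--order test either: its proof establishes existence, sign, positivity and the first--order condition exactly as you do and then asserts uniqueness of the maximizer, with concavity (hence a unique critical point) only under the additional hypothesis \eqref{2nd-cond} stated in the remark (e.g.\ Laplace noise), and with the Gaussian structure deferred to Proposition \ref{gaus}, which merely narrows the global maximizer down to the outer roots $x_1^*$ or $x_3^*$. To complete your proof you must therefore either impose a condition such as \eqref{2nd-cond} and deduce concavity of $R_v$, or give a direct argument about the \emph{global} maximizer rather than local second--order information; your outline supplies neither.
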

 
 \begin{remark} 
The proof of Proposition \ref{lemma-min-trader} suggests that if 
\be \label{2nd-cond} 
\frac{d^2}{dx^2}(xF_{\tilde u}(-x)) < 0, \quad \textrm{for all } x \in \re \setminus\{0\},
\ee
 then $R_v$ is concave and (\ref{deriv-cond}) has a unique solution. An example for that is when $\tilde u$ has a centred Laplace distribution. 
 \end{remark}

 \begin{remark} 
 Note that in the case where there is no bid-ask spread, i.e. $\theta =0$, we recover the result of Theorem 1 in \cite{kyle85} and get that 
\be \label{x-no-sprd} 
x^{*}(v) = \frac{v-p_{0}}{2\lam}. 
\ee
 \end{remark}

 \subsubsection{Solution to the Gaussian noise case}
 We specialise in the case where the total order flow of the noise traders $\tilde u$ is a mean-zero Gaussian random variable. Denote by $\Phi$ (respectively $\phi$) the cumulative distribution function (respectively the probability density function) of a standard Gaussian.
 
 \begin{corollary} 
Assume the same hypothesis as in Proposition \ref{lemma-min-trader}, only now let $\tilde u$ be a mean-zero Gaussian with variance $\sigma^2_{\tilde u}$. Then (\ref{deriv-cond}) is given by 
 \be \label{normal-eqn} 
\theta\Phi\big(-\frac{x}{\sigma_{\tilde u}}\big) -x\Big(\frac{\theta}{\sigma_{\tilde u}}\phi\big(-\frac{x}{\sigma_{\tilde u}}\big)-\lambda\Big) = \kappa(v).
\ee
 \end{corollary}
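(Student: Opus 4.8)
The plan is to obtain \eqref{normal-eqn} as a direct specialization of Proposition \ref{lemma-min-trader}: every structural conclusion of that proposition (existence and uniqueness of the maximizer $x^{*}(v)$, the value at $v=p_{0}$, positivity of $R_{v}(x^{*})$ for $v\neq p_{0}$, and the identity $\sign(x^{*}(v))=\sign(v-p_{0})$) is established there under the sole standing assumptions that $\tilde u$ is symmetric with a continuous density $f_{\tilde u}$, and that $\tilde u,\tilde v$ are independent. A centred Gaussian with variance $\sigma_{\tilde u}^{2}$ satisfies these assumptions, so none of those statements needs to be reproved in the Gaussian case; the only work is to rewrite the defining equation \eqref{deriv-cond} in terms of the explicit Gaussian CDF and density.

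Concretely, the first step is to record the elementary identities $F_{\tilde u}(t)=\Phi(t/\sigma_{\tilde u})$ and $f_{\tilde u}(t)=\frac{1}{\sigma_{\tilde u}}\phi(t/\sigma_{\tilde u})$, valid for all $t\in\re$ when $\tilde u$ is mean-zero Gaussian with variance $\sigma_{\tilde u}^{2}$. Evaluating these at $t=-x$ and substituting into \eqref{deriv-cond}, the left-hand side $\theta F_{\tilde u}(-x)-x\big(\theta f_{\tilde u}(-x)-\lambda\big)$ becomes $\theta\Phi\big(-\frac{x}{\sigma_{\tilde u}}\big)-x\big(\frac{\theta}{\sigma_{\tilde u}}\phi\big(-\frac{x}{\sigma_{\tilde u}}\big)-\lambda\big)$, while the right-hand side $\kappa(v)=(v-p_{0}-\theta)/2$ is left unchanged. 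This is precisely \eqref{normal-eqn}, and by Proposition \ref{lemma-min-trader} the insider's maximizer $x^{*}(v)$ for Gaussian noise is the unique solution of it.

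There is no genuine obstacle: the statement is a corollary in the strict sense, and the proof is essentially a one-line substitution. The only points that call for a moment of care are bookkeeping — retaining the factor $1/\sigma_{\tilde u}$ in the density term and the scaled argument $-x/\sigma_{\tilde u}$ inside both $\Phi$ and $\phi$ — and observing that the concavity criterion \eqref{2nd-cond} mentioned in the remark following Proposition \ref{lemma-min-trader} is \emph{not} what is being invoked here; uniqueness of $x^{*}(v)$ in the Gaussian case is inherited from the general argument in the proof of that proposition rather than from any global concavity of $R_{v}$.
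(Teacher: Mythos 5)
Your proof is correct and matches the paper's (implicit) argument: the corollary is indeed just the substitution $F_{\tilde u}(-x)=\Phi(-x/\sigma_{\tilde u})$, $f_{\tilde u}(-x)=\frac{1}{\sigma_{\tilde u}}\phi(-x/\sigma_{\tilde u})$ into \eqref{deriv-cond}, with all structural conclusions inherited from Proposition \ref{lemma-min-trader}. One small caution: avoid calling $x^{*}(v)$ ``the unique solution'' of \eqref{normal-eqn} --- Proposition \ref{lemma-min-trader} only guarantees that the unique maximizer solves the equation, and Proposition \ref{gaus} shows the equation itself may have three solutions in the Gaussian case.
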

 
The following Lemma characterises the global maximum for the informed trader problem under the Gaussian noise assumption.
\begin{proposition}\label{gaus}
Let $\tilde u$ be a mean-zero Gaussian random variable, then there are two possible cases: 
\begin{enumerate} 
\item[\textbf(a)] there exists a unique one solution $x^*$  to (\ref{normal-eqn}) and this is the maximizer of $R_{v}$. 
\item[\textbf(b)] there exist three solutions $x_1^*<x^*_2<x_3^*$ to (\ref{normal-eqn}), and the global maximizer of $R_{v}$ is either $x_1^*$ or $x^*_3$. 
\end{enumerate} 
\end{proposition}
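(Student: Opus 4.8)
The plan is to base everything on the geometry of the first-order condition. Conditioning on $\tilde v=v$ and using $\mathbb E[\tilde u]=0$ together with $\mathbb E[\sign(x+\tilde u)]=1-2F_{\tilde u}(-x)$, the insider's expected profit is $R_v(x)=x(v-p_0)-\lambda x^{2}-\theta x\bigl(1-2F_{\tilde u}(-x)\bigr)$, so that
\[
R_v'(x)=2\bigl(\kappa(v)-G(x)\bigr),\qquad G(x):=\lambda x-\theta\,\tfrac{d}{dx}\!\bigl[xF_{\tilde u}(-x)\bigr],
\]
with $G(x)=\lambda x-\theta\Phi\!\bigl(-\tfrac{x}{\sigma_{\tilde u}}\bigr)+\tfrac{\theta x}{\sigma_{\tilde u}}\,\phi\!\bigl(\tfrac{x}{\sigma_{\tilde u}}\bigr)$ in the Gaussian case; thus the zeros of $R_v'$ are exactly the solutions of $G(x)=\kappa(v)$, i.e.\ of (\ref{normal-eqn}). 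Since $\mathbb E[\sign(x+\tilde u)]\in[-1,1]$ we have $R_v(x)=-\lambda x^{2}+O(|x|)\to-\infty$ as $|x|\to\infty$, and $R_v$ is continuous and differentiable, so it attains its global maximum at some $x^{*}\in\re$, which is a critical point and hence a solution of $G(x)=\kappa(v)$. It therefore suffices to count the solutions of this equation and to decide which one is the maximizer.

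The heart of the argument — and where the Gaussian hypothesis is genuinely used — is the description of the graph of $G$. A short computation gives
\[
G'(x)=\lambda+\frac{\theta}{\sigma_{\tilde u}}\,\phi\!\Bigl(\frac{x}{\sigma_{\tilde u}}\Bigr)\Bigl(2-\frac{x^{2}}{\sigma_{\tilde u}^{2}}\Bigr)\qquad\Bigl(=-\tfrac12 R_v''(x)\Bigr),
\]
which is even, satisfies $G'\ge\lambda>0$ on $|x|\le\sqrt2\,\sigma_{\tilde u}$, and on $|x|>\sqrt2\,\sigma_{\tilde u}$ dips below $\lambda$, attaining its minimum $\lambda-2\theta\phi(2)/\sigma_{\tilde u}$ at $|x|=2\sigma_{\tilde u}$ before returning to $\lambda$ as $|x|\to\infty$. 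If $\lambda\ge 2\theta\phi(2)/\sigma_{\tilde u}$, then $G'\ge0$, so $G$ is strictly increasing (equivalently, $R_v$ is concave) and $G(x)=\kappa(v)$ has a unique solution. If $\lambda<2\theta\phi(2)/\sigma_{\tilde u}$, then $\{G'<0\}=(-b_2,-b_1)\cup(b_1,b_2)$ for some $\sqrt2\,\sigma_{\tilde u}<b_1<2\sigma_{\tilde u}<b_2$, so that $G$ increases, then decreases on $[-b_2,-b_1]$, then increases on $[-b_1,b_1]$, then decreases on $[b_1,b_2]$, then increases, with local maxima at $-b_2,b_1$ and local minima at $-b_1,b_2$. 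The decisive extra fact is then the interlacing
\[
G(-b_1)<G(-b_2)<G(b_2)<G(b_1):
\]
the outer inequalities come from the monotonicity of $G$ on $[-b_2,-b_1]$ and $[b_1,b_2]$, and for the middle one note that $G$ is antisymmetric about $(0,G(0))$ with $G(0)=-\theta/2$ (since $G'$ is even), so $G(-b_2)=-\theta-G(b_2)$, while $G(b_2)>G(0)$ because $G(b_2)-G(0)=\int_0^{b_2}G'=\lambda b_2+\theta\bigl(\Phi(b_2/\sigma_{\tilde u})-\tfrac12+\tfrac{b_2}{\sigma_{\tilde u}}\phi(b_2/\sigma_{\tilde u})\bigr)>0$. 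As the two ``bumps'' $[G(-b_1),G(-b_2)]$ and $[G(b_2),G(b_1)]$ are disjoint and lie on opposite sides of the level $G(0)$, the line $y=\kappa(v)$ meets the graph of $G$ in three points when $\kappa(v)\in(G(-b_1),G(-b_2))\cup(G(b_2),G(b_1))$ and in one point otherwise (the two boundary values giving a degenerate double root). This gives the dichotomy (a)/(b).

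It remains to identify the maximizer. Since $R_v'=2(\kappa(v)-G)$, at a solution of $G(x)=\kappa(v)$ at which $G$ is increasing the derivative $R_v'$ changes from $+$ to $-$ — a strict local maximum of $R_v$ — and at a solution at which $G$ is decreasing it changes from $-$ to $+$ — a strict local minimum. In case (a) the unique critical point is thus the global maximum found above. In case (b), inspecting the two sub-cases (the solutions sitting on the first/second/third monotone branch of $G$ when $\kappa(v)\in(G(-b_1),G(-b_2))$, and on the third/fourth/fifth when $\kappa(v)\in(G(b_2),G(b_1))$) shows that $x_1^*$ and $x_3^*$ always lie on increasing branches and $x_2^*$ on a decreasing branch; hence $x_1^*,x_3^*$ are strict local maxima and $x_2^*$ a strict local minimum, and the global maximum — which exists and must be a local maximum — is attained at $x_1^*$ or $x_3^*$, whichever yields the larger value of $R_v$. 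The one delicate step is the second paragraph: the sign analysis of $G'$ and the interlacing of its critical values, both of which use the explicit Gaussian density $\phi$ and would break down for a general symmetric noise law.
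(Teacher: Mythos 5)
Your argument is correct, and at its core it is the same computation as the paper's: your $G'(x)=\lambda+\frac{\theta}{\sigma_{\tilde u}}\phi\big(\frac{x}{\sigma_{\tilde u}}\big)\big(2-\frac{x^{2}}{\sigma_{\tilde u}^{2}}\big)$ is exactly $-\tfrac12 R_v''(x)$, and the paper's proof rests on precisely this sign pattern of $R_v''$ (either everywhere negative, or negative--positive--negative), combined with coercivity of $R_v$, to conclude that there are one or three critical points and that in the three-root case the middle one is a local minimum. Where you genuinely go beyond the paper is in scope: the paper fixes $v-p_0>0$ and counts zeros of $R_v'$ only on $[0,\infty)$, where a single decreasing branch caps the count at three, whereas you count solutions over all of $\re$ and for every $v$, and must therefore exclude five crossings of the level $\kappa(v)$ across the five monotone branches of $G$. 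Your interlacing inequality $G(-b_1)<G(-b_2)<G(b_2)<G(b_1)$, obtained from the point-antisymmetry $G(-x)=-\theta-G(x)$ together with $G(b_2)>G(0)=-\theta/2$, is the extra lemma that does this, and it is not in the paper; it is what justifies reading the proposition's ``three solutions'' as a statement about the whole real line rather than a half-line (the paper implicitly relies on $\sign(x^*)=\sign(v-p_0)$, equivalently $\kappa(v)>G(0)$ for $v>p_0$, to stay on $[0,\infty)$). Your classification of the roots (outer roots on increasing branches of $G$, hence strict local maxima of $R_v$; middle root a strict local minimum) and the existence of the global maximizer by coercivity reproduce the paper's conclusion, and like the paper you leave aside the degenerate tangency case of exactly two roots, which the statement itself does not list. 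One cosmetic caveat: your first-order condition $G(x)=\kappa(v)$ is the zero set of $R_v'$, which is what both you and the paper's proof actually use; the displayed equation (\ref{normal-eqn}) (inherited from (\ref{deriv-cond})) differs from it by the sign of the $\theta$-terms, apparently a typo in the text, so your identification is the correct reading of the intended equation.
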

 
 \subsubsection{Solution to the Uniform noise case} \label{unif-equi} 
We study in greater detail the case where $\tilde u$ is a Uniform random variable on $[-1,1]$.  \begin{proposition} \label{prop-unif} 
Assume the same hypothesis as in Proposition \ref{lemma-min-trader}, only now let $u$ be Uniform on $[-1,1]$. The unique  maximizer $x^{*}= x^{*}(v)$ which maximizes the expected profit of the trader in \eqref{inside-prob} is given by 
\begin{itemize} 
\item[\bf{(i)}] $x^*(v) = \frac{v-p_0}{2(\lambda +\theta)}$, for $0<v-p_0 \leq \lambda +\theta +\sqrt{(\lambda +\theta)\lambda}$,
\item[\bf{(ii)}] $x^*(v) = \frac{v-p_0-\theta}{2\lambda }$, for $v-p_0>\lambda +\theta +\sqrt{(\lambda +\theta)\lambda}$.
\end{itemize} 
\end{proposition}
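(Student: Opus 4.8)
The plan is to exploit the fact that for uniform noise the insider's objective $R_v$ collapses to an explicit piecewise‑quadratic function of $x$ with a single breakpoint at $x=1$, after which the statement is elementary: each piece is a downward parabola, and the only real work is comparing the two peak heights to locate the threshold. By the sign property in Proposition \ref{lemma-min-trader}, together with the symmetry $\tilde u\myeq-\tilde u$ (under which $R$ is invariant along $x\mapsto-x$, $v\mapsto 2p_0-v$), it suffices to treat $v>p_0$ and search over $x\ge 0$; the case $v=p_0$ already gives $x^{*}=0$. \textbf{Step 1 (reduce $R_v$ to a piecewise quadratic).} Writing $\tilde p=\lambda(\tilde x+\tilde u)+\theta\,\sign(\tilde x+\tilde u)+p_0$ and using independence of $\tilde u,\tilde v$ with $\E[\tilde u]=0$ gives $R_v(x)=x\big(v-p_0-\lambda x-\theta\,\E[\sign(x+\tilde u)]\big)$. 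For $\tilde u\sim\mathrm{Unif}[-1,1]$ and $x\in[0,1]$ one has $\E[\sign(x+\tilde u)]=\P(\tilde u>-x)-\P(\tilde u<-x)=\tfrac{1+x}{2}-\tfrac{1-x}{2}=x$, while for $x>1$ we have $x+\tilde u>0$ almost surely so $\E[\sign(x+\tilde u)]=1$. Hence, continuously at $x=1$,
\begin{equation*}
R_v(x)=\begin{cases}(v-p_0)\,x-(\lambda+\theta)\,x^{2}, & 0\le x\le 1,\\ (v-p_0-\theta)\,x-\lambda\,x^{2}, & x\ge 1,\end{cases}
\end{equation*}
so that (the maximiser exists since $R_v(0)=0$ and $R_v(x)\to-\infty$) the only candidates are the vertices $x_1:=\tfrac{v-p_0}{2(\lambda+\theta)}$ of the first branch and $x_3:=\tfrac{v-p_0-\theta}{2\lambda}$ of the second; a branch is maximised at its vertex when the vertex lies in that branch's domain, otherwise at the endpoint $x=1$, which is then strictly dominated by the other branch's vertex.

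\textbf{Step 2 (case analysis by position of the vertices).} Since $x_1\le 1\iff v-p_0\le 2(\lambda+\theta)$, $x_3\ge 1\iff v-p_0\ge 2\lambda+\theta$, and $2\lambda+\theta<2(\lambda+\theta)$, exactly one of three regimes holds: (a) $v-p_0\le 2\lambda+\theta$, the sole candidate being $x_1\in(0,1]$; (b) $v-p_0\ge 2(\lambda+\theta)$, the sole candidate being $x_3\in[1,\infty)$; (c) $2\lambda+\theta<v-p_0<2(\lambda+\theta)$, where $x_1\in(0,1)$ and $x_3\in(1,\infty)$ are both strict local maxima. In (a) and (b) the maximiser is already pinned down, so it remains to resolve (c) by comparing $R_v(x_1)$ with $R_v(x_3)$.

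\textbf{Step 3 (comparing peak heights).} With $w:=v-p_0$ the peak values are $R_v(x_1)=\tfrac{w^{2}}{4(\lambda+\theta)}$ and $R_v(x_3)=\tfrac{(w-\theta)^{2}}{4\lambda}$; set $h(w):=\tfrac{w^{2}}{4(\lambda+\theta)}-\tfrac{(w-\theta)^{2}}{4\lambda}$, so that $h'(w)=\tfrac{\theta(\lambda+\theta-w)}{2\lambda(\lambda+\theta)}$, i.e.\ $h$ increases on $(0,\lambda+\theta)$ and decreases afterwards. Solving $h(w)=0$ (equivalently $\sqrt{\lambda}\,w=\sqrt{\lambda+\theta}\,(w-\theta)$, the relevant root) and rationalising yields $w^{*}=(\lambda+\theta)+\sqrt{\lambda(\lambda+\theta)}$. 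Because $\lambda+\theta<2\lambda+\theta<w^{*}<2(\lambda+\theta)$, on the overlap interval of regime (c) the map $h$ is strictly decreasing with its unique zero at $w^{*}$, whence $R_v(x_1)>R_v(x_3)$ for $w<w^{*}$ and $R_v(x_3)>R_v(x_1)$ for $w>w^{*}$.

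\textbf{Step 4 (assembly, and the main obstacle).} Regime (a) lies inside $\{w\le w^{*}\}$ and regime (b) inside $\{w>w^{*}\}$, so for $0<v-p_0\le w^{*}$ the maximiser is $x_1=\tfrac{v-p_0}{2(\lambda+\theta)}$ and for $v-p_0>w^{*}$ it is $x_3=\tfrac{v-p_0-\theta}{2\lambda}$; uniqueness holds whenever $w\ne w^{*}$ because the winning branch is strictly concave, while at $w=w^{*}$ the two vertices tie and the stated formula selects $x_1$. Undoing the symmetry recovers the $v<p_0$ case and finishes the proof. The only step that is not bookkeeping is Step 3: one must get the two peak‑height formulas right, solve $h(w)=0$ to see the threshold is precisely $\lambda+\theta+\sqrt{\lambda(\lambda+\theta)}$, and verify the chain $2\lambda+\theta<w^{*}<2(\lambda+\theta)$ so that $w^{*}$ lands strictly inside the overlap regime and the three regimes glue into the clean two‑case statement; the mild non‑uniqueness exactly at $w=w^{*}$ should also be acknowledged.
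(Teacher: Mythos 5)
Your proof is correct and takes essentially the same route as the paper's: the same reduction of $R_v$ to the piecewise quadratic with branches $-(\lambda+\theta)x^2+(v-p_0)x$ on $[0,1]$ and $-\lambda x^2+(v-p_0-\theta)x$ on $[1,\infty)$, the same comparison of the two vertex values on the overlap regime $2\lambda+\theta< v-p_0<2(\lambda+\theta)$ yielding the threshold $\lambda+\theta+\sqrt{\lambda(\lambda+\theta)}$, and the same endpoint/monotonicity handling of the outer regimes, with your Step 3 merely making explicit the algebra and the monotonicity of $h$ that the paper states without detail. Your remark that uniqueness technically fails at the single point $v-p_0=\lambda+\theta+\sqrt{\lambda(\lambda+\theta)}$, where both vertices attain the same value, is a fair observation of a boundary case that the paper's proof glosses over.
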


\subsection{Sufficient conditions for equilibrium}  
%We present we 
In this section we provides sufficient conditions for the existence of an equilibrium. We continue to assume that the future price $\tilde v$ is a random variable with a mean $p_{0}$ and variance $\sigma_{\tilde v}^2$ and that the noise traders order flow $\tilde u$ is a symmetric random variable with variance $\sigma_{\tilde u}^2$ and continuous density $f_{\tilde u}$. Also here we do not specify the distribution of $\tilde v$ and $\tilde u$. The proofs of the theoretical results in this section are given in Section \ref{sec-pfs2}. 
 
We consider $x^*(v)$ from Proposition \ref{lemma-min-trader} which is the maximizer of the insider's expected profit \eqref{inside-prob}. Before stating our main result, we introduce the following notation. Let 
\bn
\ell_{p,x^{*}}&=& \E\big[|x^{*}(\tilde v)+\tilde u|^{p}\big], \quad p=1,2,  \\
\mu_{x^{*}} &=&\E\big[x^{*}(\tilde v)( \tilde v- p_{0})\big], \\
\kappa_{x^{*}} &=&\E\big[ \textrm{sign}(x^{*}(\tilde v)+\tilde u)( \tilde v- p_{0})\big]. 
\en
Were often write $\ell_{p}, \mu, \kappa$ to simplify the notation. Note that $\ell_{p,x^{*}}, \mu_{x^{*}}, \kappa_{x^{*}}$ are all functions of $(\lambda, \theta)$.

In the next theorem we characterise the equilibrium between the market maker and the insider. 
\begin{theorem} [sufficient condition]    \label{thm-equil1} Assume that $\tilde v$ is a random variable with a mean $p_{0}$ and variance $\sigma_{\tilde v}^2$ and that $\tilde u$ is symmetric random variable with variance $\sigma_{\tilde u}^2$ and a continuous density. 
For any $x^{*}(v)$, which is given in Proposition \ref{lemma-min-trader}, if the following system 
\be \label{lam-th-eq} 
\lam  = \frac{\mu -(\kappa +\gamma \ell_{1}/2) \ell_{1}}{\ell_{2}-\ell_{1}^{2}}, \quad \theta = \kappa+\frac{\gamma}{2}\ell_{1}- \frac{\ell_{1}(\mu -\kappa \ell_{1}-\ell_{1}\gamma/2)}{\ell_{2}-\ell_{1}^{2}},
\ee
has a non-negative solution $(\lambda^*, \theta^*)$, then the optimal price that minimizes the market maker's objective function \eqref{equi-mm} is given by 
$$P^*(x)=\lambda^{*} x + \theta^{*} \textrm{sign}(x)+p_{0}.$$  
Moreover, $(x^{*}(\cdot), \lambda^*, \theta^*)$ is an equilibrium of the game. 
\end{theorem}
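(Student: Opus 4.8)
The plan is to reduce the two equilibrium conditions to a single optimisation that the market maker faces once the insider's strategy is frozen. By Proposition~\ref{lemma-min-trader}, if the price is $P^{*}(x)=\lambda^{*}x+\theta^{*}\,\sign(x)+p_{0}$, then for every $v\in\re$ the map $R_{v}$ has the unique maximiser $x^{*}(v)$ computed with the parameters $(\lambda^{*},\theta^{*})$; hence the profit-maximisation requirement~\eqref{trade-opt} holds automatically, and all that remains is the market-efficiency-and-revenue condition~\eqref{equi-mm}: namely, that with the insider's strategy fixed at $x^{*}(\cdot)$ the pair $(\lambda^{*},\theta^{*})$ minimises the market maker's objective over the admissible class $\mathscr{P}$.

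To this end I would expand that objective as a function of $(\lambda,\theta)$ for an arbitrary fixed insider strategy. Writing $Y=x^{*}(\tilde v)+\tilde u$ for the total order flow and $Z=\tilde v-p_{0}$, continuity of $f_{\tilde u}$ gives $\P(Y=0)=\E[\P(\tilde u=-x^{*}(\tilde v)\mid\tilde v)]=0$, so that $\sign(Y)^{2}=1$ a.s.\ and $Y\,\sign(Y)=|Y|$. Expanding $\E\big[(\tilde v-\lambda Y-\theta\,\sign(Y)-p_{0})^{2}\big]$ and using independence of $\tilde u$ and $\tilde v$ together with $\E[\tilde u]=0$ (which gives $\E[ZY]=\mu$ and $\E[Z\,\sign(Y)]=\kappa$), the quantity appearing in \eqref{equi-mm} equals
\[
G(\lambda,\theta)=\sigma_{\tilde v}^{2}-2\mu\lambda-2\kappa\theta+\ell_{2}\lambda^{2}+2\ell_{1}\lambda\theta+\theta^{2}-\gamma\ell_{1}\theta ,
\]
a quadratic form in $(\lambda,\theta)$ whose Hessian is the constant matrix $\bigl(\begin{smallmatrix}2\ell_{2}&2\ell_{1}\\ 2\ell_{1}&2\end{smallmatrix}\bigr)$.

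The next step is to show this Hessian is positive definite. Its determinant is $4(\ell_{2}-\ell_{1}^{2})$, and $\ell_{2}-\ell_{1}^{2}=\E[Y^{2}]-(\E|Y|)^{2}=\mathrm{Var}(|Y|)>0$, the strict inequality holding because $|Y|$ is non-degenerate: conditionally on $\tilde v=v$ the variable $|x^{*}(v)+\tilde u|$ cannot be a.s.\ constant since $\tilde u$ has a continuous density. Hence $G$ is strictly convex and has a unique stationary point, obtained from $\nabla G=0$, i.e.\ from the linear system $\ell_{2}\lambda+\ell_{1}\theta=\mu$ and $\ell_{1}\lambda+\theta=\kappa+\frac{\gamma}{2}\ell_{1}$. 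Eliminating $\theta$ and back-substituting produces exactly the two expressions in \eqref{lam-th-eq}.

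It then remains to close the loop. The coefficients $\ell_{1},\ell_{2},\mu,\kappa$ are functions of $(\lambda,\theta)$ through $x^{*}$, and \eqref{lam-th-eq} says precisely that $(\lambda^{*},\theta^{*})$ is a fixed point of the map sending $(\lambda,\theta)$ to the minimiser of the associated $G$. Thus, when the insider plays $x^{*}(\cdot\,;\lambda^{*},\theta^{*})$, the unconstrained minimiser of $G$ is $(\lambda^{*},\theta^{*})$ itself; since the hypothesis guarantees $\lambda^{*},\theta^{*}\geq 0$, this point lies in the (closure of the) admissible region, so $P^{*}$ minimises \eqref{equi-mm} over $\mathscr{P}$, and together with $x^{*}(\cdot)$ it meets both equilibrium conditions. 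I expect the only genuinely delicate points to be (a) the strict bound $\ell_{2}>\ell_{1}^{2}$, which is exactly where continuity of $f_{\tilde u}$ enters and cannot be weakened to $\geq$; and (b) the fixed-point bookkeeping — the statement has to be phrased as "a non-negative solution of \eqref{lam-th-eq} yields a consistent pair $(x^{*},P^{*})$" rather than as a single decoupled convex programme; a minor technical point is to ensure the relevant second moments are finite so that $G$ is well defined.
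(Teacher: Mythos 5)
Your proposal is correct and follows essentially the same route as the paper: freeze the insider's best response $x^{*}$, expand the market maker's objective as a quadratic in $(\lambda,\theta)$, solve the first-order linear system $\ell_{2}\lambda+\ell_{1}\theta=\mu$, $\ell_{1}\lambda+\theta=\kappa+\tfrac{\gamma}{2}\ell_{1}$, and conclude global optimality from the positive-definite Hessian $\bigl(\begin{smallmatrix}2\ell_{2}&2\ell_{1}\\ 2\ell_{1}&2\end{smallmatrix}\bigr)$, with non-negativity of $(\lambda^{*},\theta^{*})$ placing the minimiser in the admissible class. Your explicit justification that $\ell_{2}-\ell_{1}^{2}=\mathrm{Var}(|x^{*}(\tilde v)+\tilde u|)>0$ (via the continuous density of $\tilde u$) is a welcome detail that the paper asserts without proof.
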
 

%\red{perhaps remove} 
%\begin{theorem} [uniqueness]    \label{thm-uniq}  Assume in addition to the hypothesis of Theorem \ref{thm-equil1}, that noise traders order flow $\tilde u$ is either a mean-zero Gaussian, uniform random variable or alternatively satisfies (\ref{2nd-cond}). Then, there exists at most one equilibrium to the game. 
%\end{theorem} 

\begin{remark} 
 Note that finding a solution to equation (\ref{lam-th-eq}) is a difficult task since $\ell_i, \kappa$ and $\mu$ depend on $(\lam,\theta)$. 
In Section \ref{section-find-eq} we provide a numerical method, based on an ad hoc neural network,
%give an efficient neural network algorithm 
that can find the equilibrium point $(\lam^*,\theta^*)$. 
Proving the uniqueness of the equilibrium seems to be out of reach due to the complexity of (\ref{lam-th-eq}).
%Since (\ref{lam-th-eq}) is nonlinear and involved, making any statement on  uniqueness of its solutions, and hence on the uniqueness of the equilibrium seems out of reach. 
Nevertheless, our numerical approach provides evidence that uniqueness indeed holds.
%Our numerical however give an evidence on uniqueness. 
\end{remark} 

\begin{remark} 
We observe that in the case where we restrict to pricing rules with $\theta =0$ (i.e. zero spread), $\tilde u \sim N(0,\sigma_{\tilde u}^{2})$ and $\tilde v\sim N(p_{0}, \sigma_{\tilde v}^{2})$, then $x^{*}(v)$ is given by (\ref{x-no-sprd}), 
$\kappa = \frac{\ell_{1}\mu}{\ell_{2}}$ and 
$$\lam^* =\frac{\mu}{\ell_{2}} =\frac{\beta\sigma_{\tilde v}^{2}}{\beta^{2}\sigma^{2}_{\tilde v}+\sigma_{\tilde u}^{2}},$$
where $\beta = \frac{1}{2\lam}$. It follows that $P^{*}(x)$ is similar to the price at equilibrium in Theorem 1 of \cite{kyle85}. 
\end{remark}

In the following proposition we prove that when $\gamma$ in \eqref{equi-mm} is set to zero, we recover the classical Kyle equilibrium without a bid-ask spread (i.e. $\theta =0$). 

\begin{proposition}  \label{prop-lin-eq} 
Assume that $\tilde v - p_{0}$ is a centred Gaussian with variance $\sigma_{\tilde v}^2$ and that $\tilde u$ is either a centred Gaussian or centred Uniform with variance $\sigma_{\tilde u}^{2}$. If the risk-aversion parameter $\gamma$ in \eqref{equi-mm} is zero, then there exists an equilibrium in which $X$ and $P$ are linear functions that are given by 
$$
X(v) = \beta^{*}(v-p_{0}) \quad P(x) = p_{0}+ \lam^{*} x, 
$$
where $\beta^{*} = \frac{\sigma_{ u}}{\sigma_{ v}}$ and $\lam^{*} =   \frac{\sigma_{ v}}{2\sigma_{ u}}$. 
\end{proposition}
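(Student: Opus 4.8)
The plan is to look directly for a linear equilibrium: find $(\beta^*,\lambda^*)$ so that $X(v)=\beta^*(v-p_0)$ and the degenerate pricing rule $P(x)=p_0+\lambda^*x$ (the $\theta=0$ boundary member of the class $\mathscr P$) satisfy the two equilibrium conditions, and then solve the resulting fixed point. For condition (i): against a linear price $P(x)=p_0+\lambda x$ the insider's payoff is $R_v(x)=\E[(\tilde v-p_0-\lambda x-\lambda\tilde u)x\mid\tilde v=v]=(v-p_0-\lambda x)x$, a concave parabola maximised at $x^*(v)=(v-p_0)/(2\lambda)$; this is also the $\theta=0$ case of the Remark after Proposition~\ref{lemma-min-trader} (and of Proposition~\ref{prop-unif}). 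So the insider's best response to a linear price is linear with slope $\beta=1/(2\lambda)$, and it remains to verify that a linear insider elicits $\theta=0$ and the matching slope from the market maker.

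Fix $X(v)=\beta(v-p_0)$, write $W=\tilde v-p_0$ and $Y=\beta W+\tilde u$, and recall $\ell_p=\ell_{p,x^*}$, $\mu=\mu_{x^*}$, $\kappa=\kappa_{x^*}$ from Section~\ref{sec-res}; independence and zero means give at once $\mu=\beta\sigma_{\tilde v}^2$ and $\ell_2=\beta^2\sigma_{\tilde v}^2+\sigma_{\tilde u}^2$. When $\tilde u$ is Gaussian, $(\tilde v,Y)$ is jointly Gaussian, so the unconstrained $L^2$-optimal predictor $\E[\tilde v\mid Y]=p_0+\frac{\mathrm{Cov}(\tilde v,Y)}{\mathrm{Var}\,Y}\,Y=p_0+\frac{\beta\sigma_{\tilde v}^2}{\beta^2\sigma_{\tilde v}^2+\sigma_{\tilde u}^2}\,Y$ is \emph{affine} in $Y$ and hence already lies in $\overline{\mathscr P}$ with $\theta=0$; being the overall minimiser of $\E[(\tilde v-P(Y))^2]$ it is in particular the minimiser over $\overline{\mathscr P}$, so the market maker chooses $\theta=0$ and $\lambda=\mu/\ell_2$.

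When $\tilde u$ is Uniform, $\E[\tilde v\mid Y]$ is no longer affine, so I would argue via the convex quadratic $Q(\lambda,\theta)=\E[(\tilde v-p_0-\lambda Y-\theta\,\mathrm{sign}\,Y)^2]$. Using $\mathrm{sign}(Y)^2=1$ a.s., $\E[Y\,\mathrm{sign}\,Y]=\ell_1$, $\E[WY]=\mu$ and $\E[W\,\mathrm{sign}\,Y]=\kappa$, the stationarity equations are $\mu=\lambda\ell_2+\theta\ell_1$, $\kappa=\lambda\ell_1+\theta$, giving $\theta_{\mathrm{unc}}=(\kappa\ell_2-\ell_1\mu)/(\ell_2-\ell_1^2)$ with $\ell_2>\ell_1^2$ by non-degeneracy of $Y$. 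The key point is $\theta_{\mathrm{unc}}\le 0$, i.e.\ $\kappa\ell_2\le\ell_1\mu$. Conditioning on $\tilde u$ and integrating by parts in the Gaussian variable $W$ yields $\kappa=2\mu\,\E[\phi_\tau(\tilde u)]$, where $\tau=\beta\sigma_{\tilde v}$ and $\phi_\tau$ is the $N(0,\tau^2)$ density, while $\ell_1=\E|\beta W+\tilde u|=\beta\kappa+\rho$ with $\rho=\E[\tilde u(2\Phi(\tilde u/\tau)-1)]$; since $\beta\mu=\tau^2$ these reduce $\ell_1\mu-\kappa\ell_2$ to $\mu\big(\rho-2\sigma_{\tilde u}^2\E[\phi_\tau(\tilde u)]\big)$. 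For $\tilde u\sim\mathrm{Unif}[-c,c]$ with $c=\sqrt3\,\sigma_{\tilde u}$ both quantities are elementary, and one finds $\rho-2\sigma_{\tilde u}^2\E[\phi_\tau(\tilde u)]=\frac{\sigma_{\tilde u}^2-\beta^2\sigma_{\tilde v}^2}{c}\big(\Phi(c/\tau)-\tfrac12\big)+\tau\phi(c/\tau)$, which at the candidate slope $\beta^2\sigma_{\tilde v}^2=\sigma_{\tilde u}^2$ collapses to $\tau\phi(c/\tau)>0$; hence $\theta_{\mathrm{unc}}\le0$. Since $\theta\mapsto\min_\lambda Q(\lambda,\theta)$ is convex with minimiser at $\theta_{\mathrm{unc}}\le0$, it is non-decreasing on $[0,\infty)$, so the constrained minimiser over $\{\lambda>0,\theta\ge0\}$ is $\theta=0$, $\lambda=\argmin_\lambda Q(\lambda,0)=\mu/\ell_2>0$.

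In both cases the market maker's best response to a linear insider of slope $\beta$ is $P(x)=p_0+\frac{\mu}{\ell_2}x=p_0+\frac{\beta\sigma_{\tilde v}^2}{\beta^2\sigma_{\tilde v}^2+\sigma_{\tilde u}^2}x$. Together with $\beta=1/(2\lambda)$ this gives the fixed-point system $\lambda=\frac1{2\beta}$, $\lambda=\frac{\beta\sigma_{\tilde v}^2}{\beta^2\sigma_{\tilde v}^2+\sigma_{\tilde u}^2}$, whose unique positive solution is $\beta^*=\sigma_{\tilde u}/\sigma_{\tilde v}$, $\lambda^*=\sigma_{\tilde v}/(2\sigma_{\tilde u})$; with these parameters $(X,P)$ satisfies (i) and (ii), which proves the claim (the minimisation in (ii) read over $\overline{\mathscr P}$, the infimum over $\mathscr P$ itself being attained only in the limit $\theta\downarrow0$). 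The main obstacle is the Uniform case — namely establishing $\theta_{\mathrm{unc}}\le0$, i.e.\ that a strictly positive spread never helps the market maker when $\gamma=0$ and the insider is linear; the Gaussian-conditioning identity $\kappa=2\mu\,\E[\phi_\tau(\tilde u)]$ and the closed-form evaluation of $\rho$ are the technical heart, with everything else being the classical Kyle fixed-point computation.
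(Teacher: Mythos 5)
Your proposal is correct, and its overall skeleton coincides with the paper's: fix the candidate linear insider strategy, exploit the convexity of the market maker's quadratic objective, verify the $\lambda$-stationarity condition at $\theta=0$ and the fact that a positive spread cannot help (your $\theta_{\mathrm{unc}}\le 0$ is exactly the paper's boundary condition $\partial_\theta C(0,\lambda^*)\ge 0$, i.e.\ $\kappa\,\ell_2\le \ell_1\,\mu$), and close with the classical Kyle fixed point $\lambda=1/(2\beta)$, $\lambda=\beta\sigma_{\tilde v}^2/(\beta^2\sigma_{\tilde v}^2+\sigma_{\tilde u}^2)$. Where you genuinely diverge is in how the key inequality is established. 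The paper isolates it as Lemma \ref{lemma-z-u}: in the Gaussian case it observes that $2\lambda^*\tilde u$ and $\tilde v-p_0$ have the same law, so the inequality holds as an identity by symmetry; in the Uniform case it evaluates $\E|Z+Y|-2\E[\mathrm{sign}(Z+Y)Z]$ by a direct double integral, obtaining $\frac{1}{2b}(b^2-3\sigma_{\tilde v}^2)\erf\big(b/(\sqrt2\sigma_{\tilde v})\big)+\frac{\sigma_{\tilde v}}{\sqrt{2\pi}}e^{-b^2/(2\sigma_{\tilde v}^2)}$, which is nonnegative for $b\ge\sqrt3\,\sigma_{\tilde v}$. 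You instead handle the Gaussian case by the projection argument (joint Gaussianity makes $\E[\tilde v\mid Y]$ affine, so the unconstrained $L^2$ minimiser already has $\theta=0$), which is more conceptual, needs no inequality at all, and works for every slope $\beta$, not just the candidate; and you handle the Uniform case through the conditioning identities $\kappa=2\mu\,\E[\phi_\tau(\tilde u)]$ and $\ell_1=\beta\kappa+\rho$, reducing $\ell_1\mu-\kappa\ell_2$ to $\mu\big(\rho-2\sigma_{\tilde u}^2\E[\phi_\tau(\tilde u)]\big)$, which at the candidate slope collapses to $\mu\,\tau\,\phi(c/\tau)>0$. I checked these identities and the elementary Uniform integrals; they are right, and your residual term is precisely the paper's Gaussian-tail term $\frac{\sigma_{\tilde v}}{\sqrt{2\pi}}e^{-b^2/(2\sigma_{\tilde v}^2)}$ at the boundary case $b=\sqrt3\,\sigma_{\tilde v}$, so the two computations are equivalent in content though organised differently: the paper's lemma gives a slightly more general family of bounds ($b\ge\sqrt3\sigma_{\tilde v}$), while your version is checked only at the equilibrium slope, which is all the proposition needs. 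Your explicit remark that $\theta=0$ sits on the boundary of $\mathscr P$ (strictly, $\lambda,\theta>0$ in \eqref{admis-p}) and that the minimisation should be read over the closure matches what the paper does implicitly when it minimises over $\re^2_+$ in Theorem \ref{thm-equil1}, so this is not a discrepancy.
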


\subsection{Numerical results: finding the equilibrium} \label{section-find-eq}
In this section we find the equilibrium points of the game under the assumption that $\tilde v - p_{0}$ is a centred Gaussian with variance $\sigma_{\tilde v}^{2}$ and that $\tilde u$ is either a centred Gaussian with variance $\sigma_{\tilde u}^{2}$ or Uniform on $[-1,1]$. In order to derive the equilibrium we design an ad hoc neural network, which is
%will use neural networks algorithms, which will be 
described in detail in Section \ref{section-neural}. 

In figure \ref{gamma-unif} we plot the optimal $\lambda^{*}$ and $\theta^{*}$ as a function of the risk-aversion parameter $\gamma$ for the Gaussian (left panel) and Uniform (right panel) cases. We also show: the expected insider's optimal market-order size, her optimal revenue, and the market maker value function as a function of the risk-aversion parameter.
As expected in both cases, when the risk-aversion parameter increases, $\lambda^*$ decreases and the size of the spread $\theta^*$ increases. In addition, we observe that when $\gamma$ increases, the market maker gives more weight to revenue, and the insider increases the order size. This however does not necessarily implies an increase of the revenue. From the market maker point of view, we observe a logarithmic increase in its performance function \eqref{equi-mm}, as as $\gamma$ increases.  This is due to the increase in the trader's order size, along with the increase in the revenue made from the spread. 

\begin{figure} [h!]  
  \begin{subfigure}[b]{0.5\textwidth}
    \includegraphics[width=\textwidth, trim=12mm 0 10mm 0, clip]{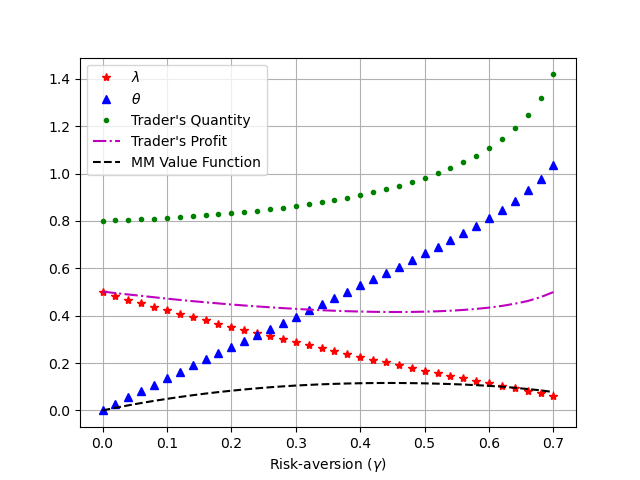}
      \end{subfigure}
 \hfill
  \begin{subfigure}[b]{0.5\textwidth}
    \includegraphics[width=\textwidth,  trim=12mm 0 10mm 0, clip]{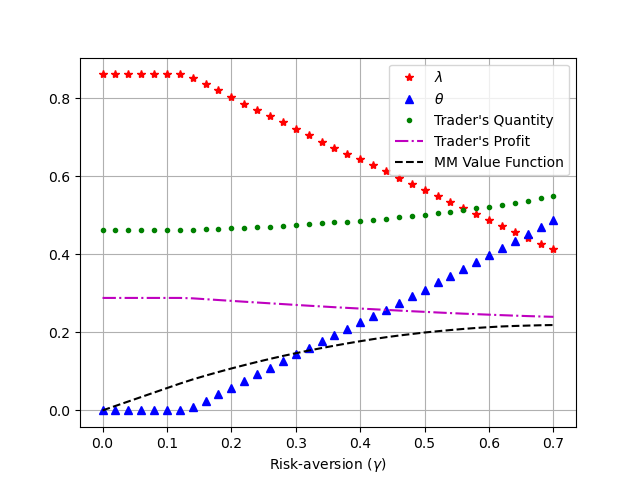}
     \end{subfigure}
    \caption{Plot of the equilibrium price parameters $\lambda^*$ (red) and $\theta^*$ (blue) as a function of the risk aversion $\gamma$. We also show the expected insider's transaction size (green), expected insider profit (purple) and the market maker performance functional (black). The Gaussian noise case is presented on the left panel and the Uniform noise case on the right panel. 
% \red{Segev could you please make the two graphs identical also in font size of all text (not by stretching)? It would be the best if you could make the right panel exactly as the left panel. Also please take out the word ``values'' from the y axis in both, and change the expected MM revenue to MM value function.}
}
   \label{gamma-unif} 
\end{figure}

In figure \ref{oreder-fig} we fix $\gamma=0.5$ and plot the insider's optimal order size, revenue and the corresponding total order flow as a function of the price $v$ at equilibrium, both in the Gaussian and Uniform cases. Following our theoretical results, we observe that the optimal trade size $x^{*}(v)$ is symmetric with respect to $v$, and it is nonlinear in $v$. We observe that in both cases, when the the future price $|\tilde v|$ is roughly larger than one, the insider trades more aggressively, even if her position is detected by the market maker. Similar plot is presented in figure \ref{oreder-fig2} for the cases where $\gamma =10$ and $\gamma =16$. 

\begin{figure} [h!]  
  \begin{subfigure}[b]{0.5\textwidth}
     \includegraphics[width=1\textwidth, trim=10mm 0 10mm 0, clip]{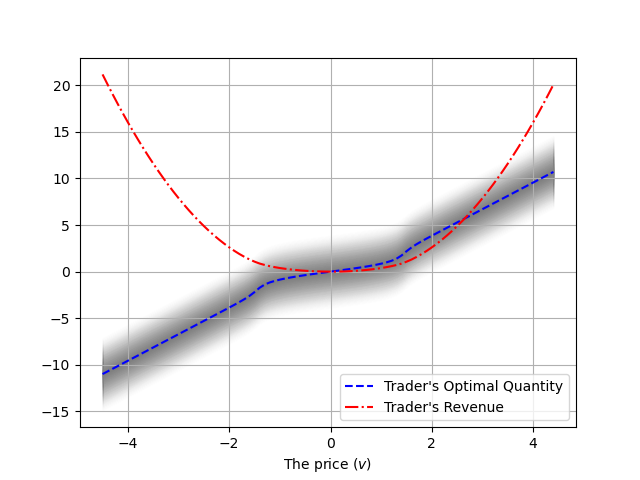}
    \caption{Gaussian Noise}
      \end{subfigure}
 \hfill
  \begin{subfigure}[b]{0.5\textwidth}
    \includegraphics[width=1\textwidth, trim=10mm 0 10mm 0, clip]{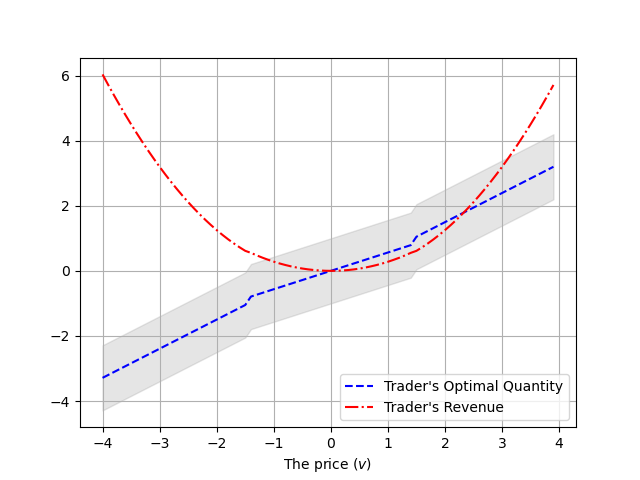}
    \caption{Uniform Noise }
      \end{subfigure}
      \caption{$\gamma=0.5$: the insider expected revenue (red) and transaction size $x^{*}$ (blue) in equilibrium ($y$-axis) vs. $v$ (in the $x$-axis) in the "linear mid-price with a bid-ask spread phase". The total order flow $v+\tilde u$ appears grey spectrum.% The value of $\gamma$ is $0.5$ in both models.
      }\label{oreder-fig}
\end{figure}

\begin{figure} [h!]  
  \begin{subfigure}[b]{0.5\textwidth}
     \includegraphics[width=1\textwidth, trim=10mm 0 10mm 0, clip]{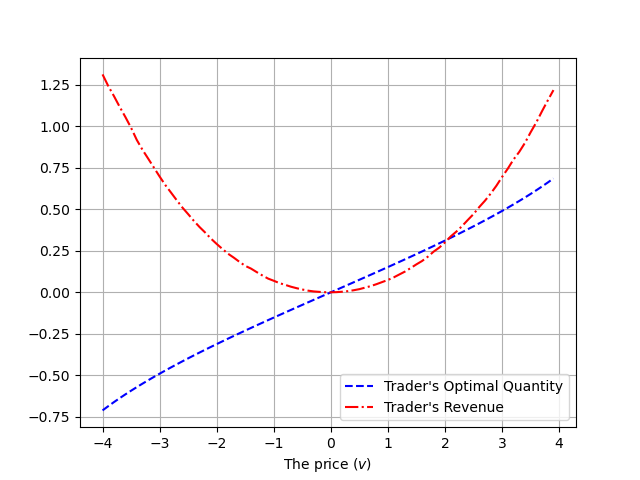}
    \caption{Gaussian Noise}
       \end{subfigure}
 \hfill
  \begin{subfigure}[b]{0.5\textwidth}
    \includegraphics[width=1\textwidth, trim=10mm 0 10mm 0, clip]{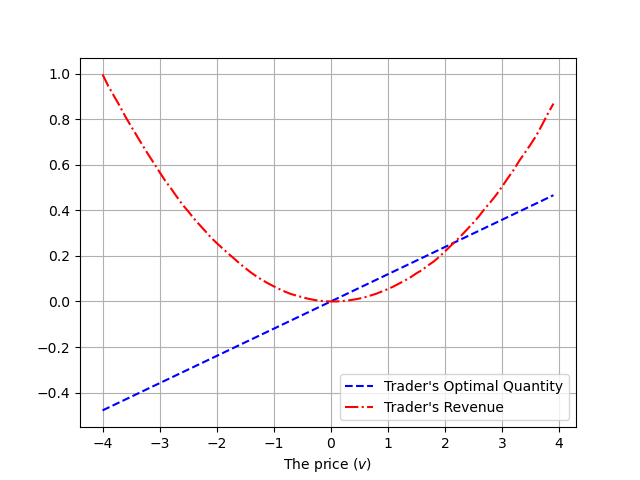}
    \caption{Uniform Noise }
      \end{subfigure}
       \caption{$\gamma=10$ (left) and $\gamma=16$ (right): the insider expected revenue (red) and transaction size $x^{*}$ (blue) in equilibrium ($y$-axis) vs. $v$ (in the $x$-axis) in the "bid-ask spread phase". %The value of $\gamma$ is set to $10$ and $16$ respectively. 
       }\label{oreder-fig2}
\end{figure}

We discuss the effect of the risk aversion factor $\gamma$ on the type of the equilibrium in the model. As Proposition \ref{prop-lin-eq} suggests when $\gamma =0$ we have the classical Kyle equilibrium without a bid-ask spread. We can numerically show that both for Gaussian noise and Uniform noise, equilibrium exists for an interval of positive $\gamma$'s. More precisely, there exists $\gamma_{LBid}>0$ such that for every $0<\gamma < \gamma_{LBid}$ the price in equilibrium is of the form $P^{*}(x) = \lambda^{*}x + \theta^{*}\,\sign(x)$ where both $\theta^{*}$ and $\lambda^{*}$ are positive. For the Gaussian case $\gamma_{LBid} \approx 0.7$ and for the Uniform case $\gamma_{LBid} \approx 1$. 
Moreover there exists $\gamma_{LBid}  <\gamma_{Bid}$ such that for any $\gamma_{LBid} <\gamma  < \gamma_{Bid}$  equilibrium doesn't hold, where in the Gaussian case $\gamma_{Bid} \approx 10$ and in the Uniform case $\gamma_{Bid} \approx 16$. Finally in the third phase, where $\gamma> \gamma_{Bid}$ we have an equilibrium with a bid-ask spread only, namely $\lambda^{*}=0$ and $\theta^{*}>0$. The equilibrium in the this regime a \emph{metastable state}, since once the algorithm arrives to equilibrium, it will not get out with probability asymptotically close to one. However, it does not admit the classical definition of equilibrium point. We state and prove the precise result on the metastable equilibrium in  Section \ref{sec-meta}. These results are summarised in Figures \ref{fig-phase} and \ref{phase2}. 
 
 \begin{figure}[h!]
    \centering
    \includegraphics[width=\textwidth]{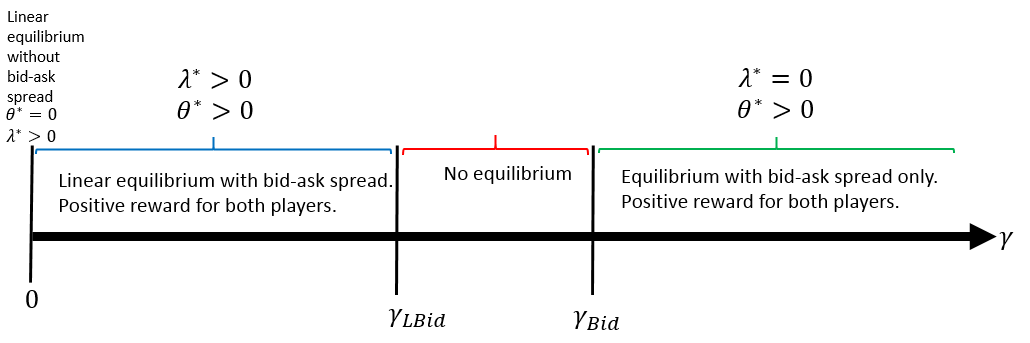}
    \caption{Equilibrium phase transitions. When $0<\gamma<\gamma_{LBid}$ the equilibrium price is $P^{*}(x)=\lambda^{*} x + \theta^{*} sign(x)$. For $\gamma>\gamma_{Bid}$ the equilibrium price function is  $P^{*}(x)= \theta^{*} sign(x)$. When $\gamma_{LBid}<\gamma<\gamma_{Bid}$ no equilibrium was found. }
    \label{fig-phase}
\end{figure}

\begin{figure}[h!]
    \centering
    \includegraphics[width=\textwidth]{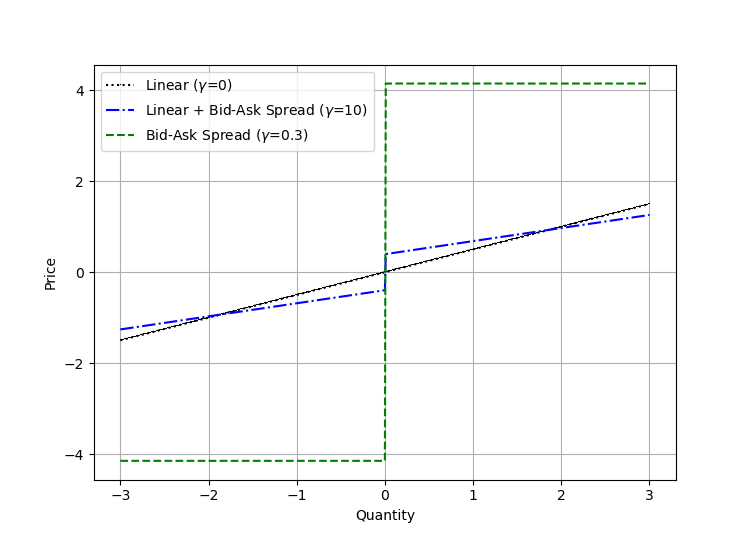}
    \caption{Plot of the price at equilibrium for the Gaussian case at the three different phases: for $\gamma =0$ (black), $\gamma =0.3$ (blue)  and $\gamma =10$ (green).}
    \label{phase2}
\end{figure}

\subsection{Existence of a metastable equilibrium}  \label{sec-meta}
In this section we state and prove the the precise results on the metastable equilibrium which was found numerically in Section \ref{section-find-eq} for $\gamma >\gamma_{Bid}$. In order to define this equilibrium we first state our search algorithm for the equilibrium. 

We first demonstrate the algorithm to the classical Kyle's model where $\gamma$ in \eqref{equi-mm} is set to zero, and therefore the equilibrium price has a zero bid-ask spread. 

  \begin{algorithm}[H]
\caption{Equilibrium price for the classical Kyle model}\label{alg-kyle-0}
\begin{algorithmic}[1]
\State Initialise the price function $P(x) = \lam_0x+b_0$ with some arbitrary weights $\lam_0, b_0$.
\State
Given that $P(x) = \lam_nx+b_n$, find the function $x_n(v)$  which maximize the trader's optimization problem: 
\bd
 R_{v}(x)  =\mathbb E\big[ (\tilde v -(\lam_n x_n(\tilde v)+b_n))\tilde x_n(\tilde v) | \tilde v =v  \big].
\ed

\State
Find $\lam_{n+1}, b_{n+1}$ which minimizes market maker's cost function:
$$
 E\big[ \big(\tilde v - (\lam_{n+1}x_n(\tilde v)+b_{n+1})\big)^{2} \big].
$$
\State \textbf{goto} 2.
\end{algorithmic}
\end{algorithm}

In the following proposition we prove the convergence of the output of Algorithm \ref{alg-kyle-0} to the well known Kyle equilibrium. 
\begin{proposition}\label{prop:alg-kyle-0}
Under the setting of Algorithm \ref{alg-kyle-0}, for any initial values $(\lam_0,b_0)$ we have 
$$
\lim_{n\rr \infty } (\lam_n,b_n) = (\lam^*, 0), \quad \textrm{and } \lim_{n\rr\infty} x_n = x^*(v) = \frac{v}{2\lam^*}, 
$$
where 
\be \label{lam-st-kyle} 
\lam^*=  \frac{\sigma_{\tilde v}}{2\sigma_{\tilde u}}. 
\ee
\end{proposition}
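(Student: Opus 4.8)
The plan is to perform the two inner optimisations of Algorithm~\ref{alg-kyle-0} in closed form, collapse the whole iteration onto a scalar recursion $\lambda_{n+1}=f(\lambda_n)$ with the intercept sequence $(b_n)$ driven by the $\lambda$-iterates, and then analyse the one-dimensional map $f$ as a global dynamical system. Throughout I would take $\mathbb E[\tilde v]=0$, the centred normalisation implicit in this subsection (otherwise replace $\tilde v$ by $\tilde v-p_{0}$ and $b_n$ by $b_n-p_{0}$, and $b_n\to 0$ becomes $b_n\to p_{0}$). Note no Gaussian assumption is needed: only the second moments of $\tilde v,\tilde u$, their independence, and $\mathbb E[\tilde u]=0$ enter.

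First I would solve the two subproblems. With $P(x)=\lambda_n x+b_n$ and $\lambda_n>0$, the insider's conditional payoff is
\be
R_v(x)=\mathbb E\big[(\tilde v-\lambda_n(x+\tilde u)-b_n)\,x\mid\tilde v=v\big]=(v-\lambda_n x-b_n)\,x,
\ee
using $\mathbb E[\tilde u]=0$ and independence of $\tilde u,\tilde v$; this is a strictly concave quadratic, with unique maximiser $x_n(v)=(v-b_n)/(2\lambda_n)$. In step~3 the market maker minimises $\mathbb E[(\tilde v-P(x_n(\tilde v)+\tilde u))^{2}]$ over $(\lambda,b)$ — the criterion \eqref{equi-mm} at $\gamma=0$, read with the executed price as a function of the \emph{total} order flow $\tilde x+\tilde u$ — which is an ordinary least-squares problem whose solution is the best affine predictor of $\tilde v$ given $x_n(\tilde v)+\tilde u$. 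Substituting $x_n(\tilde v)=(\tilde v-b_n)/(2\lambda_n)$ and using $\mathrm{Var}(\tilde v)=\sigma_{\tilde v}^{2}$, $\mathrm{Var}(\tilde u)=\sigma_{\tilde u}^{2}$, independence and the centring, a short computation gives
\be \label{scal-rec}
\lambda_{n+1}=f(\lambda_n):=\frac{2\sigma_{\tilde v}^{2}\,\lambda_n}{\sigma_{\tilde v}^{2}+4\sigma_{\tilde u}^{2}\lambda_n^{2}},\qquad b_{n+1}=\frac{\lambda_{n+1}}{2\lambda_n}\,b_n.
\ee
Since $f>0$ on $(0,\infty)$, the iteration is well defined for every $n$ as soon as $\lambda_0>0$.

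Next I would study $f$ on $(0,\infty)$. Its unique positive fixed point solves $\sigma_{\tilde v}^{2}+4\sigma_{\tilde u}^{2}\lambda^{2}=2\sigma_{\tilde v}^{2}$, i.e.\ $\lambda^{*}=\sigma_{\tilde v}/(2\sigma_{\tilde u})$, which is exactly \eqref{lam-st-kyle}; moreover both $f(\lambda)-\lambda$ and $f'(\lambda)$ carry the sign of $\sigma_{\tilde v}^{2}-4\sigma_{\tilde u}^{2}\lambda^{2}$, so $f$ is strictly increasing on $(0,\lambda^{*})$ with $f\big((0,\lambda^{*})\big)=(0,\lambda^{*})$ and $f(\lambda)>\lambda$ there, and strictly decreasing on $(\lambda^{*},\infty)$ with $f\big((\lambda^{*},\infty)\big)=(0,\lambda^{*})$. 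I would then conclude by cases: if $\lambda_0\in(0,\lambda^{*}]$ the sequence $(\lambda_n)$ is nondecreasing and bounded above by $\lambda^{*}$, hence converges to a positive fixed point, which can only be $\lambda^{*}$; and if $\lambda_0>\lambda^{*}$ then $\lambda_1=f(\lambda_0)\in(0,\lambda^{*})$, reducing to the first case — so $\lambda_n\to\lambda^{*}$ in all cases (and $f'(\lambda^{*})=0$, so the convergence is in fact quadratic near $\lambda^{*}$). Finally the second recursion in \eqref{scal-rec} telescopes, $b_n=2^{-n}(\lambda_n/\lambda_0)\,b_0$, so $b_n\to0$ geometrically because $(\lambda_n)$ is bounded; and then $x_n(v)=(v-b_n)/(2\lambda_n)\to v/(2\lambda^{*})=x^{*}(v)$, as claimed.

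The argument is essentially elementary, so there is no single hard step; the part that needs the most care is the \emph{global} convergence of the scalar recursion rather than mere local stability — hence the split $\lambda_0\le\lambda^{*}$ versus $\lambda_0>\lambda^{*}$ (one step of $f$ lands any overshoot back inside $(0,\lambda^{*})$) together with the monotone-sequence argument. A secondary, bookkeeping point is to check that $\lambda_n$ stays strictly positive for all $n$, which uses $\sigma_{\tilde u}^{2}>0$ and keeps every insider subproblem strictly concave and every market-maker regression nondegenerate; one should also note that step~3 must be read via \eqref{equi-mm} with the executed price depending on the full order flow $\tilde x+\tilde u$, since regressing only on $x_n(\tilde v)$ would give $\lambda_{n+1}=2\lambda_n$ and diverge.
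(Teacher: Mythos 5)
Your proof is correct, and it reaches the same scalar recursion $\lambda_{n+1}=2\lambda_n\sigma_{\tilde v}^2/(\sigma_{\tilde v}^2+4\lambda_n^2\sigma_{\tilde u}^2)$ and fixed point \eqref{lam-st-kyle} as the paper's proof (which invokes \eqref{x-no-sprd} and Kyle's projection formula to get \eqref{lam-n}), but the two arguments differ in how they close the loop. The paper drops the intercept from the analysis entirely (it writes $x_0(v)=\beta_0 v$, implicitly taking $b_0=0$), whereas you carry $b_n$ along explicitly, derive $b_{n+1}=\tfrac{\lambda_{n+1}}{2\lambda_n}b_n$, and telescope to $b_n=2^{-n}(\lambda_n/\lambda_0)b_0\to 0$; this is a genuine addition, since the statement asserts convergence of the pair $(\lambda_n,b_n)$ for arbitrary $b_0$. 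For the $\lambda$-iteration, the paper argues by monotonicity in two cases: non-decreasing and bounded above when $\lambda_0\le\lambda^*$, and non-increasing when $\lambda_0>\lambda^*$; the latter induction is actually not tight as written, because the induction hypothesis $\lambda_{n-1}\ge\lambda^*$ is not preserved — $f$ is decreasing on $(\lambda^*,\infty)$ with $f(\lambda^*)=\lambda^*$, so one step of the map sends any $\lambda_0>\lambda^*$ strictly below $\lambda^*$, after which the sequence increases. Your case split (overshoots land in $(0,\lambda^*)$ after one iteration, then apply the monotone argument on $(0,\lambda^*)$ where $f(\lambda)>\lambda$ and $f$ maps the interval into itself) is the correct global dynamics and repairs this. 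Finally, your observation that step 3 must be read as a regression on the total order flow $x_n(\tilde v)+\tilde u$, consistent with \eqref{equi-mm}, is the right reading: the algorithm's displayed cost function omits $\tilde u$, and regressing on $x_n(\tilde v)$ alone would give $\lambda_{n+1}=2\lambda_n$ and diverge, whereas the paper's own proof uses Kyle's order-flow regression. The only caveat in both treatments is that $\lambda_0>0$ is needed for the insider's subproblem to be well posed, which you note and the paper assumes.
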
 
The proof of Proposition \ref{prop:alg-kyle-0} is given in Section \ref{sec-pfs-3}. 

Next we present an Algorithm for which $\gamma>0$ in \eqref{equi-mm} and therefore the equilibrium price  has a bid-ask spread. Using this algorithm we derive $(\lambda^*, \theta^*)$ from (\ref{lam-th-eq}).

 \begin{algorithm}[H]
\caption{Equilibrium price for Kyle model with bid-ask spread}\label{alg-kyle-bid-ask}
\begin{algorithmic}[1]
\State Initialise the price function $P(x)$ with some arbitrary weights $(\lam_0,b_0,\theta_0)$.
\State
Given that $P(x) = \lam_nx+\theta_n\sign(x) + b_n$, find $x_n(v)$  that maximizes the trader's optimization problem 
\bd
 R_{v}(x)  =\mathbb E\big[ (\tilde v -(\lam_nx(\tilde v)+\theta_n\sign(x)+b_n)) x (\tilde v) | \tilde v =v  \big].
 \ed 

\State
Find $(\lam_{n+1},b_{n+1}, \theta_{n+1})$ that minimizes market maker's cost function,  
 $$
 C_{n+1}(\theta_{n+1},\lam_{n+1}) := E\big[ \big(\tilde v - (\lam_{n+1}x_n(\tilde v)+\theta_{n+1}\sign(x)+b_{n+1})\big)^{2} \big]-\gamma \theta_{n+1} E\big[|x_n(\tilde v)+\tilde u| \big].
 $$
\State \textbf{goto} 2.
\end{algorithmic}
\end{algorithm}

Now we are ready to define the notion of metastable equilibrium.
\begin{definition} [metastable equilibrium] 
We say that $(x^{*}, \lambda^{*}, \theta^{*})$ is a metastable equilibrium if for any $\alpha \in (0,1)$ there exists $\gamma (\alpha)>0$ large enough such that for every $n \geq 0$, if $(x_{n}, \lambda_{n}, \theta_{n}) = (x^{*}, \lambda^{*}, \theta^{*})$, then 
\be \label{met-stable} 
P\big((x_{n+1}, \lambda_{n+1}, \theta_{n+1}) = (x^{*}, \lambda^{*}, \theta^{*})\big) >\alpha.
\ee
 \end{definition} 

In the following Proposition we prove that there exists a metastable equilibrium for the our game and specify it. 
\begin{proposition} \label{prop-big-gamma} 
Assume the same hypothesis as in Proposition \ref{lemma-min-trader}, only now let $\tilde v - p_{0}$ be standard Gaussian and $\tilde u$ be Uniform on $[-1,1]$. 
Then, there exists a metastable equilibrium in which $X$ is linear and $P$ consist only a bid-ask spread, that is,
$$
X(v) = \frac{1}{2 \theta^{*}} (v-p_{0}),  \quad P(x) = p_{0}+ \theta^{*} sign(x), 
$$
where $\theta^{*}$ is the unique root of the function
 \begin{equation*}
 H(\theta):= -\frac{1}{\theta}  \erf(\theta \sqrt{2})- \frac{\gamma}{2} \left(  \erf(\theta \sqrt{2}) \cdot \left(1+\frac{1}{4\theta^2}\right)+ \frac{1}{\theta\sqrt{2\pi}}e^{-2\theta^2}\right)+ 2\theta .
 \end{equation*}
\end{proposition}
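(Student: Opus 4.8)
The plan is to exhibit the candidate $X(v)=\frac{1}{2\theta^{*}}(v-p_{0})$ and $P^{*}(x)=p_{0}+\theta^{*}\sign(x)$ (i.e.\ $\lambda^{*}=0$, $b^{*}=p_{0}$), to show that one step of Algorithm~\ref{alg-kyle-bid-ask} started from $(X,\lambda^{*},\theta^{*})$ returns $(X,\lambda^{*},\theta^{*})$ on an event whose probability tends to $1$ as $\gamma\to\infty$, and to deduce the metastability property. Throughout let $W=\tilde v-p_{0}$ (standard Gaussian), $Y=X(\tilde v)+\tilde u=\tfrac{W}{2\theta^{*}}+\tilde u$, $S=\sign(Y)$, and recall $\erf(z)=2\Phi(z\sqrt2)-1$. \emph{Step 1: the insider's response to $P^{*}$.} With $\lambda=0$ the objective \eqref{inside-prob} becomes $R_{v}(x)=(v-p_{0})x-\theta^{*}x\,\E[\sign(x+\tilde u)]$, where $\E[\sign(x+\tilde u)]=\min(1,\max(-1,x))$ for $\tilde u$ Uniform on $[-1,1]$; equivalently one lets $\lambda\downarrow0$ in Proposition~\ref{prop-unif}. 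One checks that for $|v-p_{0}|\le\theta^{*}$ the unique maximiser is $x^{*}(v)=\frac{v-p_{0}}{2\theta^{*}}\in[-\tfrac12,\tfrac12]$, whereas for $|v-p_{0}|>\theta^{*}$ the branch $x\mapsto(v-p_{0}-\theta^{*})x$ on $\{x\ge1\}$ has positive slope and $R_{v}$ is unbounded above — this degeneracy is exactly what forces the state to be only \emph{metastable}.

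\emph{Step 2: the market maker's response to $X$.} Over $\lambda\ge0$, $\theta\ge0$, $b$, the market maker minimises
$$
C(\lambda,\theta,b)=\E\big[\big(W-(b-p_{0})-\lambda Y-\theta S\big)^{2}\big]-\gamma\theta\,\E[|Y|].
$$
Since $\E[|Y|]$ is independent of $(\lambda,\theta,b)$, $C$ is a strictly convex quadratic in those variables (Hessian $2\,\E[(1,Y,S)^{\top}(1,Y,S)]\succ0$), and by symmetry of $W,\tilde u$ we have $\E[Y]=\E[S]=0$, so the $b$-block decouples and $b^{*}=p_{0}$. Gaussian integrals give $\E[YW]=\tfrac{1}{2\theta^{*}}$, $\E[|Y|]=\ell_{1}(\theta^{*})$ with $\ell_{1}(\theta)=\tfrac12\big(\erf(\theta\sqrt2)(1+\tfrac{1}{4\theta^{2}})+\tfrac{1}{\theta\sqrt{2\pi}}e^{-2\theta^{2}}\big)$, and $\E[SW]=\tfrac{1}{2\theta^{*}}\erf(\theta^{*}\sqrt2)$ (the clipping tail-corrections cancel). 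The KKT conditions for this constrained quadratic at $(\lambda,\theta,b)=(0,\theta^{*},p_{0})$ then read: $\partial_{b}C=0$ (automatic); $\partial_{\theta}C=0$, i.e.\ $2\theta^{*}-\tfrac{1}{\theta^{*}}\erf(\theta^{*}\sqrt2)-\gamma\ell_{1}(\theta^{*})=0$, which is precisely $H(\theta^{*})=0$; and the boundary inequality $\partial_{\lambda}C|_{\lambda=0}\ge0$, which reduces to $\ell_{1}(\theta^{*})\ge\tfrac{1}{2(\theta^{*})^{2}}$ and holds whenever $\theta^{*}\ge1$ (as $\ell_{1}\ge\tfrac12$ always, by Jensen). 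Strict convexity then makes $(0,\theta^{*},p_{0})$ the unique minimiser, so the market maker's best response to $X$ is again $P^{*}$.

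\emph{Steps 3 and 4: the root of $H$ and metastability.} I will show $H(\theta)\to-\infty$ as $\theta\downarrow0$ (since $\ell_{1}(\theta)\sim\tfrac{1}{\sqrt{2\pi}\,\theta}$), $H(\theta)\to+\infty$ as $\theta\to\infty$ (since $\ell_{1}\to\tfrac12$ and $2\theta$ dominates), and $H'>0$ on $(0,\infty)$, so that $H$ has a unique root $\theta^{*}$ and $\theta^{*}\to\infty$ as $\gamma\to\infty$ (in particular $\theta^{*}\ge1$ for $\gamma$ large, which closes Step~2). Monotonicity is a sum of three non-negative contributions: $\tfrac{d}{d\theta}\big(-\tfrac1\theta\erf(\theta\sqrt2)\big)>0$, which follows from the elementary inequality $\erf(\theta\sqrt2)>\tfrac{2\sqrt2}{\sqrt\pi}\theta e^{-2\theta^{2}}$ (compare $e^{-t^{2}}$ with $e^{-2\theta^{2}}$ on $[0,\theta\sqrt2]$); $-\gamma\ell_{1}'(\theta)\ge0$, because $\theta\mapsto\ell_{1}(\theta)=\E\big[\big|\tfrac{1}{2\theta}W+\tilde u\big|\big]$ is non-increasing (the map $\beta\mapsto\E[|\beta W+\tilde u|]$ is non-decreasing, by writing $\beta_{2}W\myeq\beta_{1}W+\sqrt{\beta_{2}^{2}-\beta_{1}^{2}}\,W'$ and applying Jensen); and the constant $+2$. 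Finally, on the event $A=\{|\tilde v-p_{0}|\le\theta^{*}\}$ Steps~1--2 show that one iteration of Algorithm~\ref{alg-kyle-bid-ask} from $(x^{*},0,\theta^{*})$ reproduces $(x^{*},0,\theta^{*})$; the iteration can leave this state only through the contribution of fundamental prices in $A^{c}$, where the insider's sub-problem degenerates, and in the sampled implementation this requires all drawn $\tilde v$ to fall in $A$, an event of probability at least $\big(1-2(1-\Phi(\theta^{*}))\big)^{N}$ for a fixed sample size $N$. Since $\theta^{*}\to\infty$ with $\gamma$, given $\alpha\in(0,1)$ we may choose $\gamma(\alpha)$ making this probability exceed $\alpha$, which is \eqref{met-stable}; unwinding notation gives $X(v)=\frac{1}{2\theta^{*}}(v-p_{0})$, $P(x)=p_{0}+\theta^{*}\sign(x)$ with $\theta^{*}$ the unique root of $H$.

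\emph{Main obstacle.} The delicate point is conceptual rather than computational: since the insider's best response to $P^{*}$ is unbounded once $|v-p_{0}|>\theta^{*}$, the candidate is genuinely not an equilibrium in the classical sense, so the argument must carefully match the probabilistic fixed-point definition of metastable equilibrium with the behaviour of one (sampled) iteration of Algorithm~\ref{alg-kyle-bid-ask}. The hypothesis ``$\gamma$ large'' is exactly what shrinks the offending event $\{|\tilde v-p_{0}|>\theta^{*}\}$ (Step~4) and also secures the boundary KKT inequality $\ell_{1}(\theta^{*})\ge\tfrac{1}{2(\theta^{*})^{2}}$ (Step~2); everything else is convex optimisation plus Gaussian moment computations.
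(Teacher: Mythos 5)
Your proposal is correct and follows essentially the same route as the paper's proof: verify the first-order conditions of the convex market-maker problem at $(\lambda,\theta)=(0,\theta^{*})$, namely $\partial_{\theta}C=0$ (which is exactly $H(\theta^{*})=0$) together with the boundary condition $\partial_{\lambda}C\geq 0$, combine this with the explicit insider response $x^{*}(v)=\frac{v-p_{0}}{2\theta^{*}}$ valid on $\{|v-p_{0}|\leq\theta^{*}\}$, and let $\gamma$ (hence $\theta^{*}$) grow to make the offending event negligible and conclude metastability. The only differences are that you supply details the paper merely asserts (monotonicity of $H$, the Jensen bounds $\ell_{1}\geq\tfrac12$ and $\ell_{1}$ non-increasing) and read the probability in \eqref{met-stable} through the sampled implementation, which are refinements rather than a different argument.
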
 
The proof of Proposition \ref{prop-big-gamma}  is given in Section \ref{sec-pfs-3}.

 \section{Algorithm for finding the equilibrium}\label{section-neural}

In this section we describe the implementation of Algorithm \ref{alg-kyle-bid-ask} for finding the equilibrium. We consider the two previously used settings: standard Gaussian future price $\tilde v$ with standard Gaussian noise $\tilde u$ or the standard Gaussian future price with Uniform noise on $[-1,1]$. 

In order to solve the basic optimization problems, we use Scipy optimize python package, and for solving neural network we use the PyTorch library. In both cases, we use Python 3 on an office CPU with an i7-4930k processor and we choose a large sampling size of $N=10^5$. The average total running time of the Gaussian-Uniform algorithm and the Gaussian-Gaussian algorithm are $3.7s$ and $18.3 s$, respectively. In terms of complexity, the complexity of the Gaussian-Gaussian case is much higher as we do not have a closed form formula for the insider's optimiser. Thus, for each price value $v$, we need to solve an additional optimization problem.

\subsection{Designing a neural network to find the equilibrium}

In step 3 of Algorithm \ref{alg-kyle-bid-ask} we need to find the parameters of the price function $P$ in order to obtain the minimum of a the Market Maker's cost function $C$; this is typically what neural networks are doing.
Here is a minimalist description of this class of approximators, for additional information the reader is referred to \cite{vapnik} and \cite{Goodfellow}.

%The implementation of Algorithm \ref{alg-kyle-bid-ask} fits naturally to the framework of neural networks. In order to make our work more accessible, we give here a short introduction  to the topic. For a complete survey the reader is referred to \cite{vapnik} and \cite{Goodfellow}.
% Before we describe our implantation we recall a few basic concepts from the area of neural networks.

Neural networks are parametrized functions, mapping a $K$ dimensional vector of \emph{inputs} $X$ to a vector of \emph{outputs} $Y$. In our case the output will be a vector of length $2$, namely $(\lam^*,\theta^*)$. 
In order to produce the output, inputs are first mapped to a \emph{hidden layer of $h$ neurons}, by combining linearly the inputs via \emph{weights} $(w_{i,k})_{i,k}$ and a biases $(b_i)_i$, and then applying an \emph{activation function} $\phi$ to this combination:
$X\mapsto (\phi(\sum_{k=1}^K w_{i,k} X_k + b_i))_{1\leq i\leq h}$. This operation is repeated several times. Ultimately the last hidden layer is mapped to the output in a similar way.

One of the main features of neural networks is that the weights and biases of each layer can be \emph{trained} to minimize a \emph{loss function}, thanks to automatic differentiation methods (see \cite{geeraert2017mini} for detailed applications of Adjoint
Algorithmic Differentiation in finance).
Once a loss function is specified, the theory of statistical learning studies how minimizing the expectation of the loss function over a distribution can be performed on a sample of this distribution (see for instance \cite{vayatis1999distribution} or more recently \cite{choromanska2015loss} for related work on deep neural networks). One typical \emph{empirical loss function} is the 
%A neural network consists of three major parts: neurons, activation functions and a loss function. The basic units of a neural network are called neurons. They apply a function on the input, and produces a single output. Usually, each input is multiplied by a weight and then all the weighted inputs are added together with a bias. Thereafter, the sum is passed through an activation function.  The main objective of the neural network is to find a set of weights that define a function which minimises the loss function. Training the network is equivalent to minimising its loss function. Usually, the network is trained by using a set of pairs $\{(x_i,y_i)\}_{j=1,...,N}$, where $x_{i}$ is an inputs and $y_{i}$ is the corresponding output. The training produces the required set of weights that minimizes the loss function. An example of a loss function is the
well known $L^{2}$ loss function:
\be \label{loss}
    Loss = \frac{1}{N} \sum_{j=1}^N\big\|f_{W,b}(x_j)- y_j \big\|^2 \xrightarrow{N\rightarrow \infty} \mathbb{E}\|f_{W,b}(\bold{x})- \bold{y} \big\|^2, 
\ee
where $\bold{x}$ and $\bold{y}$ are random variables with the same law as the inputs and the outputs, respectively. Clearly convergence to the expectation takes place only under certain assumptions on the distributions of the inputs and the outputs.

We will encode step 3 of Algorithm \ref{alg-kyle-bid-ask} in  architecture of a neural network in order to leverage on their learning capabilities (via automatic differentiation). Note that in our case the distribution of the datasets are known, hence we can generate very large samples using Monte-Carlo methods. The convergence of the empirical loss function to the theoretical one in \eqref{loss} is guaranteed. Moreover, note
%The well developed neural networks tools and their ability to calculate derivatives quickly, allows us to solve certain types of optimisation problems efficiently and accurately, as in our case for step 3 of Algorithm \ref{alg-kyle-bid-ask}. Note 
that step 2 in this Algorithm \ref{alg-kyle-bid-ask} can be solved efficiently by means of Propositions \ref{gaus} and \ref{prop-unif}. 

The iterations between step 2 and step 3 of Algorithm \ref{alg-kyle-bid-ask} can be seen as an \emph{adversarial approach}, using the language of the machine learning community (see \cite{cao2020connecting} for connections between adversarial learning methods and mean field games). In Stackelberg games one player computes her optimal control for multiple different scenarios, and then the other player chooses the scenario that is the best for her (see \cite{moon2018linear} for details). In our Kyle game, the insider computes his optimal response for any value of $\lambda$ and $\theta$, and then the market maker chooses  $\lambda$ and $\theta$ that minimizes her costs.
It is straightforward that the market maker's choice is adversarial to the informed trader, and it will be implemented via a neural network. We are thus iterating sequences of (1) learning of the neural network, (2) adversarial choice by the insider, up to convergence. This is compatible with the definition of \emph{adversarial learning}.

Figure \ref{fig:twoLayer} describes the two layers network architecture which is used to solve the optimisation problem in step 3 of Algorithm \ref{alg-kyle-bid-ask}. In the first layer we have two neurons, one which multiplies the order flow by $\lambda$ and adds a bias parameter $b$. The other neuron receives the order flow input and applies the sign activation function to it. In the second layer, the output of the bottom neuron is multiplied by the parameter $\theta$ and combined with the output of the top neuron. Overall the output takes the form $y=\lambda x +\theta sign(x)+b$ which is compatible with \eqref{admis-p}.

\begin{figure}[h!]
    \centering
    \includegraphics[scale=0.38]{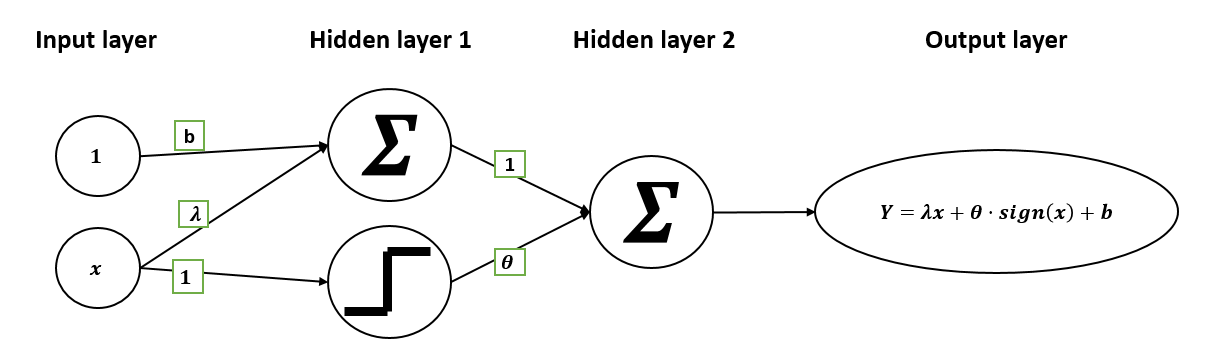}
    \caption{Two layers network for solving the market maker's optimization problem}
    \label{fig:twoLayer}
\end{figure}

Based on this neural network, we present an algorithm which derives the equilibrium price via finding $(\lambda^{*},\theta^{*})$ and the optimal market order of the insider $x^{*}(v)$. 

 \begin{algorithm}[H] \label{nn-alg}
\caption{Two layers network for the market maker price with bid-ask spread}\label{alg-kyleBidAsk-NN}
\begin{algorithmic}[1]
\State Initialise the price function $P(x)$ with some arbitrary weights $(\lam_0,b_0,\theta_0)$. 
\State
Sample $v_1,...,v_N$ i.i.d distributes according to the law of $\tilde v$.
\State
Find $x_n(v_{i})$, $i=1,...,N$ that maximize the trader's optimisation problem: 
\bd
 R_{v}(x)  =\mathbb E\big[ (\tilde v -(\lam_n x_n(\tilde v)+b_n)+\theta_n sign(x_n) )\tilde x_n(\tilde v) | \tilde v =v_{i}  \big].
\ed

\State
Sample a set of $u_1,...,u_{N}$ i.i.d distributes according to the law of $\tilde u$.
\State
Train the neural network presented in Figure \ref{fig:twoLayer} with inputs $\{(u_1+x_n(v_{1}), v_1),...,(u_{N}+x_n(v_{N}),v_N)\}$ and $(\lam_{n},b_{n},\theta_n)$ as the initial weights.
\State
Extract the weights $(\lam_{n+1},b_{n+1},\theta_n)$ that minimize the loss function
\bd
\frac{1}{N} \sum_{j=1}^N\big(P(x_n(v_{j}) +u_j) - v_j \big)^2 -\gamma \theta\frac{1}{N}\sum_{j=1}^N |x_n(v_{j}) +u_j|. 
\ed
\State
Update the price function $P$ according to the new weights $$ P(x)=\lam_{n+1}x+\theta_n sign(x) + b_{n+1} $$

\State \textbf{goto} 2.
\end{algorithmic}
\end{algorithm}

In figure \ref{conv-alg} we illustrate the convergence of our neural network algorithm for $\gamma=0.5$. We plot $(\lambda_{n},\theta_{n})$ as a function of $n$. We observe that the convergence of the algorithm is very fast, as we achieve convergence to equilibrium in accuracy of $10^{-6}$ after only $14$ iterations in the Gaussian noise case and after $9$ iterations in the Uniform noise case. The number of required iterations depends also on the risk-aversion parameter. Our results suggest that when the risk-aversion parameter is close to zero it takes an average of $9$ iterations to converge and when the risk-aversion parameter is close to $1$ it takes an average of $13$ iterations to converge.

\begin{figure}[h!]
  \begin{subfigure}[b]{0.5\textwidth}
    \includegraphics[width=1.1\textwidth]{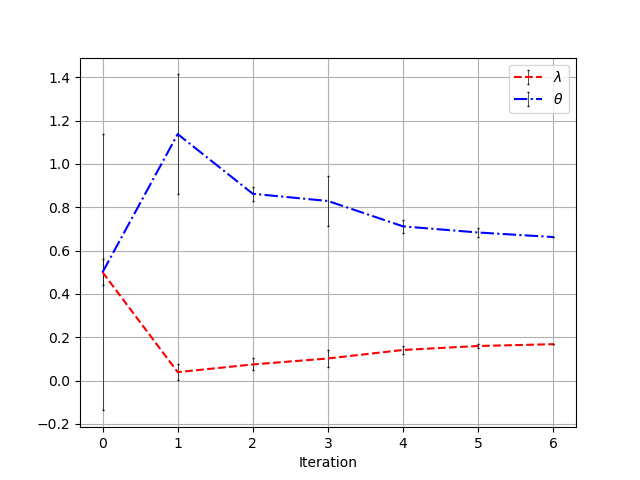}
    \caption{Normal Noise Case}
     
  \end{subfigure}
  \hfill
  \begin{subfigure}[b]{0.5\textwidth}
    \includegraphics[width=1.1\textwidth]{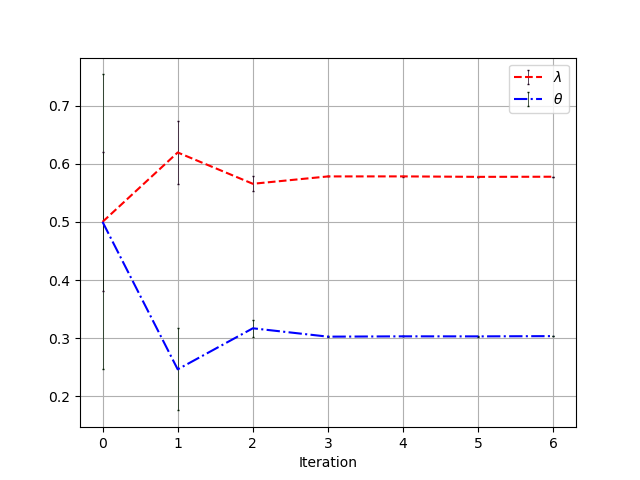}
    \caption{Uniform Noise Case}
     
  \end{subfigure}
  \caption{Plot of $\lambda_{n}$ and $\theta_{n}$ from Algorithm \ref{alg-kyleBidAsk-NN} as a function of $n$--the number of iterations (i.e. the numbers of times running from step 2 to step 8). The risk-aversion parameter is $\gamma=0.5$.} \label{conv-alg}
\end{figure}

% We note that for every $j=1,...,N$, 
% $
%  \lim_{N\rr\infty} \frac{1}{N}\sum_{j=1}^N |x_j +u_j| = E\big[|(x(\tilde v) +\tilde u|\big], 
% $
% and so the addition in step (v) is consistent with (\ref{equi-mm}). 

%%%%%%%%%%%%%%%%%%%%%%%%%%%%%%%%%%%%%%%%%%%%%%%

\subsection{Enlarging the class admissible prices}  \label{sec-en-admis}

A possible generalisation of  Algorithm \ref{alg-kyleBidAsk-NN} would be to derive the price $P$ from an arbitrary neural network as described in Algorithm \ref{alg-kyle-3}. This will allow us to depart from the class of admissible prices $\mathscr{P}$ in \eqref{admis-p}. However, if the price function $P$ is general, deriving the insider's strategy becomes more involved as the results of of Propositions \ref{gaus} and \ref{prop-unif} do not apply. As a result the implementation of such algorithm requires to solve numerically at each steps a large number of optimisation problems. This creates additional discretization errors that are detrimental for the convergence to equilibrium.

  \begin{algorithm}[H]
\caption{Multi-layer network for the market maker price}\label{alg-kyle-3}
\begin{algorithmic}[1]
\State Initialise the price function $P(x)$ with a random seed.  
\State
Sample $v_1,...,v_N$ i.i.d distributed according to the law of $\tilde v$ and independently sample a matrix $U=(u_{i,j})_{i=1,...,M, \, j=1,...N}$ where $u_{i,j}$ are i.i.d distributed according to the law of $\tilde u$.
\State
Solve the optimization problem 
\bd
\min_{x_n(v_{j})} \frac{1}{M} \sum_{i=1}^M\big(P(x_{n}(v_{j})+u_{i,j}) - v_j\big)x_n(v_{j}), \quad \textrm{for any } j=1,...,N.
\ed 
\State
Sample a new set of $u_1,...,u_N$ i.i.d distributed according to the law of $\tilde u$.
\State
Train the neural network with % the pairs input
${\{(u_1+x_n(v_{1}), v_1),...,}{(u_N+x_n(v_{N}),v_N)\}}$ as inputs, using as initial weights the ones obtained at the previous iteration.
%and set the initial network weights as the optimized network weights achieved at the previous iteration.
%\State
%Extract the new optimized weights which minimize the
For the training, use the following loss function
\bd
\frac{1}{N} \sum_{j=1}^N\big(P(x_{n}(v_{j}) +u_j) - v_j \big)^2 -\gamma \theta\frac{1}{N}\sum_{j=1}^N |x_n(v_{j}) +u_j|. 
\ed
\State
Update the price function $P$ according to weights extracted in the previous step 

\State \textbf{goto} 2. 
\end{algorithmic}
\end{algorithm}

In order to enlarge the class of admissible price functions while preserving the complexity of Algorithm \ref{alg-kyleBidAsk-NN} we tested this algorithm with higher degree polynomials in addition to the sign function. More precisely, we tested Algorithm \ref{alg-kyleBidAsk-NN} with the following classes of price functions
\begin{align*}
    &\mathscr{P}_3 (x) = \{P(x)=\lambda_1 x + \lambda_3 x^3 + \theta \textrm{sign}(x)+p_{0}, \, \lambda_1,\lambda_3, \theta > 0\},
    \\&
    \mathscr{P}_5 (x) = \{P(x)=\lambda_1 x + \lambda_3 x^3 +\lambda_5 x^5+ \theta \textrm{sign}(x)+p_{0}, \, \lambda_1,\lambda_3,\lambda_5, \theta > 0\},
    \\&
    \mathscr{P}_7 (x) = \{P(x)=\lambda_1 x + \lambda_3 x^3 +\lambda_5 x^5+ +\lambda_7 x^7+ \theta \textrm{sign}(x)+p_{0}, \, \lambda_1,\lambda_3,\lambda_5,\lambda_7 \theta > 0\}.
\end{align*}

In all cases, the weights, besides $\lambda_1$ and $\theta$ converged to zero. This leads us to the following conjecture.
\begin{conjecture}
Let $\mathscr{P}_{n}$ be the class of polynomial price functions of order $n$ which incorporates a bid-ask spread. That is,
$$
{\mathscr{P}}_{n} = \{P(x)= Poly_{n}(x)+ \theta sign(x) \}.
$$
Then, there exists an equilibrium with all coefficients equal to zero except for a linear coefficient and the spread $\theta$, for $\gamma \in (\gamma_{LBid}, \gamma_{Bid})$. 
\end{conjecture}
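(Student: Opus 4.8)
The plan is to show that a linear-plus-spread price remains a market-maker best response even after the admissible class is enlarged from $\mathscr{P}$ to $\mathscr{P}_n$, so that the equilibrium structure is unchanged. The natural candidate is the equilibrium $(x^{*},\lambda^{*},\theta^{*})$ produced by Theorem~\ref{thm-equil1} whenever \eqref{lam-th-eq} admits a nonnegative solution; I regard it as a point of $\mathscr{P}_n$ whose coefficients of $x^{3},\dots,x^{n}$ vanish. (In the interval $(\gamma_{LBid},\gamma_{Bid})$ where \eqref{lam-th-eq} has no such solution one must first secure existence inside $\mathscr{P}_n$ by a fixed-point argument exploiting the extra coordinates, and only then show that those coordinates vanish; I concentrate on this second, substantive step.) Because the candidate price $P^{*}(x)=\lambda^{*}x+\theta^{*}\sign(x)+p_{0}$ still lies in $\mathscr{P}$, Proposition~\ref{lemma-min-trader} shows that $x^{*}$ is the insider's unique best response, so condition~(i) of the equilibrium definition holds verbatim; it remains to verify condition~(ii) over $\mathscr{P}_n$.

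The crucial simplification is that, in condition~(ii), the insider's strategy $x^{*}$ is held fixed while the market maker optimises the price, so there is no implicit dependence to track. Writing $V:=\tilde v-p_{0}$ and $y:=x^{*}(\tilde v)+\tilde u$, the objective \eqref{equi-mm} over $P(x)=\sum_{j=0}^{n}c_{j}x^{j}+\theta\sign(x)$ becomes
\be
J(c,\theta)=\E\Big[\big(V-\textstyle\sum_{j=0}^{n}c_{j}y^{j}-\theta\sign(y)\big)^{2}\Big]-\gamma\theta\,\E|y|,
\ee
a strictly convex quadratic in $(c,\theta)$ whenever the features $1,y,\dots,y^{n},\sign(y)$ are linearly independent in $L^{2}$, so it has a unique minimiser and it suffices to locate a single critical point. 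The involution $(V,y)\mapsto(-V,-y)$, inherited from the symmetry of $\tilde u$ and the oddness of $x^{*}$ in $v-p_{0}$, leaves $J$ invariant under $P\mapsto -P(-\cdot)$; by uniqueness of the minimiser this forces the optimal constant and every even-power coefficient to vanish, reducing the problem to the odd features $y,y^{3},\dots,y^{n},\sign(y)$.

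With the even powers gone, the stationarity equations for the remaining parameters are $\E[D\,y]=0$ and $\E[D\,\sign(y)]=-\tfrac{\gamma}{2}\E|y|$ for the linear and spread coefficients, together with
\be \label{orth-cond}
\E\big[D\,y^{k}\big]=0,\qquad k=3,5,\dots,n,
\ee
where $D:=V-\lambda^{*}y-\theta^{*}\sign(y)$ is the equilibrium residual. Using $\E[Vy]=\mu$, $\E[y^{2}]=\ell_{2}$, $\E[y\,\sign(y)]=\ell_{1}$ and $\E[V\,\sign(y)]=\kappa$, the first two equations reduce exactly to the system \eqref{lam-th-eq} defining $(\lambda^{*},\theta^{*})$ and are therefore satisfied. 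Hence the entire proof collapses to establishing the orthogonality relations \eqref{orth-cond}; once they hold, strict convexity makes the candidate the unique market-maker optimum in $\mathscr{P}_n$, and $(x^{*},\lambda^{*},\theta^{*})$ is an equilibrium of the asserted form.

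Establishing \eqref{orth-cond} is the main obstacle. The involution above already makes each integrand $D\,y^{k}$ even, so no parity argument can annihilate it, and \eqref{orth-cond} is a genuine analytic constraint on the moments of $y=x^{*}(\tilde v)+\tilde u$. The guiding heuristic is the Kyle limit $\gamma\to0$: there $\theta^{*}=0$, $x^{*}(v)=(v-p_{0})/2\lambda^{*}$ is linear, $(V,y)$ is jointly Gaussian, and the regression residual $D=V-\E[V\mid y]$ is independent of $y$, so $\E[D\,y^{k}]=\E[D]\,\E[y^{k}]=0$ and \eqref{orth-cond} holds exactly. For $\gamma>0$ the insider's optimiser is genuinely nonlinear (Proposition~\ref{lemma-min-trader}), $y$ ceases to be Gaussian, and this independence is lost; the plan is to expand $x^{*}$ about the linear baseline using its implicit characterisation \eqref{deriv-cond}, and to evaluate the odd cross-moments in \eqref{orth-cond} via Gaussian integration by parts, showing that the leading correction and, one hopes, every higher-order term cancels. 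Deciding whether this cancellation is exact or merely asymptotic as the spread grows --- precisely the ambiguity that the numerics, which report the higher coefficients as vanishing within tolerance, cannot resolve --- is the crux and the reason the statement is advanced only as a conjecture.
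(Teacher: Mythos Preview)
The paper does not prove this statement at all: it is labelled a conjecture and is supported only by the numerical observation that, when Algorithm~\ref{alg-kyleBidAsk-NN} is run with the enlarged price classes $\mathscr{P}_3,\mathscr{P}_5,\mathscr{P}_7$, the higher-order weights converge to zero. There is no analytical argument in the paper to compare your proposal against.

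Your reduction is sound as far as it goes. Freezing $x^{*}$ and viewing the market-maker objective as a strictly convex quadratic in the price coefficients is correct; the involution $(V,y)\mapsto(-V,-y)$ does kill the constant and even-power coefficients; and the remaining first-order conditions are exactly your \eqref{orth-cond}. You are also right that in the Gaussian Kyle limit the residual $D$ is independent of $y$, so \eqref{orth-cond} holds trivially there, and that for $\gamma>0$ the nonlinearity of $x^{*}$ destroys this independence. Your honest conclusion---that one cannot decide analytically whether the higher odd moments of $D$ against $y$ vanish exactly or only approximately---is precisely why the paper leaves the statement as a conjecture.

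There is, however, a structural gap in how you set up the argument. Your entire framework presupposes a candidate $(\lambda^{*},\theta^{*})$ coming from Theorem~\ref{thm-equil1}, i.e.\ a solution of \eqref{lam-th-eq} inside $\mathscr{P}$, which you then test against the larger class $\mathscr{P}_n$. But the conjecture is stated for $\gamma\in(\gamma_{LBid},\gamma_{Bid})$, which in the paper is exactly the regime where \emph{no} equilibrium in $\mathscr{P}$ was found (see Figure~\ref{fig-phase}). So on the very interval in question there is no $\mathscr{P}$-equilibrium to extend, and your parenthetical---``one must first secure existence inside $\mathscr{P}_n$ by a fixed-point argument exploiting the extra coordinates, and only then show that those coordinates vanish''---is not a side remark but the whole problem. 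If such a fixed point existed with vanishing higher coefficients, it would already be a $\mathscr{P}$-equilibrium, contradicting the numerics for that range; conversely, if the higher coefficients do not vanish at the fixed point, the conjecture is false. Either the $\gamma$-interval in the conjecture is a slip (and the intended range is $(0,\gamma_{LBid})$, where your programme is coherent), or the conjecture is asserting something your approach cannot even formulate.
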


\section{Proofs of Proportions \ref{lemma-min-trader},  \ref{gaus} and \ref{prop-unif}} \label{sec-prf} 
This section is dedicated to the proofs of Proportions \ref{lemma-min-trader},  \ref{gaus} and \ref{prop-unif}. Before we start with the proofs, we introduce some notation and auxiliary lemmas. 

\begin{proof}[Proof of Proposition \ref{lemma-min-trader}]
Let $P \in \mathscr P$, and fix $v \in \re$. From (\ref{admis-p}) and since $\tilde p =P(\tilde x +\tilde u)$ we have 
\be \label{r1} 
\begin{aligned} 
R_{v}(x) &:=\mathbb E\big[ \tilde \pi(X,P) | \tilde v =v  \big] \\
&= \E\big[ \big(\tilde v -\lambda (x+\tilde u)-\theta \textrm{sign}( x +\tilde u)-p_{0}\big)x| \tilde v =v  \big] \\
&= \big( v -\lambda x-\theta \E[ \textrm{sign}( x +\tilde u)]-p_{0}\big)x,
\end{aligned} 
\ee
where $\lam, \theta >0$. We have used the fact that $E(\tilde u)=0$ and the independence between $\tilde v$ and $\tilde u$.  
Recall that $F_{\tilde u}$ is the cumulative distribution function of $\tilde u$. Since $\tilde u$ is symmetric we have 
$$
\E[ \textrm{sign}( x +\tilde u)] = 1-2F_{\tilde u}(-x).
$$  
Therefore (\ref{r1}) becomes   
\be\label{r2} 
\begin{aligned}
R_{v}(x) &= \big( v -\lambda x-\theta(1-2F_{\tilde u}(-x))-p_{0}\big)x \\ 
&=-\lam x^2+(v-p_0-\theta)x +2\theta xF_{\tilde u}(-x). 
\end{aligned} 
\ee
Note that $R_v(0) =0$. Now assume that $v-p_{0} >0$. Since $F_{\tilde u}(0) = 1/2$, it is easy to verify that $R'_v(0)>0$ and therefore there exists $x>0$ such that $R_v(x)>0$. 
Moreover note that $\lim_{x\rr \pm \infty } R_{v}(x) = -\infty$, and that for any $x>0$ we have $R_v(x) >R_v(-x)$.  It follows that there exists $0<x^*(v)<\infty $ that maximizes $R_{v}$. Moreover $x^{*}(v)$ is a solution to the equation $R'_{v}(x)=0$, which is equivalent to \eqref{deriv-cond}.

For the case $v-p_{0}=0$ we have that $R_{v}(x)<0$ when $x\not =0$, and therefore $x^*(v) = 0$ is the unique maximizer of $R_v$.  

In the case when $v-p_0 <0$ we have $R'_v(0)<0$, and by repeating the same steps as in the case $v-p_0 >0$, it follows that there exists $\infty<x^*(v)<0 $ that maximizes $R_{v}$. Moreover $x^{*}(v)$ is a solution to the equation $R'_{v}(x)=0$.

We conclude that when $v-p_{0}\not = 0$ there exists a unique maximum to (\ref{r1}) on $(-\infty, \infty)$ which we denote by $x^{*}=x^{*}(v)$. We also have that $R_{v}(x^{*}(v))>0$ and $\sign(x^*(v)) = \sign(v-p_0)$ . Moreover, when $v-p_{0}=0$, the unique maximum to (\ref{r1}) is $x^{*}=0$, for which we have $R_{0}(0)=0$.
\end{proof} 
 
\begin{proof}[Proof of Proposition  \ref{gaus}] 
Again we assume that $v-p_0>0$, where the case $v-p_0<0$ can be handled similarly. The existence of a unique maximizer $x^*(v)$ to $R_v$ is known from Proposition \ref{lemma-min-trader}. It is also known that $x^*(v)$ satisfies $R'_v(x^*(v))=0$, which in this case is given by \eqref{normal-eqn}. Hence it remains to identify the zeros of $R'_v$. 

Recall that $\tilde u$ is a mean-zero Gaussian with variance $\sigma^2_{\tilde u}$. From \eqref{r2} it follows that the second derivative of the $R_{v}$ is given by  
\begin{equation*}
 R''_{v}(x) = -2\lam -4\theta f_{\tilde{u}}(-x)+ 2\theta x f_{\tilde{u}}^{'}(-x).
\end{equation*}
Without loss of generality, we assume that $\sigma^2_{\tilde u}=1$, so we have
\begin{equation*}
R''_{v}(x) = -2\lam -2\theta \frac{1}{\sqrt{2\pi}}e^{-x^2/2}\left (2-x^{2} \right ).
\end{equation*}
It can be easily verified that $R''_{v}(x)$ is monotone increasing on $[0,2)$ and then decreasing on $[2,\infty)$. Since $R''_{v}(0)<0$ we get that the derivative of $ R'_{v}(x)$ satisfies one of the two cases: (i) negative on $[0,\infty)$, (ii) alternate signs twice on $[0,\infty)$, negative-->positive-->negative. Combining this with the fact that $ R'_{v}(0)>0$ and $\lim_{x\rr  \infty } R'_{v}(x) = -\infty$, we get that in case (i) there is only one solution to $R'_{v}(x)=0$ which is the global maxima. 

In case (ii) there are either one or three solutions ($0<x_1<x_2<x_3$) to $R'_{v}(x)=0$. The one solution case is clearly a global maxima as in case (i). If in case (ii) there are three solutions, then either $x_1$ or $x_3$ must be the unique global maxima.   

\end{proof}

%\begin{lemma} 
%Let $x^{*}= x^{*}(v)$ be as in Lemma \ref{lemma-min-trader}. Then for any $\dl\geq 0$ we have 
%$$
%x^{*}\big((1+\dl)p_{0}\big) = -x^{*}\big((1-\dl)p_{0}\big)
%$$
 %\end{lemma} 
%\begin{proof} 
%The proof follows immediately from Lemma \ref{lemma-min-trader} and from standard Gaussian identities.   
%\end{proof} 

\begin{proof} [Proof of Proposition \ref{prop-unif} ]
By proposition \ref{lemma-min-trader}, if $v-p_0>0$ then $x>0$. We also note that when $\tilde u$ is distributed uniformly on $[-1,1]$ and \eqref{r1} is given by 
\be \label{r-unif} 
  R_{v}(x) =
  \begin{cases}
     -(\lam+\theta)x^2 +(v-p_0)x,      & \text{for } 0\leq x \leq 1, \\%[1ex]
    -\lam x^2+(v-p_0-\theta)x, & \text{for } x>1.
  \end{cases}
\ee
Define $R_{1}(x) =-(\lam+\theta)x^2 +(v-p_0)x$ and $R_{2}(x)=-\lam x^2+(v-p_0-\theta)x$, where we note that both $R_1$ and $R_2$ are convex parabolas. The maxima $x_i$ of $R_i$ , $i=1,2,$ are obtained at
\be \label{x-i}
x_1 = \frac{v-p_0}{2(\lam+\theta)},  \quad x_2 = \frac{v-p_0-\theta}{2\lam}, 
\ee
and we have 
\be\label{r-i}
R_1(x_1) = \frac{(v-p_0)^2}{4(\lam+\theta)},  \quad R_2(x_2) = \frac{(v-p_0-\theta)^2}{4\lam}. 
\ee
Note that the points $(x_i, R_{i}(x_{i}))$ $i=1,2$ appear both in the graph of $R_{v}$ if $0\leq x_1 \leq 1$ and $x_2\geq 1$ which translates to 
$v-p_0 \leq 2(\lam+\theta)$ and $v-p_0 \geq 2\lam +\theta$ (respectively). It follows in order to find the global maxima $x^*$ when $2\lam +\theta \leq v-p_0 \leq 2(\lam+\theta)$ we need to compare $R_1(x_1)$ to $R_2(x_2)$. 

We get that $R(x_1) > R(x_2)$ , i.e. $x^*=x_1$, when $2\lam +\theta  \leq v-p_0 \leq \bar z$ where $\bar z = \lam+\theta + \sqrt{(\lam+\theta)\lam}$. Moreover, when $\bar z \leq v-p_0\leq 2(\lam+\theta)$, then $R(x_1) > R(x_2)$ which means that $x^*=x_2$ . In order to complete the proof we need to show that  
\be \label{unif-verf} 
\begin{aligned} 
x^*=x_1, & \quad \textrm{when }  0< v-p_0 \leq2\lam +\theta,  \\
x^*=x_2, & \quad \textrm{when }  2(\lam+\theta)< v-p_0.  
\end{aligned} 
\ee
Note that if $0< v-p_0 \leq2\lam +\theta$ then $(x_1,R_1(x_1))$ appears in the graph of $R_v$ but $(x_2,R_2)$ doesn't. From (\ref{r-unif}) it follows that $R_v$ is decreasing for $x\geq 1$. Since $x_1$ is the maximum of $R_v$ on $[0,1]$ if follows that $x^*=x_1$. 

If $2(\lam+\theta)< v-p_0$, then $(x_2,R_2(x_2))$ appears in the graph of $R_v$ but $(x_1,R_1)$ doesn't. From (\ref{r-unif}) it follows that $R_v$ is increasing on $[0,1]$. Since $x_2\geq 1$ is the maxima of $R_2$, it is also the maxima of $R_v$ and $x^*=x_2$, and we verify (\ref{unif-verf}). 
\end{proof} 

\section{Proofs of Theorems \ref{thm-equil1} and Proposition \ref{prop-lin-eq}} \label{sec-pfs2} 
\begin{proof}[Proof of Theorem \ref{thm-equil1}]
For $x^*(v)$ as in Proposition \ref{lemma-min-trader} define 
\be \label{c-lam}
C(\theta,\lam)=  \E\big[ \big( \tilde v- p_{0} - \lambda (x^{*}(\tilde v)+\tilde u) - \theta \textrm{sign}(x^{*}(\tilde v)+\tilde u) \big)^{2}\big]- \gamma\theta E[|x(\tilde v) + \tilde u|] . 
\ee
From (\ref{admis-p}) and (\ref{equi-mm}) it follows that we need to solve the minimization problem 
\bd
\min_{(\theta,\lam) \in \re^{2}_{+}} C(\theta,\lam),  
\ed
where $\re^{2}_{+}$ denotes the first quadrant of $\re^{2}$.

Using the independence of $\tilde v$ and $\tilde u$ we get the following first order conditions: 
 \be \label{partial-c} 
\begin{aligned} 
\partial_{\lam} C(\theta,\lam)& = -2\E\big[(x^{*}(\tilde v)+\tilde u) \big( \tilde v- p_{0} - \lambda (x^{*}(\tilde v)+\tilde u) - \theta \textrm{sign}(x^{*}(\tilde v)+\tilde u) \big)\big] \\
 &= -2\E\big[x^{*}(\tilde v)( \tilde v- p_{0}) \big]+2\lam\E\big[(x^{*}(\tilde v)+\tilde u)^{2}\big] +2\theta \E\big[|x^{*}(\tilde v)+\tilde u|\big]   \\
 &=0, 
\end{aligned} 
\ee
and  
\be \label{partial-c2} 
\begin{aligned} 
\partial_{\theta} C(\theta,\lam) =& -2\E\big[ \textrm{sign}(x^{*}(\tilde v)+\tilde u) \big( \tilde v- p_{0} - \lambda (x^{*}(\tilde v)+\tilde u) - \theta \textrm{sign}(x^{*}(\tilde v)+\tilde u) \big)\big]  \\ 
& - \gamma E[|x^{*}(\tilde v) + \tilde u|]  \\
 =&  -2\E\big[ \textrm{sign}(x^{*}(\tilde v)+\tilde u)( \tilde v- p_{0})\big]+(2\lambda-\gamma) \E\big[|x^{*}(\tilde v)+\tilde u |\big]+ 2\theta   \\
 =&0. 
\end{aligned} 
\ee
We arrive to the following linear system of equations:   
\bd
\begin{pmatrix}
    \ell_{2}       & \ell_{1}  \\
    \ell_{1}       & 1  \\
\end{pmatrix}
\cdot 
\begin{pmatrix}
    \lam       \\
    \theta      \\
\end{pmatrix}
=
\begin{pmatrix}
    \mu\\
    \kappa +\gamma \ell_{1}/2     \\
\end{pmatrix}.
\ed
It follows that 
\be \label{opt-eqn}
\lam^{*}  = \frac{\mu -(\kappa +\gamma \ell_{1}/2) \ell_{1}}{\ell_{2}-\ell_{1}^{2}}, \quad \theta^{*} = \kappa+\frac{\gamma}{2}\ell_{1}- \frac{\ell_{1}(\mu -\kappa-\ell_{1}\gamma/2)}{\ell_{2}-\ell_{1}^{2}}.
\ee
The Hessian matrix of $C(\theta,\lam)$, 
$$
H(C(\theta,\lam)) = \begin{pmatrix}
    2\ell_{2}       & 2\ell_{1}  \\
    2\ell_{1}       & 2 
\end{pmatrix}, 
$$
is positive definite, hence $(\lam^{*},\theta^{*})$ is a global minima.
\end{proof} 

%\begin{proof}[Proof of Theorem \ref{thm-uniq}]
%From the proof of Theorem \ref{thm-equil1} it follows that $C(\theta,\lam)$ in\eqref{c-lam} is convex, hence $(\lam^{*},\theta^{*})$ as in \ref{lam-th-eq} are a  global minima. Such $(\lam^{*},\theta^{*})$ are admissible only of they are posti 
%\end{proof} 
  
Before we prove Proposition \ref{prop-lin-eq} we introduce the following lemma. 
\begin{lemma} \label{lemma-z-u} 
Let $Y$ and $Z$ be independent random variables, such that $Z$ is a centred Gaussian with variance $\sigma_{\tilde v}^2$.  Assume further that one of the following assumptions holds: 
\begin{itemize} 
\item[\textbf(a)]  $Y$ is a centred Gaussian with variance $\sigma^2_{\tilde v}$. 
\item[\textbf(b)] $Y$ is a Uniform random variable on $[-b,b]$, for some $b\ge\sqrt{3}\sigma_{\tilde v}$. 
 \end{itemize} 
 Then we have 
\be \label{z-y-lemma} 
 \E\big[|Z+ Y |\big]  \geq 2 \E\big[ \rm{sign}(Z+ Y)Z\big]. 
\ee
\end{lemma}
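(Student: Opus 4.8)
The plan is to first recast \eqref{z-y-lemma} in a more symmetric form. Since $|t| = t\,\sign(t)$, we have $\E[|Z+Y|] = \E[(Z+Y)\sign(Z+Y)] = \E[Z\,\sign(Z+Y)] + \E[Y\,\sign(Z+Y)]$, and the first summand is precisely $\E[\sign(Z+Y)Z]$. Hence \eqref{z-y-lemma} is equivalent to
$$
\E\big[(Y-Z)\,\sign(Y+Z)\big]\geq 0,
$$
and it suffices to verify this in each of the two cases. Throughout write $\sigma := \sigma_{\tilde v}$.

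For case (a), the point is that $Y-Z$ and $Y+Z$ are centred, jointly Gaussian (being linear combinations of the independent Gaussians $Y,Z$), and uncorrelated, since $\E[(Y-Z)(Y+Z)] = \E[Y^2] - \E[Z^2] = \sigma^2 - \sigma^2 = 0$. Jointly Gaussian uncorrelated random variables are independent, hence $\E[(Y-Z)\sign(Y+Z)] = \E[Y-Z]\cdot\E[\sign(Y+Z)] = 0$; in fact \eqref{z-y-lemma} holds with equality in this case.

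For case (b), I would condition on $Y=y$. Using the elementary Gaussian identities $\E[\sign(y+Z)] = 2\Phi_\sigma(y)-1$ and $\E[Z\,\sign(y+Z)] = 2\sigma^2\phi_\sigma(y)$ (the latter from $\int_a^\infty z\,\phi_\sigma(z)\,dz = \sigma^2\phi_\sigma(a)$), where $\Phi_\sigma,\phi_\sigma$ denote the cdf and pdf of $N(0,\sigma^2)$, one gets $\E[(y-Z)\sign(y+Z)] = y(2\Phi_\sigma(y)-1) - 2\sigma^2\phi_\sigma(y)$. This is an even function of $y$, so integrating it against the uniform density $1/(2b)$ on $[-b,b]$ gives $\frac1b\int_0^b\big(y(2\Phi_\sigma(y)-1)-2\sigma^2\phi_\sigma(y)\big)\,dy$; one integration by parts on $\int_0^b y(2\Phi_\sigma(y)-1)\,dy$, together with $\int_0^b y^2\phi_\sigma(y)\,dy = \frac{\sigma^2}{2}(2\Phi_\sigma(b)-1) - \sigma^2 b\,\phi_\sigma(b)$, reduces the whole quantity to
$$
\frac1b\left(\frac{b^2-3\sigma^2}{2}\,(2\Phi_\sigma(b)-1) + \sigma^2 b\,\phi_\sigma(b)\right),
$$
which is nonnegative precisely because $b\geq\sqrt3\,\sigma$ forces all three factors to be nonnegative.

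The computations are routine, so there is no serious obstacle; the one subtlety worth flagging is that the conditional expectation $\E[(y-Z)\sign(y+Z)]$ is \emph{negative} for small $|y|$, so the inequality only materialises after integrating over the full support of $Y$, and it is exactly the variance comparison $\mathrm{Var}(Y)\geq\mathrm{Var}(Z)$ — encoded by $b\geq\sqrt3\,\sigma$ in case (b) and by the vanishing covariance $\E[(Y-Z)(Y+Z)]=0$ in case (a) — that tips the balance. This is why the two cases are naturally stated together.
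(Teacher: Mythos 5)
Your proposal is correct and follows essentially the same route as the paper: the same reduction of \eqref{z-y-lemma} to $\E[(Y-Z)\,\sign(Y+Z)]\geq 0$, equality in case (a) (the paper gets it from exchangeability of the identically distributed pair, you from uncorrelated jointly Gaussian $Y\pm Z$, which amounts to the same conclusion), and in case (b) an explicit Gaussian computation arriving at exactly the paper's closed form $\frac{1}{2b}(b^{2}-3\sigma^{2})\,\erf\big(\frac{b}{\sqrt{2}\sigma}\big)+\frac{\sigma}{\sqrt{2\pi}}e^{-b^{2}/(2\sigma^{2})}$, nonnegative precisely when $b\ge\sqrt{3}\sigma$. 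The only cosmetic difference is that you condition on $Y=y$ and integrate one-dimensional identities, while the paper evaluates the double integral directly.
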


\begin{proof} 
\textbf{(a)} Note that 
\bn
 \E[|Z+ Y |] = \E[ (Z+Y) \mathds{1}_{\{Z+Y>0\}}] - \E[ (Z+Y) \mathds{1}_{\{Z+Y \leq 0\}}].
\en
On the other hand, 
\bn
 \E\big[ \textrm{sign}(Z+ Y)Z\big] = \E[ Z \mathds{1}_{\{Z+Y>0\}}] - \E[ Z \mathds{1}_{\{Z+Y \leq 0\}}]. 
\en
Hence in order to prove (\ref{z-y-lemma}) we need to show that 
\be \label{xy1} 
\E[ Y \mathds{1}_{\{Z+Y>0\}}] - \E[ Y \mathds{1}_{\{Z+Y \leq 0\}}] \geq \E[ Z \mathds{1}_{\{Z+Y>0\}}] - \E[ Z \mathds{1}_{\{Z+Y \leq 0\}}],
\ee
or 
\be \label{xy1} 
 \E\big[ \rm{sign}(Z+ Y)Y\big] \geq   \E\big[ \rm{sign}(Z+ Y)Z\big]. 
\ee
Since $Z$ and $Y$ have the same law, then the inequality above holds trivially in an equality. 

\textbf{(b)} Note that 
\bd \begin{aligned} 
&\E\big[|Z+ Y |\big]  - 2 \E\big[ \rm{sign}(Z+ Y)Z\big] \\
&=
\int \int_{z+y>0}(z+y)f_{Z}(z)f_{Y}(y)dzdy-\int \int_{z+y\leq0}(z+y)f_{Z}(z)f_{Y}(y)dzdy
\\
&
\quad - 2 \int \int_{z+y>0}zf_{Z}(z)f_{Y}(y)dzdy+2\int\int_{z+y \leq 0}zf_{Z}(z)f_{Y}(y)dzdy, 
 \end{aligned} 
\ed
where $f_{Y}$ and $f_{Z}$ are the probability densities of $Y$ and $Z$. 

Since $Y$ is uniformly distributed on $[-b,b]$ and $Z$ is a centred Gaussian with variance $\sigma_{\tilde v}$ we get
\bn
&&\E\big[|Z+ Y |\big]  - 2 \E\big[ \rm{sign}(Z+ Y)Z\big] \\
&&=
\frac{1}{2b}\int_{-\infty}^{\infty}\int_{-b}^{b}1_{\{z+y>0\}}(y-z) f_{Z}(z)dzdy+\frac{1}{2b}\int_{-\infty}^{\infty}\int_{-b}^{b}1_{\{z+y>0\}}(z-y)f_{Z}(z)dzdy \\ 
&&= \frac{1}{b}\int_{-b}^{b}\int_{-y}^{\infty}(y-z) f_{Z}(z)dzdy \\ 
&&= \frac{1}{\sqrt{2\pi} \sigma_{\tilde v}}  \frac{1}{b}\int_{-b}^{b}\int_{-y}^{\infty}(y-z) e^{-z^{2}/(2\sigma^{2}_{\tilde v})}dzdy. 
\en

Calculation of the above integral gives: 
\bn
&&\E\big[|Z+ Y |\big]  - 2 \E\big[ \rm{sign}(Z+ Y)Z\big] \\
&&=  \frac{1}{b}\int_{-b}^{b} \left(\frac{1}{2}y\left( \erf\left(\frac{y}{\sqrt{2}\sigma_{\tilde v}}\right)+1\right) -  \frac{\sigma_{\tilde v}}{\sqrt{2 \pi}}  e^{-y^{2}/(2\sigma^{2}_{\tilde v})}  \right)dy  \\
&&=   \frac{1}{2b} (b^{2}-\sigma_{\tilde v}^{2})\erf\left(\frac{b}{\sqrt{2}\sigma_{\tilde v}}\right)+  \frac{\sigma_{\tilde v}}{ \sqrt{2\pi}} e^{-b^{2}/(2\sigma_{\tilde v }^{2})}  - \frac{\sigma_{\tilde v}^{2}}{b} \erf\left(\frac{b}{\sqrt{2}\sigma_{\tilde v}}\right) \\
&&= \frac{1}{2b}\left(b^{2} -3 \sigma_{\tilde v}^{2}  \right)
\erf\left(\frac{b}{\sqrt{2}\sigma_{\tilde v}}\right)  +  \frac{\sigma_{\tilde v}}{ \sqrt{2\pi}} e^{-b^{2}/(2\sigma_{\tilde v }^{2})}.
\en
Thus, when $b\ge\sqrt{3}\sigma_{\tilde v}$, \eqref{xy1} holds and the result follows.
\end{proof}

%%%%%%%%%%%%%%%%%%%%%%%%%%%%%%%%%
\begin{proof}[Proof of Proposition \ref{prop-lin-eq}]
Recall that $C(\theta,\lam)$ was introduced in (\ref{c-lam}). In order to prove Proposition \ref{prop-lin-eq} we show that $(\theta^{*}, \lam^{*})= (0,\frac{\sigma_{\tilde v}}{2\sigma_{ \tilde u}})$ minimizes $C(\theta,\lam)$ for $\gamma =0$. 
In the proof Theorem \ref{thm-equil1} we showed that $C(\theta,\lam)$ is concave. Therefore, it is enough to show that 
\be
\partial_{\lam} C(0,\lam^{*})  =0, \quad  \textrm{and }\partial_{\theta} C(0,\lam^{*})  \geq 0.
\ee
Recall that the traders optimal order size is $x^*(v)=\frac{v-p_0}{2\lam}$. From (\ref{c-lam}) we get 
\bd \begin{aligned} 
\partial_{\lam} C(0,\lam) &= -2\E\big[x^{*}(\tilde v)( \tilde v- p_{0}) \big]+2\lam\E\big[(x^{*}(\tilde v)+\tilde u)^{2}\big] \\
 &=-\frac{1}{\lam}\E\big[( \tilde v- p_{0})^{2} \big]+2\lam\E\Big[\Big(\frac{\tilde v-p_{0}}{2\lam}+\tilde u\Big)^{2}\Big] \\
 &=-\frac{1}{2\lam}\E\big[( \tilde v- p_{0})^{2} \big]+2\lam E(\tilde u^{2}). 
\end{aligned} 
\ed
We therefore get that $\partial_{\lam} C(0,\lam^{*}) =0$. 
From (\ref{c-lam}) we also have  
\bd
\begin{aligned} 
\partial_{\theta} C(0,\lam)  =&  -2\E\big[ \textrm{sign}(x^{*}(\tilde v)+\tilde u)( \tilde v- p_{0})\big]+2\lambda \E\big[|x^{*}(\tilde v)+\tilde u |\big]  \\
=&  -2\E\Big[ \textrm{sign}\Big(\frac{\tilde v-p_{0}}{2\lam}+\tilde u\Big)( \tilde v- p_{0})\Big]+2\lambda \E\Big[\Big|\frac{\tilde v-p_{0}}{2\lam}+\tilde u \Big|\Big]. 
\end{aligned} 
\ed
Hence in order to prove that $\partial_{\theta} C(0,\lam^{*})  \geq 0$ we need to show 
\be \label{rf1} 
 \E\big[|\tilde v-p_{0}+2\lam^* \tilde u |\big]  \geq 2 \E\big[ \textrm{sign}(\tilde v-p_{0}+2\lam^* \tilde u)( \tilde v- p_{0})\big].
\ee
Since $\tilde v-p_{0}$ is a centred Gaussian with variance $\sigma^{2}_{\tilde v}$, \eqref{rf1}  follows immediately from Lemma \ref{lemma-z-u}(a) for the case where $\tilde u$ is a centred Gaussian with variance $\sigma^{2}_{\tilde u}$ and $2\lam^* \tilde u=\frac{\sigma_{\tilde v}}{\sigma_{ \tilde u}} \tilde u$. When $\tilde u$ is a centred Uniform with variance $\sigma_{\tilde u}^{2}$, then $2\lam^* \tilde u = \frac{\sigma_{\tilde v}}{\sigma_{ \tilde u}} \tilde u$ is distributed uniformly on $[-\sqrt{3} \sigma_{\tilde v} , \sqrt{3} \sigma_{\tilde v}]$ and \eqref{rf1} follows from Lemma \ref{lemma-z-u}(b).
\end{proof}

\section{Proofs of Propositions \ref{prop:alg-kyle-0} and \ref{prop-big-gamma}} \label{sec-pfs-3}

%%%%%%%%%%%%%%%%%%%%%%%%%%%%%%%%%

%%%%%%%%%%%%%%%%%%%%%%%%%%%%%%%%%%%%%%%%%%%%%%%%
\begin{proof} [Proof of Proposition \ref{prop:alg-kyle-0}] 
Without loss of generality we assume that $p_0=0$. Let $\lam_0>0$ and solve the trader's problem from step (ii) of Algorithm \ref{alg-kyle-0} by using (\ref{x-no-sprd}) to get 
$$
x_0(v) = \beta_0v, 
$$ 
where $\beta_0= \frac{1}{2\lam_0}$. 

Now solve the optimization on step (iii) (see e.g equation (2.8) in \cite{kyle85}) to get, 
\bn
\lam_1 &=& \frac{\beta_0 \sigma_{\tilde v}^2}{\beta^2_0\sigma_{\tilde v}^2+\sigma_{\tilde u}^2} \\ 
&=& \frac{2\lam_0 \sigma_{\tilde v}^2}{ \sigma_{\tilde v}^2+4\lam_0^2\sigma_{\tilde u}^2}.
\en
Repeating this procedure of $n$ steps, we have 
\be \label{lam-n} 
\lam_n =  \frac{2\lam_{n-1} \sigma_{\tilde v}^2}{ \sigma_{\tilde v}^2+4\lam_{n-1}^2\sigma_{\tilde u}^2}, \quad \textrm{and } x_n(v) = \frac{v}{2\lam_n}.
\ee
We show that $\{\lam_n\}_{n\geq 0}$ is a non-increasing sequence if $\lam_0>\lam^*$. This can be verified by induction. The claim for $n=0$ is satisfied by the hypothesis. Assume that $\lam_{n-1} \geq \lam^*$, then we get from (\ref{lam-n}) and (\ref{lam-st-kyle}) that 
\bn
\frac{\lam_{n}}{\lam_{n-1}} &=&\frac{2  \sigma_{\tilde v}^2}{ \sigma_{\tilde v}^2+4\lam_{n-1}^2\sigma_{\tilde u}^2} \\
&\leq & \frac{2  \sigma_{\tilde v}^2}{ \sigma_{\tilde v}^2+4(\lam^*)^2\sigma_{\tilde u}^2} \\
&= &1. 
\en 
Hence $\{\lam_n\}_{n\geq 0}$ is a non-increasing sequence. In the same way we can show that if $\lam^*\geq \lam_0$, then $\{\lam_n\}_{n\geq 0}$ is a non-decreasing sequence. From these two claims together, it follows that if $\lam_0>\lam^*$ (or $\lam^*\geq \lam_0$) then $\lam^*$ is a lower bound (respectively an upper bound) of $\{\lam_n\}_{n\geq 0}$. Hence in each of these cases the limit $\lam_\infty = \lim_{n\rr \infty}\lam_n$ exists. 
From (\ref{lam-n}) it follows that 
\bd
\lam_\infty =  \frac{2\lam_{\infty} \sigma_{\tilde v}^2}{ \sigma_{\tilde v}^2+4\lam_{\infty}^2\sigma_{\tilde u}^2}, 
\ed
therefore $\lam_{\infty} = \lam^*=  \frac{\sigma_{\tilde v}}{2\sigma_{\tilde u}}$. We also have $\lim_{n\rr\infty} x_n(v) = \frac{v}{2\lam_\infty} = \frac{v}{2\lam^*}$. 
\end{proof} 

\begin{proof}[Proof of Proposition \ref{prop-big-gamma}] 
Without loss of generality we assume that $p_0=0$. Let $\gamma >0$. Assume that $\lambda=0$ and $x(\tilde v)=\frac{\tilde{v}}{2\theta}$ for some $\theta>0$ to be determined.  

We will first show that there exists $\theta^{*}>0$ such that 
\be \label{der-cond} 
\partial_{\lam} C(\theta^{*},0) > 0, \qquad \partial_{\theta} C(\theta^{*},0) =0. 
\ee
Note that from \eqref{partial-c} we have in this case, 
\begin{equation}\label{partial-c_uniform}
\partial_{\lam} C(\theta,0)
 =
-2\E\left[\frac{\tilde{v}}{2\theta}\tilde{v}\right]+ 2\theta \E\left[\left|\frac{\tilde{v}}{2\theta}+\tilde u\right|\right]
=
-\frac{1}{\theta}+ 2\theta \E\left[\left|\frac{\tilde{v}}{2\theta}+\tilde u\right|\right]. 
\end{equation}

%We first calculate the term second term in the right hand side of \eqref{partial-c_uniform}. By the total expectation law we get
%\begin{equation*}
%    \E\left(\left|\frac{\tilde{v}}{2\theta}+\tilde u\right|\right)=     \E \left [ \E\left(\left|\frac{\tilde{v}}{2\theta}+\tilde u\right| \; |\tilde{u} \right)\right ]
%\end{equation*}
Recall that $\tilde v$ is a standard Gaussian. Using the tower property we have, 
\bn
    \E\left[\left|\frac{\tilde{v}}{2\theta}+\tilde u\right|\right]
    &=& \E\left[   \E\left[\left|\frac{\tilde{v}}{2\theta}+\tilde u\right| \, \bigg|\tilde u\right]  \right] \\
    &=&\E\left[\frac{1}{\sqrt{2\pi}\theta}e^{-2\tilde{u}^2 \theta^2}+ \tilde{u}\cdot \erf(\tilde{u}\theta \sqrt{2})
    \right].
\en
Since $\tilde{u}$ is uniformly distributed on $[-1,1]$ we get that
\begin{equation}\label{EabsOFuniform}
    \E\left[\left|\frac{\tilde{v}}{2\theta}+\tilde u\right|\right]
    =
    \frac{1}{2}\left(\frac{1}{\theta\sqrt{2\pi}}e^{-2\theta^2}+ \erf(\theta \sqrt{2}) \cdot \left(1+\frac{1}{4\theta^2}\right) \right). 
\end{equation}

Plugging it into \eqref{partial-c_uniform} we have
\begin{equation*}
\partial_{\lam} C(\theta,0) = -\frac{1}{\theta} +   \erf(\theta \sqrt{2}) \cdot \left(\theta+\frac{1}{4\theta}\right)+ \frac{1}{\sqrt{2\pi}}e^{-2\theta^2}.
\end{equation*}
It is easy to check that $\partial_{\lam} C(\theta,0)$ is monotone increasing (on $\theta >0$) and for $\theta > 1$ it is strictly positive.  

From \eqref{partial-c2} with $x(\tilde v)=\frac{\tilde{v}}{2\theta}$ we have
\begin{equation}\label{partial_theta_0}
\partial_{\theta} C(\theta,0)
 =  -2\E\left[ \tilde{v}\cdot \textrm{sign}\left(\frac{\tilde{v}}{2\theta}+\tilde u\right)\right]-\gamma \E\left[\left|\frac{\tilde{v}}{2\theta}+\tilde u \right|\right]+ 2\theta. 
\end{equation}

Note that 
\begin{align*}
    \E\left[ \tilde{v}\cdot \textrm{sign}\left(\frac{\tilde{v}}{2\theta}+\tilde u\right)\right]=&
   \int_{-1}^{1}\int_{-\infty}^{\infty}v (\mathds{1}_{\{v>-2u\theta\}}-\mathds{1}_{\{v<-2u\theta\}})\frac{1}{2} \frac{1}{\sqrt{2\pi}}e^{-v^2/2} dv du
    \\&=
    \frac{1}{\sqrt{2\pi}}\int_{-1}^{1}\int_{-2u\theta}^{\infty}v e^{-v^2/2} dv du
    \\&=
    \frac{1}{\sqrt{2\pi}}\int_{-1}^{1} e^{-2 \theta^2 u^2} du
    \\&=
    \frac{1}{2\theta}  \erf(\theta \sqrt{2}).
\end{align*}
Using this and \eqref{EabsOFuniform} in \eqref{partial_theta_0} we get 
\begin{equation*}
\partial_{\theta} C(\theta,0) = -\frac{1}{\theta}  \erf(\theta \sqrt{2})- \frac{\gamma}{2} \left(  \erf(\theta \sqrt{2}) \cdot \left(1+\frac{1}{4\theta^2}\right)+ \frac{1}{\theta\sqrt{2\pi}}e^{-2\theta^2}\right)+ 2\theta .
 \end{equation*}
Define 
 \begin{equation*}
 H(\theta)= -\frac{1}{\theta}  \erf(\theta \sqrt{2})- \frac{\gamma}{2} \left(  \erf(\theta \sqrt{2}) \cdot \left(1+\frac{1}{4\theta^2}\right)+ \frac{1}{\theta\sqrt{2\pi}}e^{-2\theta^2}\right)+ 2\theta .
 \end{equation*}
%   \begin{equation*}
%  -\frac{1}{\theta}\left[\left(1+\gamma  \frac{4\theta^2+1}{4\theta}\right)\text{Erf}(\theta \sqrt{2})- \frac{\gamma}{\sqrt{2\pi}}e^{-2\theta^2}+ 2\theta^2\right]
%  \end{equation*}
Note that $H$ is continuous and monotone increasing on $\theta>0$. Moreover $\lim_{\theta \rr 0} H(\theta) = -\infty $ and $\lim_{\theta \rr \infty}  H(\theta)= \infty$. It follows that $H$ has a unique zero $\theta^* = \theta^*(\gamma)$, which is clearly monotone increasing in $\gamma$ and $\lim_{\gamma \rr \infty } \theta^*(\gamma) = \infty$. 
We therefore showed that for any $\gamma >0$ we have $\theta^*>0$ such that \eqref{der-cond} holds.

Let $\eps>0$ be arbitrary small. Choose $\gamma$ large enough so that $\theta^*$ satisfies 
$$
P(|\tilde v|< \theta^* ) > 1-\eps. 
$$
Define $(x_0(v),\lam_0,\theta_0) = \left( \frac{v}{2\theta^*}, 0, \theta^*\right)$. From Proposition \ref{prop-unif} it follows that  $x_0(v) =  \frac{v}{2\theta^*}$  solves the insider optimisation problem in step 2 of Algorithm \ref{alg-kyle-bid-ask} if $|\tilde v| < \theta^* $. Moreover, since $\theta^*$ satisfies \eqref{der-cond} and $C(\lam,\theta)$ is convex, it follows that $(\lam_0,\theta_0)$ minimises $C(\lam,\theta)$, hence it is the output of step 3 in Algorithm \ref{alg-kyle-bid-ask}. We get that 
$$
(x_1(v),\lam_1,\theta_1) = \left( \frac{v}{2\theta^*}, 0, \theta^* \right), \quad \textrm{ with probability larger than } 1-\eps. 
$$
Repeating this argument we get \eqref{met-stable} for any $n\geq 1$. 
\end{proof}

\section{Formulas for equilibrium points} \label{sec-form}
In this section we derive simplified formulas for $\ell_{p,x^{*}}, \ p=1,2$, $\mu_{x^{*}} $ and $\kappa_{x^{*}}$ from Theorem \ref{thm-equil1}, for the case where $\tilde v-p_0$ is a standard Gaussian and $\tilde u$ is distributed uniformly on $[-1,1]$. We recall that in this case $x^*(v)$ is given by Proposition \ref{prop-unif}. Note that the expressions obtained for $\ell_{2,x^*}$ and $\mu_{x^*}$ are given closed form. The formulas for $\kappa_x^*$ and $\ell_{1,x^*}$ are given as an integral which could easily be evaluated by standard numerical schemes.  
We first introduce some notation. 
\paragraph{Notation.}
Recall that $\phi$ and $\Phi$ are the probability density function and cumulative distribution function of the standard Gaussian distribution, respectively. 

For any nonnegative $\lam$ and $\theta$ let 
$$
\beta(\lam,\theta)= \lam+\theta +\sqrt{\lam+\theta}. 
$$
For any integrable functions $f,g:\re \rr \re$ we define: 
\bn
F_1(\lam,\theta) &=&  \int_{0}^{\beta(\lam,\theta)}z^2\phi(z)dz= \frac{1}{2} \erf\left(\frac{\beta(\lam,\theta)}{2} \right)-\frac{\beta(\lam,\theta)}{\sqrt{2\pi}}e^{-\beta(\lam,\theta)^2/2}.  \\
F_2(\lam,\theta ; [f],[g]) &=&\int_{0}^{\infty} f(z) (1 \wedge g(z) +1)_{+} \mathds{1}_{\{z >\beta(\lam,\theta)\}} \phi(z)  dz, \\
\overline F_2(\lam,\theta ; [f],[g]) &=&\int_{0}^{\infty} f(z) (1 \wedge g(z) +1)_{+} \mathds{1}_{\{z \leq\beta(\lam,\theta)\}} \phi(z)  dz, \\
F_3(\lam, \theta; [f]) &=&\int_{0}^{\infty}(1-(1 \wedge f(z)))^2)\mathds{1}_{\{f(z)>-1\}} \mathds{1}_{\{z > \beta(\lam,\theta)\}} \phi(z)  dz \\
\overline F_3(\lam, \theta; [f]) &=&\int_{0}^{\infty}(1-(1 \wedge f(z)))^2)\mathds{1}_{\{f(z)>-1\}} \mathds{1}_{\{z \leq \beta(\lam,\theta)\}} \phi(z)  dz. 
\en

We start with the expression for $\ell_{2,x^{*}}$ 
\begin{lemma} \label{lem-l2} 
Under the assumptions of Proposition \ref{prop-unif} we have 
\bn
\ell_{2,x^{*}} &=& \frac{1}{2\lam^2}\left(\frac{1}{2}-F_1(\lam,\theta)-2\theta \frac{1}{\sqrt{2\pi}}e^{-\beta(\lam,\theta^2)^2/2} +\theta^2 (1-\Phi(\beta(\lam,\theta))  \right)\\
&&+ \frac{1}{2(\lam+\theta)^2} F_1(\lam,\theta). 
\en
\end{lemma}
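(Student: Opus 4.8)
The plan is a direct evaluation of the expectation using the explicit maximiser $x^{*}$ of Proposition \ref{prop-unif}. Assume without loss of generality that $p_{0}=0$, so $\tilde v$ is standard Gaussian. Since $\tilde u$ is centred and independent of $\tilde v$, expanding the square gives
$$
\ell_{2,x^{*}}=\E\big[x^{*}(\tilde v)^{2}\big]+2\,\E\big[x^{*}(\tilde v)\big]\E[\tilde u]+\E[\tilde u^{2}]=\E\big[x^{*}(\tilde v)^{2}\big]+\E[\tilde u^{2}],
$$
the cross term vanishing; modulo the constant $\E[\tilde u^{2}]=\tfrac13$, the content is therefore the evaluation of $\E\big[x^{*}(\tilde v)^{2}\big]$.

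Next I would use symmetry. Arguing as in the proof of Proposition \ref{lemma-min-trader}, the symmetry of $\tilde u$ makes $x\mapsto\E[\operatorname{sign}(x+\tilde u)]$ odd, whence $R_{-v}(-x)=R_{v}(x)$ and $x^{*}(-v)=-x^{*}(v)$. Consequently $x^{*}(\tilde v)^{2}$ is an even function of $\tilde v$ and $\E\big[x^{*}(\tilde v)^{2}\big]=2\int_{0}^{\infty}x^{*}(v)^{2}\phi(v)\,dv$. I then split this integral at the threshold $\beta(\lambda,\theta)$ of Proposition \ref{prop-unif}: on $(0,\beta(\lambda,\theta)]$ substitute $x^{*}(v)=\frac{v}{2(\lambda+\theta)}$ and on $(\beta(\lambda,\theta),\infty)$ substitute $x^{*}(v)=\frac{v-\theta}{2\lambda}$, which yields
$$
\E\big[x^{*}(\tilde v)^{2}\big]=\frac{1}{2(\lambda+\theta)^{2}}\int_{0}^{\beta(\lambda,\theta)}v^{2}\phi(v)\,dv+\frac{1}{2\lambda^{2}}\int_{\beta(\lambda,\theta)}^{\infty}(v-\theta)^{2}\phi(v)\,dv .
$$

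The remaining work is evaluating the Gaussian moments appearing here. Integration by parts via $\phi'(v)=-v\phi(v)$ gives $\int z^{2}\phi(z)\,dz=\Phi(z)-z\phi(z)$, hence $\int_{0}^{\beta(\lambda,\theta)}v^{2}\phi(v)\,dv=\Phi(\beta(\lambda,\theta))-\tfrac12-\beta(\lambda,\theta)\phi(\beta(\lambda,\theta))$, which is $F_{1}(\lambda,\theta)$, and therefore $\int_{\beta(\lambda,\theta)}^{\infty}v^{2}\phi(v)\,dv=\tfrac12-F_{1}(\lambda,\theta)$. For the lower-order moments one uses $\int_{\beta(\lambda,\theta)}^{\infty}v\phi(v)\,dv=\phi(\beta(\lambda,\theta))=\frac{1}{\sqrt{2\pi}}e^{-\beta(\lambda,\theta)^{2}/2}$ and $\int_{\beta(\lambda,\theta)}^{\infty}\phi(v)\,dv=1-\Phi(\beta(\lambda,\theta))$. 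Expanding $(v-\theta)^{2}=v^{2}-2\theta v+\theta^{2}$ inside the second integral and collecting terms produces the claimed expression for $\ell_{2,x^{*}}$.

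I do not expect a genuine obstacle: once Proposition \ref{prop-unif} is in hand the lemma is a bookkeeping computation. The only points needing a little care are (i) justifying the oddness of $x^{*}$ so that the expectation can be folded onto $(0,\infty)$, and (ii) matching the cut-off $\beta(\lambda,\theta)$ to the case split of Proposition \ref{prop-unif} — note that $x^{*}$ is in fact discontinuous at this threshold, but since the cut happens at a single point this is harmless for the integrals.
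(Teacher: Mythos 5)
Your proposal is correct and follows essentially the same route as the paper's proof: decompose $\ell_{2,x^{*}}=\E[x^{*}(\tilde v)^{2}]+\E[\tilde u^{2}]$ by independence, fold onto $(0,\infty)$ using the symmetry of $x^{*}$ around $p_{0}$, split at the threshold of Proposition \ref{prop-unif}, and evaluate the resulting Gaussian integrals in terms of $F_{1}$, $\phi$ and $\Phi$. Your remark about the additive constant $\E[\tilde u^{2}]=\tfrac13$ is also consistent with the paper's own first display (which writes $\ell_{2,x^{*}}=\E[x^{*}(\tilde v)^{2}]+\sigma_{\tilde u}^{2}$ before computing only the first term), so no further changes are needed.
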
 

\begin{proof} 
From the independence of $\tilde v$ and $\tilde u$ we have 
\bn
\ell_{2,x^{*}} &=& \E\big[(x^*(\tilde v) +\tilde u)^2\big] \\ 
&=& \E[x^*(\tilde v)^2] +\sigma_{\tilde u}^2. 
\en
Recall that by Proposition \ref{prop-unif}, $x^*(v)$ is symmetric around $p_0$. Using the explicit formula for $x^*$ and since $\tilde v- p_0$ is a standard Gaussian we have 
\bn
 \E[x^*(\tilde v)^2]
&=&2\frac{1}{4(\lam+\theta)^2} \E\big[(\tilde v- p_0)^2\mathds{1}_{\{0\leq \tilde v-p_0 \leq \lam+\theta +\sqrt{\lam+\theta}\}} \big]\\
&&+ 2\frac{1}{4\lam^2} \E\big[(\tilde v-p_0-\theta)^2 \mathds{1}_{\{\tilde v-p_0 > \lam+\theta +\sqrt{\lam+\theta}\}}\big] \\
&=& \frac{1}{2\lam^2}\left(\frac{1}{2}-F_1(\lam,\theta)-2\theta \frac{1}{\sqrt{2\pi}}e^{-\beta(\lam,\theta)^2/2} +\theta^2 (1-\Phi(\beta(\lam,\theta)) \right)\\
&&+ \frac{1}{2(\lam+\theta)^2} F_1(\lam,\theta). 
 \en
 \end{proof} 

 Next, we derive an expression for $\mu_{x^{*}} $. 
 \begin{lemma} \label{lem-mu} 
Under the assumptions of Proposition \ref{prop-unif} we have 
$$
\mu_{x^{*}} =\frac{1}{(\lam+\theta)} F_1(\lam,\theta) + \frac{1}{\lam}\left(\frac{1}{2}-F_1(\lam,\theta) -\theta \frac{1}{\sqrt{2\pi}}e^{-\beta(\lam,\theta)^2/2} \right ).
$$ 
\end{lemma}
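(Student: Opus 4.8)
The plan is to reduce $\mu_{x^{*}} = \E[x^{*}(\tilde v)(\tilde v - p_{0})]$ to a pair of elementary one‑dimensional Gaussian integrals by substituting the explicit piecewise maximizer of Proposition \ref{prop-unif}. First I would write $W = \tilde v - p_{0}$, which is standard Gaussian, and use that $x^{*}(p_{0}+w)$ is an odd function of $w$ (this follows from $\sign(x^{*}(v)) = \sign(v-p_{0})$ together with the fact that both branches in Proposition \ref{prop-unif} are odd in $W$; it is the same symmetry invoked in the proof of Lemma \ref{lem-l2}). Hence $x^{*}(\tilde v)\,(\tilde v-p_{0})$ is an even function of $W$, so
\[
\mu_{x^{*}} = 2\,\E\big[x^{*}(\tilde v)\,W\,\mathds{1}_{\{W>0\}}\big].
\]
On $\{0 < W \le \beta(\lam,\theta)\}$ one has $x^{*}(\tilde v) = W/(2(\lam+\theta))$, and on $\{W > \beta(\lam,\theta)\}$ one has $x^{*}(\tilde v) = (W-\theta)/(2\lam)$, so the expectation splits as
\[
\mu_{x^{*}} = \frac{1}{\lam+\theta}\,\E\big[W^{2}\mathds{1}_{\{0<W\le\beta(\lam,\theta)\}}\big] + \frac{1}{\lam}\,\E\big[W(W-\theta)\mathds{1}_{\{W>\beta(\lam,\theta)\}}\big].
\]

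Next I would evaluate the three moments against the standard Gaussian density $\phi$. The first is $F_{1}(\lam,\theta)=\int_{0}^{\beta(\lam,\theta)}z^{2}\phi(z)\,dz$ by definition of $F_{1}$. For the second I use $\int_{0}^{\infty}z^{2}\phi(z)\,dz = \tfrac12$ to obtain $\E[W^{2}\mathds{1}_{\{W>\beta(\lam,\theta)\}}] = \tfrac12 - F_{1}(\lam,\theta)$. For the third I use $\int_{\beta(\lam,\theta)}^{\infty}z\phi(z)\,dz = \phi(\beta(\lam,\theta)) = \tfrac{1}{\sqrt{2\pi}}e^{-\beta(\lam,\theta)^{2}/2}$, since $\phi'(z)=-z\phi(z)$. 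Substituting yields
\[
\mu_{x^{*}} = \frac{1}{\lam+\theta}F_{1}(\lam,\theta) + \frac{1}{\lam}\Big(\frac12 - F_{1}(\lam,\theta) - \theta\,\frac{1}{\sqrt{2\pi}}e^{-\beta(\lam,\theta)^{2}/2}\Big),
\]
which is the asserted identity.

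This argument is entirely routine and mirrors the proof of Lemma \ref{lem-l2}. The only point that needs care is the even/odd bookkeeping that produces the factor $2$, and the correct placement of the threshold $\beta(\lam,\theta)$ (as fixed in the Notation paragraph) at which the two branches of $x^{*}$ switch; everything else is a direct Gaussian moment computation. In particular, unlike in Lemma \ref{lem-l2}, no appeal to the independence of $\tilde v$ and $\tilde u$ is needed here, since $\mu_{x^{*}}$ depends on $\tilde u$ only through the shape of $x^{*}$ already determined in Proposition \ref{prop-unif}. I do not anticipate any genuine obstacle.
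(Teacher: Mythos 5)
Your proposal is correct and follows essentially the same route as the paper's proof: exploit the symmetry of $x^{*}$ around $p_{0}$ to reduce to the positive half-line, substitute the two branches of Proposition \ref{prop-unif} with the switch at $\beta(\lam,\theta)$, and evaluate the resulting Gaussian moments via $F_{1}$, $\int_{0}^{\infty}z^{2}\phi(z)\,dz=\tfrac12$, and $\int_{\beta}^{\infty}z\phi(z)\,dz=\phi(\beta)$. The bookkeeping (factor $2$ against the $\tfrac{1}{2(\lam+\theta)}$ and $\tfrac{1}{2\lam}$ coefficients) matches the paper's computation exactly, so there is nothing to add.
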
 
 \begin{proof} 
 From Proposition \ref{prop-unif}, the symmetry of $x^*(v)$ around $p_0$ and since $\tilde v- p_0$ is a standard Gaussian we have 
 \bn
 \mu_{x^{*}} &=& \E\big[x^*(v)(\tilde v-p_0)\big] \\
 &=&2\frac{1}{2(\lam+\theta)} \E\big[(\tilde v- p_0)^2\mathds{1}_{\{0\leq \tilde v-p_0 \leq \lam+\theta +\sqrt{\lam+\theta}\}} \big]\\
&&+ 2\frac{1}{2\lam} \E\big[(\tilde v-p_0-\theta)(\tilde v- p_0) \mathds{1}_{\{\tilde v-p_0 > \lam+\theta +\sqrt{\lam+\theta}\}}\big] \\
&=& \frac{1}{(\lam+\theta)} F_1(\lam,\theta) + \frac{1}{\lam}\left(\frac{1}{2}-F_1(\lam,\theta) -\theta \frac{1}{\sqrt{2\pi}}e^{-\beta(\lam,\theta)^2/2} \right ).  \en
\end{proof} 
Next we compute $\kappa_{x^{*}}$. 
 \begin{lemma} \label{lem-kappa} 
Under the assumptions of Proposition \ref{prop-unif} we have 
$$
\kappa_{x^{*}} =2\overline F_{2}\Big(\lam,\theta;[z], \Big[\frac{z}{2(\lam+\theta)}\Big]\Big ) +  \left(F_{2}\Big(\lam,\theta;[z], \Big[\frac{z-\theta}{2\lam}\Big]\Big ) + F_{2}\Big(\lam,\theta;[z]\Big[\frac{z+\theta}{2\lam}\Big]\Big) \right) .
$$ 
%$$
%\kappa_{x^{*}} = \frac{1}{4(\theta+\lam)} F_2(\lam,\theta) + \frac{1}{4\lam} F_3(\lam,\theta). 
%$$ 
\end{lemma}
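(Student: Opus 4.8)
The plan is to collapse $\kappa_{x^{*}}$ to a one--dimensional Gaussian integral by averaging out $\tilde u$ first, and then to split that integral along the two branches of $x^{*}$ from Proposition \ref{prop-unif} together with their reflections about $p_{0}$. By the tower property and the independence of $\tilde v$ and $\tilde u$,
\[
\kappa_{x^{*}}=\E\big[(\tilde v-p_{0})\,\E[\sign(x^{*}(\tilde v)+\tilde u)\mid\tilde v]\big].
\]
Since $\tilde u$ is uniform on $[-1,1]$, a one--line computation gives $\E[\sign(a+\tilde u)]=\tau(a)$, where $\tau(a):=(-1)\vee(1\wedge a)$ is the truncation of $a$ to $[-1,1]$ (equivalently $\mathbb{P}(\tilde u>-a)=\tfrac12(1\wedge a+1)_{+}$, so $\sign=2\,\mathds{1}_{\{\cdot>0\}}-1$ and the $-1$ drops out against $\E[\tilde v-p_{0}]=0$); this $\tau$ is the kernel entering the definitions of $F_{2}$ and $\overline F_{2}$ in Section \ref{sec-form}. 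Hence $\kappa_{x^{*}}=\E\big[(\tilde v-p_{0})\,\tau(x^{*}(\tilde v))\big]$.

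Next I would put $w:=\tilde v-p_{0}$, which is standard Gaussian with density $\phi$, and use the explicit $x^{*}$. By Proposition \ref{prop-unif} together with the oddness of $x^{*}$ about $p_{0}$, on $0<|w|\le\beta(\lam,\theta)$ one has $x^{*}=\tfrac{w}{2(\lam+\theta)}$, on $w>\beta(\lam,\theta)$ one has $x^{*}=\tfrac{w-\theta}{2\lam}$, and on $w<-\beta(\lam,\theta)$ one has $x^{*}=\tfrac{w+\theta}{2\lam}$. Writing $\kappa_{x^{*}}=\int_{\re}w\,\tau(x^{*})\,\phi(w)\,dw$ and breaking $\re$ into the four intervals $(0,\beta(\lam,\theta)]$, $(\beta(\lam,\theta),\infty)$, $[-\beta(\lam,\theta),0)$, $(-\infty,-\beta(\lam,\theta))$, one checks from the explicit form of $\beta(\lam,\theta)$ that $|x^{*}|<1$ on the two inner intervals (so $\tau(x^{*})=x^{*}$ there), while $|x^{*}|\ge1$ on the two outer intervals (so $\tau(x^{*})=\sign(w)$ there); the outer bound is precisely $\beta(\lam,\theta)-\theta\ge2\lam$, equivalently $\theta\ge0$, which also records the upward jump of $x^{*}$ at $\beta(\lam,\theta)$.

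Finally, substituting $w\mapsto-w$ on the two negative intervals and using the oddness of $x^{*}$ and of $\tau$, the two inner integrals become identical and combine into $2\,\overline F_{2}\big(\lam,\theta;[z],[\tfrac{z}{2(\lam+\theta)}]\big)$; the interval $w>\beta(\lam,\theta)$ produces $F_{2}\big(\lam,\theta;[z],[\tfrac{z-\theta}{2\lam}]\big)$; and the interval $w<-\beta(\lam,\theta)$, after the substitution, produces $F_{2}\big(\lam,\theta;[z],[\tfrac{z+\theta}{2\lam}]\big)$ (the argument $\tfrac{z+\theta}{2\lam}$ being $-x^{*}$ read off the left outer branch; since $\tau$ is saturated at $1$ on that interval the precise form of the argument is immaterial). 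Adding the three contributions gives the claimed formula. The only genuinely delicate part is the bookkeeping of the last two steps: identifying on which intervals $\tau$ is active versus saturated (which hinges on the explicit $\beta(\lam,\theta)$ and on the jump of $x^{*}$), and tracking the signs when reflecting the $w<0$ part so that the pieces line up with $F_{2}$ and $\overline F_{2}$; the Gaussian moments themselves require no new computation, being the same ones already assembled for $F_{1}$, $\ell_{2,x^{*}}$ and $\mu_{x^{*}}$ in Lemmas \ref{lem-l2} and \ref{lem-mu}.
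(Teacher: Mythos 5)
Your first two steps are fine, and they are in fact a cleaner route than the paper's: you integrate out $\tilde u$ first, getting $\E[\sign(x^{*}+\tilde u)\mid \tilde v]=\tau(x^{*}(\tilde v))$ with $\tau(a)=(-1)\vee(1\wedge a)$, and then split at $\pm\beta(\lam,\theta)$ using Proposition \ref{prop-unif}; the paper instead writes $\kappa_{x^{*}}=I_{1}-I_{2}$ with the two indicators $\mathds{1}_{\{x^{*}+\tilde u\ge 0\}}$, $\mathds{1}_{\{x^{*}+\tilde u<0\}}$, integrates $\tilde u$ inside each piece, and so works with the kernels $(1\wedge g+1)_{+}$ and $(1\wedge(-g)+1)_{+}$ separately, only combining them at the very end.

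The genuine gap is your last paragraph, where you identify the reflected integrals with $\overline F_{2}$ and $F_{2}$. Those functionals carry the kernel $(1\wedge g(z)+1)_{+}=\tau(g(z))+1$, not $\tau(g(z))$, and the extra $+1$ does not ``drop out against $\E[\tilde v-p_{0}]=0$'' once you have folded everything onto $z>0$: that cancellation is valid only for integrals over all of $\re$, and after the reflection $w\mapsto -w$ the constant part of the kernel enters the reflected piece with the opposite sign, so the $\pm 1$ contributions cancel between the direct and reflected pieces rather than adding up to the $+1$ sitting inside $\overline F_{2}$ and $F_{2}$. Concretely, your two inner pieces sum to $2\int_{0}^{\beta(\lam,\theta)}\frac{z^{2}}{2(\lam+\theta)}\phi(z)\,dz=\frac{F_{1}(\lam,\theta)}{\lam+\theta}$, which is $2\overline F_{2}(\lam,\theta;[z],[\frac{z}{2(\lam+\theta)}])-2\int_{0}^{\beta(\lam,\theta)}z\phi(z)\,dz$, and each outer piece equals $\int_{\beta(\lam,\theta)}^{\infty}z\phi(z)\,dz$, i.e.\ one half of the saturated value $F_{2}=2\int_{\beta(\lam,\theta)}^{\infty}z\phi(z)\,dz$; the saturation remark does not repair this, it is exactly the source of the factor-$2$ mismatch. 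So what your argument actually establishes is $\kappa_{x^{*}}=\frac{F_{1}(\lam,\theta)}{\lam+\theta}+2\int_{\beta(\lam,\theta)}^{\infty}z\phi(z)\,dz$, which differs from the displayed right-hand side by the additive constant $2\int_{0}^{\infty}z\phi(z)\,dz=\sqrt{2/\pi}$. Hence the final identification step fails as written, and since your exact evaluation and the stated display cannot both hold under the literal definitions of $F_{2},\overline F_{2}$, this is a substantive discrepancy to be resolved (either by carrying the unreduced kernels $(1\wedge(\pm g)+1)_{+}$ through the $I_{1}-I_{2}$ subtraction, as the paper does, or by reconciling your closed form with the stated formula), not a bookkeeping detail to be absorbed.
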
 
\begin{proof} 
Note that
\bn
\kappa_{x^{*}} &=& \E\big[\sign(x^*(\tilde v) +\tilde u)(\tilde v -p_0) \big] \\ 
&=&\E\big[(\tilde v -p_0) \mathds{1}_{\{x^*(\tilde v) +\tilde u \geq 0\}}\big] - \E\big[ (\tilde v -p_0) \mathds{1}_{\{x^*(\tilde v) +\tilde u <0\}}\big] \\ 
&=&:I_1-I_2. 
\en
We denote $(x)_{+} = \max\{0,x\}$. Using Proposition \ref{prop-unif} and the symmetry of $x^*(v)$ around $p_0$ we have  
\bn
I_1 &=& 2\E\big[ (\tilde v -p_0) \mathds{1}_{\{x^*(\tilde v) +\tilde u \geq 0\}}\mathds{1}_{\{0\leq \tilde v-p_0 \leq \lam+\theta +\sqrt{\lam+\theta}\}} \big] \\
&&+2\E\big[ (\tilde v -p_0) \mathds{1}_{\{x^*(\tilde v) +\tilde u \geq 0\}}\mathds{1}_{\{\tilde v-p_0 >  \lam+\theta +\sqrt{\lam+\theta}\}} \big] \\
&=& \int_{0}^{\infty} \int_{-1}^{1}z \mathds{1}_{\{u \geq -\frac{z}{2(\lam+\theta)}\}}\mathds{1}_{\{z \leq \lam+\theta +\sqrt{\lam+\theta}\}} \phi(z) du dz \\ 
&&+ \int_{0}^{\infty}\int_{-1}^{1} z \mathds{1}_{\{u \geq -\frac{z-\theta}{2\lam}\}}\mathds{1}_{\{z > \lam+\theta +\sqrt{\lam+\theta}\}} \phi(z) du dz \\
&=&\int_{0}^{\infty} z\Big(1-(-1) \vee \Big(-\frac{z}{2(\lam+\theta)}\Big)\Big)_{+} \mathds{1}_{\{z \leq \lam+\theta +\sqrt{\lam+\theta}\}} \phi(z)  dz \\ 
&&+\int_{0}^{\infty} z \Big(1-(-1)\vee \Big(- \frac{z-\theta}{2\lam}\Big)\Big)_{+} \mathds{1}_{\{z > \lam+\theta +\sqrt{\lam+\theta}\}} \phi(z)  dz \\
&=&\int_{0}^{\infty} z\Big(1+ 1 \wedge \Big(\frac{z}{2(\lam+\theta)}\Big) \Big)_{+} \mathds{1}_{\{z \leq \lam+\theta +\sqrt{\lam+\theta}\}} \phi(z)  dz \\ 
&&+\int_{0}^{\infty} z \Big(1+1\wedge \Big( \frac{z-\theta}{2\lam}\Big)\Big)_{+} \mathds{1}_{\{z > \lam+\theta +\sqrt{\lam+\theta}\}} \phi(z)  dz,
\en
where we have used the identity $(-x)\vee (-y) = -(x\wedge y)$ in the last iquality.  

On the other hand, 
\bn
I_2 &=& 2\E\big[(\tilde v -p_0) \mathds{1}_{\{x^*(\tilde v) +\tilde u < 0\}}\mathds{1}_{\{0\leq \tilde v-p_0 \leq \lam+\theta +\sqrt{\lam+\theta}\}} \big] \\
&&+2\E\big[(\tilde v -p_0) \mathds{1}_{\{x^*(\tilde v) +\tilde u < 0\}}\mathds{1}_{\{\tilde v-p_0 >  \lam+\theta +\sqrt{\lam+\theta}\}} \big] \\
&=& \int_{0}^{\infty} \int_{-1}^{1}z \mathds{1}_{\{u \leq -\frac{z}{2(\lam+\theta)}\}}\mathds{1}_{\{z \leq \lam+\theta +\sqrt{\lam+\theta}\}} \phi(z) du dz \\ 
&&+ \int_{0}^{\infty}\int_{-1}^{1} z \mathds{1}_{\{u \leq -\frac{z-\theta}{2\lam}\}}\mathds{1}_{\{z > \lam+\theta +\sqrt{\lam+\theta}\}} \phi(z) du dz \\
&=& \int_{0}^{\infty} z\Big(1 \wedge \Big(-\frac{z}{2(\lam+\theta)}\Big) +1\Big)_{+}  \mathds{1}_{\{z \leq \lam+\theta +\sqrt{\lam+\theta}\}} \phi(z) dz \\ 
&&+ \int_{0}^{\infty} z \Big(1 \wedge \Big(-\frac{z-\theta}{2\lam}\Big) +1\Big)_{+} \mathds{1}_{\{z > \lam+\theta +\sqrt{\lam+\theta}\}} \phi(z)  dz. 
\en
By a change of variable it follows that 
\bn
&&\kappa_{x^{*}} \\
&& = 2\int_{0}^{\infty} z\Big(1+ 1 \wedge \Big(\frac{z}{2(\lam+\theta)}\Big) \Big)_{+} \mathds{1}_{\{z \leq \lam+\theta +\sqrt{\lam+\theta}\}} \phi(z)  dz\\
&&+\int_{0}^{\infty} z \Big[ \Big(1 \wedge \Big(\frac{z-\theta}{2\lam}\Big) +1\Big)_{+}+\Big(1 \wedge \Big(\frac{z+\theta}{2\lam}\Big) +1\Big)_{+}\Big] \mathds{1}_{\{z > \lam+\theta +\sqrt{\lam+\theta}\}} \phi(z)  dz. 
\en
\end{proof} 

\begin{lemma} 
Under the assumptions of Proposition \ref{prop-unif} we have 
\bn
\ell_{1,x^*} &=&   2\overline F_{2}\Big(\lam,\theta;\Big[\frac{z }{2\lambda+\theta}\Big],\Big[\frac{z }{2\lambda+\theta}\Big] \Big) + \overline F_{3}\Big(\lam,\theta; \Big[\frac{z}{2(\lam+\theta)}\Big]\Big)\\
 &&+ F_{2}\Big(\lam,\theta;\Big[\frac{z-\theta }{2\lambda}\Big],\Big[\frac{z-\theta }{2\lambda}\Big] \Big ) + F_{2}\Big(\lam,\theta;\Big[\frac{z+\theta }{2\lambda}\Big],\Big[\frac{z+\theta }{2\lambda}\Big] \Big)   \\
 &&+\frac{1}{2} \left(F_{3}\Big(\lam,\theta; \Big[\frac{z-\theta}{2\lam}\Big]\Big)+F_{3}\Big(\lam,\theta;\Big[- \frac{z-\theta}{2\lam}\Big]\Big)  \right). 
\en
\end{lemma}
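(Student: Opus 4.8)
The plan is to follow the scheme of the proof of Lemma \ref{lem-kappa}, now carrying the extra linear weight produced by the noise variable. First I would use that $x^{*}$ is odd around $p_{0}$, i.e.\ $x^{*}(p_{0}-t)=-x^{*}(p_{0}+t)$ (this follows from Proposition \ref{lemma-min-trader} and the symmetry of $R_{v}$ in $(v-p_{0},x)$), together with the symmetry of $\tilde v-p_{0}$, to deduce $\E[x^{*}(\tilde v)]=0$; combined with $\E[\tilde u]=0$ and the identity $|y|=2y\mathds{1}_{\{y\ge 0\}}-y$, this gives
$$\ell_{1,x^{*}}=\E\big[|x^{*}(\tilde v)+\tilde u|\big]=2\,\E\big[(x^{*}(\tilde v)+\tilde u)\,\mathds{1}_{\{x^{*}(\tilde v)+\tilde u\ge 0\}}\big].$$
I would then condition on $\tilde v=p_{0}+z$, so that the outer integral is against the standard Gaussian density $\phi(z)$ and, with $a:=x^{*}(p_{0}+z)$ and $\tilde u\sim\mathrm{Unif}[-1,1]$, the inner integral is $\tfrac12\int_{-1}^{1}(a+u)\,\mathds{1}_{\{u\ge -a\}}\,du$.

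Next I would evaluate this one-dimensional integral in closed form. Exactly as in Lemma \ref{lem-kappa}, $\int_{-1}^{1}\mathds{1}_{\{u\ge -a\}}\,du=(1+1\wedge a)_{+}$ once one uses $(-1)\vee(-a)=-(1\wedge a)$; the new ingredient is $\int_{-1}^{1}u\,\mathds{1}_{\{u\ge -a\}}\,du=\tfrac12\big(1-(1\wedge a)^{2}\big)\mathds{1}_{\{a>-1\}}$. Adding the two, the inner integral equals
$$\tfrac12\,a\,(1+1\wedge a)_{+}+\tfrac14\big(1-(1\wedge a)^{2}\big)\mathds{1}_{\{a>-1\}},$$
that is, $\tfrac12$ times the integrand defining $F_{2}$ (with both arguments equal to $a$) plus $\tfrac14$ times the integrand defining $F_{3}$ (with argument $a$). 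These fractional weights are precisely what will recombine, after the pieces below are summed, into the coefficients appearing in the statement.

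It remains to substitute the explicit $x^{*}$ from Proposition \ref{prop-unif}: $x^{*}(p_{0}+z)=\tfrac{z}{2(\lambda+\theta)}$ for $|z|\le\beta(\lambda,\theta)$ (valid for both signs of $z$ by oddness), and $x^{*}(p_{0}+z)=\tfrac{z-\theta\,\mathrm{sign}(z)}{2\lambda}$ for $|z|>\beta(\lambda,\theta)$. I would split the $z$-integral over these two regions and, using that $\phi$ is even, fold the half $\{z<0\}$ onto $\{z>0\}$: on the inner region the $F_{3}$-integrand is even in its argument on $(-1,1)$, so the two halves merge into a single $\overline F_{3}$ term, while on the outer region folding turns the argument $\tfrac{z-\theta}{2\lambda}$ of the $z>0$ half into $-\tfrac{z-\theta}{2\lambda}$ (equivalently $\tfrac{z+\theta}{2\lambda}$) for the $z<0$ half, which is the source of the two $F_{2}$ terms and of the symmetrized combination $\tfrac12\big(F_{3}(\lambda,\theta;[\tfrac{z-\theta}{2\lambda}])+F_{3}(\lambda,\theta;[-\tfrac{z-\theta}{2\lambda}])\big)$. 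Recognizing each resulting integral as one of $\overline F_{2},\overline F_{3},F_{2},F_{3}$ and collecting yields the claimed formula. The only genuine difficulty is bookkeeping: tracking the nested case split ($u\gtrless -a$, inner region versus outer region, $z\gtrless 0$, and $|a|\gtrless 1$ through the $1\wedge(\cdot)$ caps) and checking that the $\tfrac12,\tfrac14$ factors recombine into the stated coefficients; there is no analytic obstacle beyond Proposition \ref{prop-unif} and elementary integration.
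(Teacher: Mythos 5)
Your setup is sound and, up to the last step, it is essentially the paper's own argument in streamlined form: the paper writes $\ell_{1,x^*}=I_1-I_2$ and folds the negative half of $\tilde v-p_0$ onto the positive half, while you use $\ell_{1,x^*}=2\,\mathbb{E}\big[(x^*(\tilde v)+\tilde u)\mathds{1}_{\{x^*(\tilde v)+\tilde u\ge 0\}}\big]$ (valid since $\mathbb{E}[x^*(\tilde v)]=\mathbb{E}[\tilde u]=0$), and your one-dimensional integrals $\int_{-1}^{1}\mathds{1}_{\{u\ge-a\}}du=(1+1\wedge a)_+$ and $\int_{-1}^{1}u\,\mathds{1}_{\{u\ge-a\}}du=\tfrac12\big(1-(1\wedge a)^2\big)\mathds{1}_{\{a>-1\}}$ are correct. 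The genuine gap is the final folding/recombination, which you assert rather than verify, and which in fact fails. After folding $z\mapsto-z$, the integrand at a given $z>0$ is the sum of your expression evaluated at the arguments $a(z)$ and $-a(z)$, $a(z)=x^*(p_0+z)$. On the inner region the two $\overline F_2$-type pieces give $a(1+1\wedge a)_+ +(-a)(1+1\wedge(-a))_+=2a^2$ for $0<a\le 1$, which is \emph{not} $2a(1+1\wedge a)_+$; so they do not merge into the coefficient $2\,\overline F_2(\lambda,\theta;[\tfrac{z}{2(\lambda+\theta)}],[\tfrac{z}{2(\lambda+\theta)}])$ of the statement (only the $\overline F_3$ halves merge, since that integrand is even in its argument). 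On the outer region your parenthetical claim that the folded argument $-\tfrac{z-\theta}{2\lambda}$ is ``equivalently $\tfrac{z+\theta}{2\lambda}$'' is exactly where the argument breaks: as a function of the new positive variable the folded argument is $-\tfrac{z-\theta}{2\lambda}$, and $F_{2}(\lambda,\theta;[-\tfrac{z-\theta}{2\lambda}],[-\tfrac{z-\theta}{2\lambda}])$ vanishes (because $\tfrac{z-\theta}{2\lambda}\ge1$ on $\{z>\beta(\lambda,\theta)\}$), whereas the term $F_{2}(\lambda,\theta;[\tfrac{z+\theta}{2\lambda}],[\tfrac{z+\theta}{2\lambda}])$ appearing in the statement equals $\int_{\beta}^{\infty}\tfrac{z+\theta}{\lambda}\phi(z)\,dz>0$.

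Carried out honestly, your reduction gives folded integrands $1+a^2$ on the inner region and $\tfrac{z-\theta}{\lambda}$ on the outer region, while the right-hand side of the statement has $(1+a)^2$ and $\tfrac{2z}{\lambda}$ respectively; e.g.\ in the limit $\theta\downarrow0$, $\lambda=1$ your computation yields $\mathbb{E}|\tilde v/2+\tilde u|\approx 0.62$ against $\approx 1.02$ for the stated formula. So the ``bookkeeping'' you defer to cannot recombine into the stated coefficients: you would have to repeat, unexamined, the same change-of-variable slip that the paper's own proof makes when it rewrites the $I_2$ integrals (the paper's displays \emph{before} that step agree with your computation), or else your correct computation terminates in a formula different from the one claimed. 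As a proof of the lemma as written, the proposal therefore has a genuine gap precisely at the step it waves through.
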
 
\begin{proof} 
\bn
\ell_1(x^{*}) &=& \E\big[|x^*(\tilde v) +\tilde u|\big] \\ 
&=&\E\big[(x^*(\tilde v) +\tilde u) \mathds{1}_{\{x^*(\tilde v) +\tilde u \geq 0\}}\big] - \E\big[ (x^*(\tilde v) +\tilde u)\mathds{1}_{\{x^*(\tilde v) +\tilde u <0\}}\big] \\ 
&=&:I_1-I_2. 
\en
Using Proposition \ref{prop-unif} and the symmetry of $x^*(v)$ around $p_0$ we have  
\bn
I_1 &=& 2\E\big[ (x^*(\tilde v) +\tilde u)  \mathds{1}_{\{x^*(\tilde v) +\tilde u \geq 0\}}\mathds{1}_{\{0\leq \tilde v-p_0 \leq \lam+\theta +\sqrt{\lam+\theta}\}} \big] \\
&&+2\E\big[ (x^*(\tilde v) +\tilde u) \mathds{1}_{\{x^*(\tilde v) +\tilde u \geq 0\}}\mathds{1}_{\{\tilde v-p_0 >  \lam+\theta +\sqrt{\lam+\theta}\}} \big] \\
&=&\int_{0}^{\infty} \int_{-1}^{1}\Big( \frac{z}{2(\lam+\theta)} +u\Big)\mathds{1}_{\{u \geq -\frac{z}{2(\lam+\theta)}\}}\mathds{1}_{\{z \leq \lam+\theta +\sqrt{\lam+\theta}\}} \phi(z) du dz \\ 
&&+\int_{0}^{\infty}\int_{-1}^{1} \Big( \frac{z-\theta}{2\lam}+ u\Big)\mathds{1}_{\{u \geq -\frac{z-\theta}{2\lam}\}}\mathds{1}_{\{z > \lam+\theta +\sqrt{\lam+\theta}\}} \phi(z) du dz \\
&=& \int_{0}^{\infty} \frac{z}{2(\lambda+\theta)}\Big(1-(-1) \vee \Big(-\frac{z}{2(\lam+\theta)}\Big)\Big)_+\mathds{1}_{\{z \leq \lam+\theta +\sqrt{\lam+\theta}\}} \phi(z)  dz \\
&&+\frac{1}{2} \int_{0}^{\infty}\Big(1-\Big((-1) \vee \Big(-\frac{z}{2(\lam+\theta)}\Big)\Big)^2\Big)\mathds{1}_{\{-\frac{z}{2(\lam+\theta)}<1\}}
 \mathds{1}_{\{z \leq \lam+\theta +\sqrt{\lam+\theta}\}} \phi(z) du dz \\ 
 &&+ \int_{0}^{\infty} \frac{z-\theta }{2\lambda}\Big(1-(-1) \vee \Big(-\frac{z-\theta}{2\lam}\Big)\Big)_+\mathds{1}_{\{z > \lam+\theta +\sqrt{\lam+\theta}\}} \phi(z)  dz \\ 
&&+ \frac{1}{2}\int_{0}^{\infty}\Big(1-\Big((-1) \vee \Big(-\frac{z-\theta}{2\lam}\Big)\Big)^2\Big)\mathds{1}_{\{-\frac{z-\theta}{2\lam}<1\}}
 \mathds{1}_{\{z > \lam+\theta +\sqrt{\lam+\theta}\}} \phi(z) du dz.
\en
Since $(-x)\vee (-y) = -(x\wedge y)$ it follows that 
\bn
I_1 &=& \int_{0}^{\infty} \frac{z}{2(\lambda+\theta)}\Big(1 \wedge \Big(\frac{z}{2(\lam+\theta)}\Big)+1\Big)_+\mathds{1}_{\{z \leq \lam+\theta +\sqrt{\lam+\theta}\}} \phi(z)  dz \\ 
&&+ \frac{1}{2}\int_{-\infty}^{\infty}\Big(1-\Big(1 \wedge \Big(\frac{z}{2(\lam+\theta)}\Big)\Big)^2\Big)\mathds{1}_{\{-\frac{z}{2(\lam+\theta)}<1\}}
 \mathds{1}_{\{z \leq \lam+\theta +\sqrt{\lam+\theta}\}} \phi(z)  dz \\ 
 &&+\int_{0}^{\infty} \frac{z-\theta }{2\lambda}\Big(1+1 \wedge \Big(\frac{z-\theta}{2\lam}\Big)\Big)_+\mathds{1}_{\{z > \lam+\theta +\sqrt{\lam+\theta}\}} \phi(z)  dz \\ 
&&+ \frac{1}{2}\int_{0}^{\infty}\Big(1-\Big(1 \wedge \Big(\frac{z-\theta}{2\lam}\Big)\Big)^2\Big)\mathds{1}_{\{-\frac{z-\theta}{2\lam}<1\}}
 \mathds{1}_{\{z > \lam+\theta +\sqrt{\lam+\theta}\}} \phi(z)  dz. 
\en
On the other hand, 
\bn
I_2 &=& 2E\big( (x^*(\tilde v) +\tilde u)  \mathds{1}_{\{x^*(\tilde v) +\tilde u < 0\}}\mathds{1}_{\{0\leq \tilde v-p_0 \leq \lam+\theta +\sqrt{\lam+\theta}\}} \big) \\
&&+2E\big( (x^*(\tilde v) +\tilde u) \mathds{1}_{\{x^*(\tilde v) +\tilde u < 0\}}\mathds{1}_{\{\tilde v-p_0 >  \lam+\theta +\sqrt{\lam+\theta}\}} \big) \\
&=& \int_{0}^{\infty} \int_{-1}^{1}\Big( \frac{z}{2(\lam+\theta)} +u\Big)\mathds{1}_{\{u < -\frac{z}{2(\lam+\theta)}\}}\mathds{1}_{\{z \leq \lam+\theta +\sqrt{\lam+\theta}\}} \phi(z) du dz \\ 
&&+ \int_{0}^{\infty}\int_{-1}^{1} \Big( \frac{z-\theta}{2\lam}+ u\Big)\mathds{1}_{\{u < -\frac{z-\theta}{2\lam}\}}\mathds{1}_{\{z > \lam+\theta +\sqrt{\lam+\theta}\}} \phi(z) du dz \\
&=& \int_{0}^{\infty} \frac{z}{2(\lambda+\theta)}\Big(1 \wedge \Big(-\frac{z}{2(\lam+\theta)}\Big)+1\Big)_+\mathds{1}_{\{z \leq \lam+\theta +\sqrt{\lam+\theta}\}} \phi(z)  dz \\ 
&&+ \frac{1}{2}\int_{0}^{\infty}\Big(1 \wedge \Big(-\frac{z}{2(\lam+\theta)}\Big)\Big)^2-1\Big)\mathds{1}_{\{-\frac{z}{2(\lam+\theta)}>-1\}}
 \mathds{1}_{\{z \leq \lam+\theta +\sqrt{\lam+\theta}\}} \phi(z) dz\\ 
 &&+\int_{0}^{\infty} \frac{z-\theta }{2\lambda}\Big(1 \wedge \Big(-\frac{z-\theta}{2\lam}\Big)+1\Big)_+\mathds{1}_{\{z > \lam+\theta +\sqrt{\lam+\theta}\}} \phi(z)  dz \\ 
&&+ \frac{1}{2}\int_{0}^{\infty}\Big(\Big(1 \wedge \Big(-\frac{z-\theta}{2\lam}\Big)\Big)^2-1\Big)\mathds{1}_{\{-\frac{z-\theta}{2\lam}>-1\}}
 \mathds{1}_{\{z > \lam+\theta +\sqrt{\lam+\theta}\}} \phi(z)  dz. 
\en
By a change of variable we get, 
\bn
I_2 &=&- \int_{0}^{\infty} \frac{z}{2(\lambda+\theta)}\Big(1 \wedge \Big(\frac{z}{2(\lam+\theta)}\Big)+1\Big)_+\mathds{1}_{\{z \leq \lam+\theta +\sqrt{\lam+\theta}\}} \phi(z)  dz \\ 
&&+\frac{1}{2}\int_{0}^{\infty}\Big(\Big(1 \wedge \Big(\frac{z}{2(\lam+\theta)}\Big)\Big)^2-1\Big)\mathds{1}_{\{\frac{z}{2(\lam+\theta)}>-1\}}
 \mathds{1}_{\{z \leq \lam+\theta +\sqrt{\lam+\theta}\}} \phi(z)  dz \\ 
 &&+ \int_{0}^{\infty} \frac{z-\theta }{2\lambda}\Big(1 \wedge \Big(-\frac{z-\theta}{2\lam}\Big)+1\Big)_+\mathds{1}_{\{z > \lam+\theta +\sqrt{\lam+\theta}\}} \phi(z)  dz \\ 
&&+ \frac{1}{2}\int_{0}^{\infty}\Big(\Big(1 \wedge \Big(-\frac{z-\theta}{2\lam}\Big)\Big)^2-1\Big)\mathds{1}_{\{-\frac{z-\theta}{2\lam}>-1\}}
 \mathds{1}_{\{z > \lam+\theta +\sqrt{\lam+\theta}\}} \phi(z)  dz. 
\en
It follows that 
\bn
&&\ell_{1,x^{*}} \\  
&&= I_1 -I_2  \\
&&=2\int_{0}^{\infty} \frac{z}{2(\lambda+\theta)}\Big(1 \wedge \Big(\frac{z}{2(\lam+\theta)}\Big)+1\Big)_+\mathds{1}_{\{z \leq \lam+\theta +\sqrt{\lam+\theta}\}} \phi(z)  dz \\
&&+ \int_{0}^{\infty}\Big(1-\Big(1 \wedge \Big(\frac{z}{2(\lam+\theta)}\Big)\Big)^2\Big)\mathds{1}_{\{\frac{z}{2(\lam+\theta)}>-1\}}
 \mathds{1}_{\{z \leq \lam+\theta +\sqrt{\lam+\theta}\}} \phi(z)  dz\\
&&+ \int_{0}^{\infty} \Big[\frac{z-\theta }{2\lambda}\Big(1+1 \wedge \Big(\frac{z-\theta}{2\lam}\Big)\Big)_+ + \frac{z+\theta }{2\lambda}\Big(1 \wedge \Big(\frac{z+\theta}{2\lam}\Big)+1\Big)_+\Big]\\
 && \qquad  \times \mathds{1}_{\{z > \lam+\theta +\sqrt{\lam+\theta}\}} \phi(z)  dz \\ 
&&+ \frac{1}{2}\int_{0}^{\infty}\Big(1-\Big(1 \wedge \Big(\frac{z-\theta}{2\lam}\Big)\Big)^2\Big)\mathds{1}_{\{\frac{z-\theta}{2\lam}>-1\}}
 \mathds{1}_{\{z > \lam+\theta +\sqrt{\lam+\theta}\}} \phi(z)  dz \\
 &&+ \frac{1}{2}\int_{0}^{\infty}\Big(1-\Big(1 \wedge \Big(-\frac{z-\theta}{2\lam}\Big)\Big)^2\Big)\mathds{1}_{\{-\frac{z-\theta}{2\lam}>-1\}}
 \mathds{1}_{\{z > \lam+\theta +\sqrt{\lam+\theta}\}} \phi(z)  dz \\
 &=& 2 \overline F_{2}\Big(\lam,\theta;\Big[\frac{z }{2\lambda+\theta}\Big],\Big[\frac{z }{2\lambda+\theta}\Big] \Big) +\frac{1}{2}\overline F_{3}\Big(\lam,\theta; \Big[\frac{z}{2(\lam+\theta)}\Big]\Big)\\
 &&+ \Big[F_{2}\Big(\lam,\theta;\Big[\frac{z-\theta }{2\lambda}\Big],\Big[\frac{z-\theta }{2\lambda}\Big] \Big ) + F_{2}\Big(\lam,\theta;\Big[\frac{z+\theta }{2\lambda}\Big],\Big[\frac{z+\theta }{2\lambda}\Big] \Big) \Big] \\
 &&+\frac{1}{2} \Big[F_{3}\Big(\lam,\theta; \Big[\frac{z-\theta}{2\lam}\Big]\Big)+F_{3}\Big(\lam,\theta;\Big[- \frac{z-\theta}{2\lam}\Big]\Big)  \Big]. 
\en
\end{proof}

\bigskip
\bibliographystyle{plain}

\begin{thebibliography}{10}

\bibitem{Back:1992aa}
K.~Back.
\newblock Insider trading in continuous time.
\newblock {\em The Review of Financial Studies}, 5(3):387--409, 1/26/2021 1992.

\bibitem{Boulatov:2015aa}
A.~Boulatov and D.~Bernhardt.
\newblock Robustness of equilibrium in the kyle model of informed speculation.
\newblock {\em Annals of Finance}, 11(3):297--318, 2015.

\bibitem{cao2020connecting}
H.~Cao, X.~Guo, and M.~Lauri{\`e}re.
\newblock Connecting {GAN}s and {MFG}s.
\newblock {\em arXiv:2002.04112}, 2020.

\bibitem{cetin-notes}
U.~{\c C}et?n.
\newblock Mathematics of market microstructure under asymmetric information.
\newblock {\em arXiv:1809.03885}, 2018.

\bibitem{choromanska2015loss}
A.~Choromanska, M.~Henaff, M.~Mathieu, G.~Ben~Arous, and Y.~LeCun.
\newblock The loss surfaces of multilayer networks.
\newblock In {\em Artificial intelligence and statistics}, pages 192--204.
  PMLR, 2015.

\bibitem{Collin16}
P.~Collin-Dufresne and V.~Fos.
\newblock Insider trading, stochastic liquidity, and equilibrium prices.
\newblock {\em Econometrica}, 84(4):1441--1475, 2016.

\bibitem{rosen2020}
O.~El~Euch, T.~Mastrolia, M.~Rosenbaum, and N.~Touzi.
\newblock Optimal make--take fees for market making regulation.
\newblock {\em Mathematical Finance}, 31(1):109--148, 2021.

\bibitem{Teichmann20}
P.~Friedrich and J.~Teichmann.
\newblock Deep investing in kyle's single period model.
\newblock {\em arXiv:2006.13889}, 2020.

\bibitem{Garcia-del-Molino:2020aa}
L.~C. Garcia~del Molino, I.~Mastromatteo, M.~Benzaquen, and J.P. Bouchaud.
\newblock The multivariate kyle model: More is different.
\newblock {\em SIAM Journal on Financial Mathematics}, 11(2):327--357,
  2021/01/26 2020.

\bibitem{geeraert2017mini}
S.~Geeraert, C.A. Lehalle, B.~A. Pearlmutter, O.~Pironneau, and A.~Reghai.
\newblock Mini-symposium on automatic differentiation and its applications in
  the financial industry.
\newblock {\em ESAIM: Proceedings and Surveys}, 59:56--75, 2017.

\bibitem{Goodfellow}
I.~Goodfellow, Y.~Bengio, and A.~Courville.
\newblock {\em Deep Learning}.
\newblock The MIT Press, Cambridge, MA, USA, 2016.

\bibitem{kyle85}
A.~S. Kyle.
\newblock Continuous auctions and insider trading.
\newblock {\em Econometrica}, 53(6):1315--1335, 1985.

\bibitem{madhav96}
A.~Madhavan, M.~Richardson, and M.~Roomans.
\newblock Why do security prices change? a transaction-level analysis of nyse
  stocks.
\newblock {\em The Review of Financial Studies}, 10(4):1035--1064, 10 1997.

\bibitem{moon2018linear}
J.~Moon and T.~Ba{\c{s}}ar.
\newblock Linear quadratic mean field stackelberg differential games.
\newblock {\em Automatica}, 97:200--213, 2018.

\bibitem{Nishide2006InsiderTW}
K.~Nishide.
\newblock Insider trading with imperfectly competitive market makers.
\newblock {\em url: http://hdl.handle.net/2433/26706}, 2006.

\bibitem{Subrahmanyam:1991aa}
A.~Subrahmanyam.
\newblock Risk aversion, market liquidity, and price efficiency.
\newblock 4(3):417--441, 2021/01/26/ 1991.

\bibitem{vapnik}
V.~Vapnik.
\newblock {\em The Nature of Statistical Learning Theory}.
\newblock Information Science and Statistics. Springer-Verlag New York, 2
  edition, 2020.

\bibitem{vayatis1999distribution}
N.~Vayatis and R.~Azencott.
\newblock Distribution-dependent vapnik-chervonenkis bounds.
\newblock In {\em European Conference on Computational Learning Theory}, pages
  230--240. Springer, 1999.

\end{thebibliography}

\end{document}